\providecommand{\tabularnewline}{\\}
\providecommand{\algorithmname}{Algorithm}
\providecommand{\tabularnewline}{\\}
\providecommand{\procedurename}{Procedure}
\newcommand{\manuallabel}[2]{\def\@currentlabel{#2}\label{#1}}
\theoremstyle{plain}
\theoremstyle{plain}
\theoremstyle{plain}
\newtheorem{lem}{\protect\lemmaname}
\theoremstyle{plain}
\newtheorem{thm}{\protect\theoremname}
\theoremstyle{plain}
\newtheorem{cor}{\protect\corollaryname}  
\theoremstyle{definition}
\newtheorem{defn}{\protect\definitionname}
\theoremstyle{definition}
\theoremstyle{definition}
\newtheorem{rem}{\protect\remarkname}
\theoremstyle{definition}
\newtheorem{example}{\protect\examplename}
\providecommand{\claimname}{Claim}
\providecommand{\lemmaname}{Lemma}
\providecommand{\propositionname}{Proposition}
\providecommand{\theoremname}{Theorem}
\providecommand{\corollaryname}{Corollary} 
\providecommand{\definitionname}{Definition}
\providecommand{\assumptionname}{Assumption}
\providecommand{\remarkname}{Remark}
\providecommand{\examplename}{Example}
\newcommand{\overbar}{\overline}
\DeclareMathOperator*{\argmin}{arg\,min}
\newcommand{\openone}{\mathds{1}}
\newcommand{\Mc}{\mathcal{M}}
\newcommand{\Ebar}{\overline{E}}
\newcommand{\rhoKL}{\rho_{\mathrm{KL}}}
\newcommand{\rhoKLn}{\rho^n_{\mathrm{KL}}}
\newcommand{\fV}{V}
\newcommand{\Nmax}{N_{\mathrm{\max}}}
\newcommand{\Nmin}{N_{\mathrm{\min}}}
\newcommand{\thetahat}{\hat{\theta}}
\newcommand{\Vol}{\mathrm{Vol}}
\newcommand{\BB}{\mathbb{B}}
\newcommand{\nnode}{n_{\mathrm{node}}}
\newcommand{\epsp}{\epsilon_{\mathrm{p}}}
\newcommand{\epsc}{\epsilon_{\mathrm{c}}}
\newcommand{\epscn}{\epsilon_{\mathrm{c},n}}
\newcommand{\NKL}{N_{\mathrm{KL}}}
\newcommand{\NKLn}{N_{\mathrm{KL},n}}
\newcommand{\cbar}{\overline{c}}
\newcommand{\cunder}{\underline{c}}
\newcommand{\Fscv}{\mathcal{F}_{\mathrm{scv}}}
\newcommand{\Gforest}{\mathcal{G}_{\mathrm{forest}}}
\newcommand{\GER}{G_{\mathrm{ER}}}
\newcommand{\Ptn}{P_{\theta}^n}
\newcommand{\Ptxn}{P_{\theta,\Xv}^n}
\newcommand{\Ptvn}{P_{\theta_v}^n}
\newcommand{\Ptvvn}{P_{\theta_{v'}}^n}
\newcommand{\Ndist}{\mathcal{N}}
\newcommand{\missing}{\ast}
\newcommand{\TV}{\mathrm{TV}}
\newcommand{\dTV}{d_{\mathrm{TV}}}
\newcommand{\Gbar}{\overbar{G}}
\newcommand{\Bernoulli}{\mathrm{Bernoulli}}
\newcommand{\pe}{P_{\mathrm{e}}}
\newcommand{\Xv}{\mathbf{X}}
\newcommand{\yv}{\mathbf{y}}
\newcommand{\Yv}{\mathbf{Y}}
\newcommand{\Zv}{\mathbf{Z}}
\newcommand{\Ac}{\mathcal{A}}
\newcommand{\Ec}{\mathcal{E}}
\newcommand{\Fc}{\mathcal{F}}
\newcommand{\Gc}{\mathcal{G}}
\newcommand{\Tc}{\mathcal{T}}
\newcommand{\Xc}{\mathcal{X}}
\newcommand{\EE}{\mathbb{E}}
\newcommand{\PP}{\mathbb{P}}
\newcommand{\RR}{\mathbb{R}}
\newcommand{\cov}{\mathrm{Cov}}
\newcommand{\Vc}{\mathcal{V}}
\newcommand{\Lc}{\mathcal{L}}
\newcommand{\Iv}{\mathbf{I}}
\newcommand{\bzero}{\mathbf{0}}
\begin{document} 

\title{An Introductory Guide to Fano's Inequality  \\ with Applications in Statistical Estimation}
\author{Jonathan Scarlett\textsuperscript{1} and Volkan Cevher\textsuperscript{2} \\ [3mm]
{\footnotesize\textsuperscript{1} Department of Computer Science \& Department of Mathematics, National University of Singapore, Singapore} \\
{\footnotesize\textsuperscript{2} Laboratory for Information and Inference Systems (LIONS), EPFL, Switzerland} \\ [2mm]
\footnotesize{ \texttt{scarlett@comp.nus.edu.sg}, ~ \texttt{volkan.cevher@epfl.ch} }}
\date{}
\maketitle

\bigskip
\bigskip

\begin{abstract}
    Information theory plays an indispensable role in the development of algorithm-independent impossibility results, both for communication problems and for seemingly distinct areas such as statistics and machine learning.  While numerous information-theoretic tools have been proposed for this purpose, the oldest one remains arguably the most versatile and widespread: Fano's inequality.  In this chapter, we provide a survey of Fano's inequality and its variants in the context of statistical estimation, adopting a versatile framework that covers a wide range of specific problems.  We present a variety of key tools and techniques used for establishing impossibility results via this approach, and provide representative examples covering group testing, graphical model selection, sparse linear regression, density estimation, and convex optimization.
\end{abstract}

%\long\def\symbolfootnote[#1]#2{\begingroup\def\thefootnote{\fnsymbol{footnote}}\footnote[#1]{#2}\endgroup}
%
%\symbolfootnote[0]{ The authors are with the Laboratory for Information and Inference Systems (LIONS), \'Ecole Polytechnique F\'ed\'erale de Lausanne (EPFL) (e-mail: \{jonathan.scarlett,volkan.cevher\}@epfl.ch).
%
%This work was supported in part by the European Commission under Grant ERC Future Proof, SNF 200021-146750 and SNF CRSII2-147633, and EPFL Fellows Horizon2020 grant 665667.}
%\vspace*{-0.5cm}

\newpage
 \tableofcontents

\newcommand{\rev}[1]{{\textcolor{blue}{#1}}}

\newpage
\section{Introduction} \label{sec:intro}

The tremendous progress in large-scale statistical inference and learning in recent years has been spurred by both practical and theoretical advances, with strong interactions between the two:  Algorithms that come with {\em a priori} performance guarantees are clearly desirable, if not crucial, in practical applications, and practical issues are indispensable in guiding the theoretical studies.  

Complementary to performance bounds for specific algorithms, a key role is also played by {algorithm-independent impossibility results}, stating conditions under which one cannot hope to achieve a certain goal.  Such results provide definitive benchmarks for practical methods, serve as certificates for near-optimality, and help guide the practical developments towards directions where the greatest improvements are possible.

Since its introduction in 1948, the field of information theory has continually provided such benefits for the problems of storing and transmitting data, and has accordingly shaped the design of practical communication systems.  In addition, recent years have seen mounting evidence that the tools and methodology of information theory reach far beyond communication problems, and can provide similar benefits {\em within the entire data processing pipeline}. % This emerging perspective is particular relevant in problems of learning from data, which can be interpreted as a {\em transfer of information}.}

% Intuitively, problems of learning from data can be interpreted as a {transfer of information}, and this perspective naturally lends itself to the analysis techniques that were originally developed for understanding communication systems.

While many information-theoretic tools have been proposed for establishing impossibility results, the oldest one remains arguably the most versatile and widespread: Fano's inequality \cite{FanoNotes}. This fundamental inequality is not only ubiquitous in studies of communication, but has been applied extensively in statistical inference and learning problems; several examples are given in Table \ref{tbl:applications}.

When applying Fano's inequality to such problems, one typically encounters a number of distinct challenges compared to those found in communication problems. The goal of this chapter is to introduce the reader to some of the key tools and techniques, explain their interactions and connections, and provide several representative examples.

% We proceed by introducing the mathematical framework that will be adopted throughout the chapter, and then give a high-level overview of the methodology.

%At a high level, this result provides a relation between two fundamental quantities concerning a discrete random variable $V$ and its estimate $\hat{V}$:
%\begin{itemize}
%    \item The {\em error probability}, defined as $\PP[\hat{V} \ne V]$;
%    \item The {\em mutual information} $I(V;\hat{V})$, which quantifies the amount of information that $\hat{V}$ reveals about $V$.
%\end{itemize}
%More precisely, unless $\hat{V}$ reveals a sufficient amount of information about $V$, the two cannot have a high probability of coinciding; hence, any upper bound on the mutual information leads to a corresponding lower bound on the error probability.  This idea led to the first proof of the converse part of Shannon's channel coding theorem \cite{FanoNotes}, and has subsequently been applied to a broad range of data science applications.  A partial list of examples is given in Table \ref{tbl:applications}. 

% The application of Fano's inequality to data science problems typically comes with a variety of distinct challenges compared to communication problems, despite having some common aspects.  In the remainder of this section, we overview the general approach adopted, and highlight the key steps involved. 

\subsection{Overview of Techniques} \label{sec:overview}

Throughout the chapter, we consider the following statistical estimation framework, which captures a broad range of problems including the majority of those listed in Table \ref{tbl:applications}:
\begin{itemize}
    \item There exists an unknown parameter $\theta$, known to lie in some set $\Theta$ (e.g., a subset of $\RR^p$), that we would like to estimate. 
    \item In the simplest case, the estimation algorithm has access to a set of {\em samples} $\Yv = (Y_1,\dotsc,Y_n)$ drawn from some joint distribution $\Ptn(\yv)$ parametrized by $\theta$.  More generally, the samples may be drawn from some joint distribution $\Ptxn(\yv)$ parametrized by $(\theta,\Xv)$, where $\Xv = (X_1,\dotsc,X_n)$ are {\em inputs} that are either known in advance or selected by the algorithm itself.
    \item Given knowledge of $\Yv$, as well as $\Xv$ if inputs are present, the algorithm forms an estimate $\hat{\theta}$ of $\theta$, with the goal of the two being ``close'' in the sense that some {\em loss function} $\ell(\theta,\hat{\theta})$ is small. When referring to this step of the estimation algorithm, we will use the terms {\em algorithm} and {\em decoder} interchangeably.
\end{itemize}
We will initially use the following simple running example to exemplify some of the key concepts, and then turn to detailed applications in Sections \ref{sec:apps_discrete} and \ref{sec:apps_cont}.

\hspace*{-1cm}
\begin{table}
    \begin{centering}
        \begin{tabular}{|>{\centering}p{4.75cm}|>{\centering}p{1.5cm}|>{\centering}p{3.75cm}|>{\centering}p{1.5cm}|}
            \hline 
            \multicolumn{2}{|>{\centering}p{5.75cm}|}{\textbf{Sparse and low rank problems}} & \multicolumn{2}{>{\centering}p{5cm}|}{\textbf{Other estimation problems}}\tabularnewline
            \hline 
            \underline{Problem} & \underline{References} & \underline{Problem} & \underline{References}\tabularnewline
            \hline 
            Group testing
            
            Compressive sensing
            
            Sparse Fourier transform
            
            Principal component analysis
            
            Matrix completion & \cite{Mal78,Ati12} %Mal80,,Sca16b} % Group testing
            
            \cite{Wai09,Can13} %Wan10a,Can13,Ree13,Aks17} % CS
            
            \cite{Has12,Cev16} % Sparse FT
            
            \cite{Ami09,Vu12} %,Cai13a,Bir13} % Sparse PCA
            
            \cite{Neg12,Dav14} %,Ma15,Sha17} % Matrix completion
            & Regression
            
            Density estimation
            
            Kernel methods
            
            Distributed estimation
            
            Local privacy 
            &  \cite{Ibr77,Yan99} %Ras11} % Regression
            
            \cite{Bir83,Yan99} %Bir86,Yan99,Yu97} % Density estimation
            
            \cite{Ras12,Yan17} %  Kernel methods 
            
            \cite{Zha13,Xu17} % Distributed setting 
            
            \cite{Duc13a} % Local privacy 
            \tabularnewline
            \hline 
            \multicolumn{2}{|c|}{\textbf{Sequential decision problems}} & \multicolumn{2}{c|}{\textbf{Other learning problems}}\tabularnewline
            \hline 
            \underline{Problem} & \underline{References} & \underline{Problem} & \underline{References}\tabularnewline
            \hline 
            Convex optimization
            
            Active learning
            
            Multi-armed bandits
            
            Bayesian optimization
            
            Communication complexity &  \cite{Rag11,Aga12} % Convex optimization
            
            \cite{Rag11a} % Active learning 
            
            \cite{Aga17}  % Bandits
            
            \cite{Sca18a} % BO
            
            \cite{Bar02} % Communication complexity
            & Graph learning
            
            Ranking
            
            Classification 
            
            Clustering 
            
            Phylogeny & 
            \cite{San12,Sha14} % Graphical model selection
            
            \cite{Sha15,Pan17} % Ranking from pairwise comparisons 
            
            \cite{Yan99a,Nok15} % Classification
            
            \cite{Maz17a} % Clustering
            
            \cite{Mos04} % Phylogeny
            \tabularnewline
            \hline 
        \end{tabular}
        \caption{Examples of applications for which impossibility results have been derived using Fano's inequality. \label{tbl:applications}}
        \par
    \end{centering}
\end{table}

\begin{example}
    ($1$-sparse linear regression)
    A vector parameter $\theta \in \RR^p$ is known to have at most one non-zero entry, and we are given $n$ linear samples of the form $\Yv = \Xv\theta + \Zv$,\footnote{Throughout the chapter, we interchange tuple-based notations such as $\Xv = (X_1,\dotsc,X_n)$, $\Yv = (Y_1,\dotsc,Y_n)$ with vector/matrix notation such as $\Xv\in\RR^{n \times p}$, $\Yv \in \RR^n$.}  where $\Xv \in \RR^{n \times p}$ is a known input matrix, and $\Zv \sim \Ndist(\bzero, \sigma^2 \Iv)$ is additive Gaussian noise.  In other words, the $i$-th sample $Y_i$ is a noisy sample of $\langle X_i, \theta \rangle$, where $X_i \in \RR^p$ is the transpose of the $i$-th row of $\Xv$.  The goal is to construct an estimate $\hat{\theta}$ such that the squared distance $\ell(\theta,\hat{\theta}) = \|\theta - \hat{\theta}\|_2^2$ is small.  
    
    This example is an extreme case of {\em $k$-sparse linear regression}, in which $\theta$ has at most $k \ll p$ non-zero entries, i.e., at most $k$ columns of $\Xv$ impact the output.  The more general $k$-sparse recovery problem will be considered in Section \ref{sec:sparse}. 
\end{example}

We seek to establish algorithm-independent impossibility results, henceforth referred to as {\em converse bounds}, in the form of lower bounds on the {\em sample complexity}, i.e., the number of samples $n$ required to achieve a certain average target loss.  
% Broadly speaking, an information-theoretic impossibly result, commonly referred to as a {\em converse bound}, states that under a certain mathematical model, it is impossible for {\em any decoder} to achieve a certain estimation goal.  This includes not only practical decoders, but also computationally intractable decoders performing methods such as exhaustive search.  Of particular of interest in the above estimation framework is {\em lower bounds on the required number of samples} for a given target average loss.
The following aspects of the problem significantly impact this goal, and their differences are highlighted throughout the chapter:
\begin{itemize}
    \item \underline{Discrete vs.~continuous}: 
    Depending on the application, the parameter set $\Theta$ may be discrete or continuous.  For instance, in the $1$-sparse linear regression example, one may consider the case that $\theta$ is known to lie in a finite set $\Theta \subseteq \RR^p$, or one may consider the general estimation of a vector in  the set
    \begin{equation}
        \Theta = \{ \theta \in \RR^p \,:\, \|\theta\|_0 \le 1 \}, \label{eq:Theta_1s}
    \end{equation}
    where $\|\theta\|_0$ is the number of non-zeros in $\theta$.  % In the discrete case, the reduction to multiple hypothesis testing is often immediate, whereas in the continuous case, a discretization argument is used.
    \item \underline{Minimax vs.~Bayesian}: 
    In the minimax setting, one seeks a decoder that attains a small loss for any given $\theta \in \Theta$, whereas in the Bayesian setting, one considers the average performance under some prior distribution on $\theta$.  Hence, these two variations respectively consider the worst-case and average-case performance with respect to $\theta$.  We focus primarily on the minimax setting throughout the chapter, and further discuss Bayesian settings in Section \ref{sec:generalizations}.
    \item \underline{Choice of target goal}: Naturally, the target goal can considerably impact the fundamental performance limits of an estimation problem.  For instance, in discrete settings, it is common to consider exact recovery, requiring that $\hat{\theta} = \theta$ (i.e., the 0-1 loss $\ell(\theta,\thetahat) = \openone\{ \thetahat \ne \theta \}$), but it is also of interest to understand to what extent approximate recovery criteria make the problem easier.  % In both discrete and continuous settings, different recovery criteria can lead to considerably different fundamental limits.
    \item \underline{Non-adaptive vs.~adaptive sampling}: 
    In settings consisting of an input $\Xv = (X_1,\dotsc,X_n)$ as introduced above, one often distinguishes between the {\em non-adaptive} setting, in which $\Xv$ is specified prior to observing any samples, and the {\em adaptive} setting, in which a given input $X_i$ can be designed based on the past inputs $(X_1,\dotsc,X_{i-1})$ and samples $(Y_1,\dotsc,Y_{i-1})$.  It is of significant interest to understand to what extent the additional freedom of adaptivity impacts the performance.
\end{itemize}
With these variations in mind, we proceed by outlining the main steps in obtaining converse bounds for statistical estimation via Fano's inequality.

% The steps shown in Procedure \ref{alg:steps} are highly inter-related; for instance, the reduction to hypothesis testing can typically done in many ways, and the specific choice considerably impacts how high the associated mutual information term is.  We proceed by discussing the steps in more detail.

\subsubsection{Step 1: Reduction to Multiple Hypothesis Testing}  \label{sec:step1}

The {\em multiple hypothesis testing} problem is defined as follows: An index $V \in \{1,\dotsc,M\}$ is drawn from a prior distribution $P_V$, and a sequence of samples $\Yv = (Y_1,\dotsc,Y_n)$ is drawn from a probability distribution $P_{\Yv|V}$ parametrized by $V$. The $M$ possible conditional distributions are known in advance, and the goal is to identify the index $V$ with high probability given the samples.

In Figure \ref{fig:Reduction}, we provide a general illustration of how an estimation problem can be reduced to multiple hypothesis testing, possibly with the added twist of including inputs $\Xv = (X_1,\dotsc,X_n)$.  Supposing for the time being that we are in the minimax setting, the idea is to construct a {hard subset} of parameters $\{\theta_1,\dotsc,\theta_M\}$ that are difficult to distinguish given the samples.  We then {lower bound the worst-case performance by the average} over this hard subset.  As a concrete example, a good choice for the 1-sparse linear regression problem is to set $M = 2p$ and consider the set of vectors of the form
\begin{equation}
    \theta = (0,\dotsc,0,\pm\epsilon,0,\dotsc,0), \label{eq:theta_1s}
\end{equation}
where $\epsilon > 0$ is a constant.  Hence, the non-zero entry of $\theta$ has a given magnitude, which can be selected to our liking for the purpose of proving a converse.

We envision an index $V \in \{1,\dotsc,M\}$ being drawn uniformly at random and used to select the corresponding parameter $\theta_V$, and the estimation algorithm being run to produce an estimate $\hat{\theta}$.  If the parameters $\{\theta_1,\dotsc,\theta_M\}$ are not too close and the algorithm successfully produces $\hat{\theta} \approx \theta_V$, then we should be able to infer the index $V$ from $\hat{\theta}$.  This entire process can be viewed as a problem of multiple hypothesis testing, where the $v$-th hypothesis is that the underlying parameter is $\theta_v$ ($v=1,\dotsc,M$).  With this reduction, we can deduce that if the algorithm performs well then the hypothesis test is successful; the contrapositive statement is then that {\em if the hypothesis test cannot be successful, then the algorithm cannot perform well}.

In the 1-sparse linear regression example, we find from \eqref{eq:theta_1s} that distinct $\theta_j,\theta_{j'}$ must satisfy $\|\theta_j - \theta_{j'}\|_2 \ge \sqrt{2} \cdot \epsilon$.  As a result, we immediately obtain from the triangle inequality that the following holds:
\begin{equation}
    \text{If}~~~~ \|\hat{\theta} - \theta_v\|_2 < \frac{\sqrt{2}}{2} \cdot \epsilon, \text{~~~~then~~~~} \argmin_{v' = 1,\dotsc,M} \|\hat{\theta} - \theta_{v'}\| = v.
\end{equation}
In other words, if the algorithm yields $\|\hat{\theta} - \theta_v\|_2^2 < \frac{\sqrt{2}}{2} \epsilon$, then $V$ can be identified as the index corresponding to the closest vector to $\hat{\theta}$.  Thus, sufficiently accurate estimation implies success in identifying $V$.

\begin{figure}
    \begin{centering}
        \includegraphics[width=0.8\columnwidth]{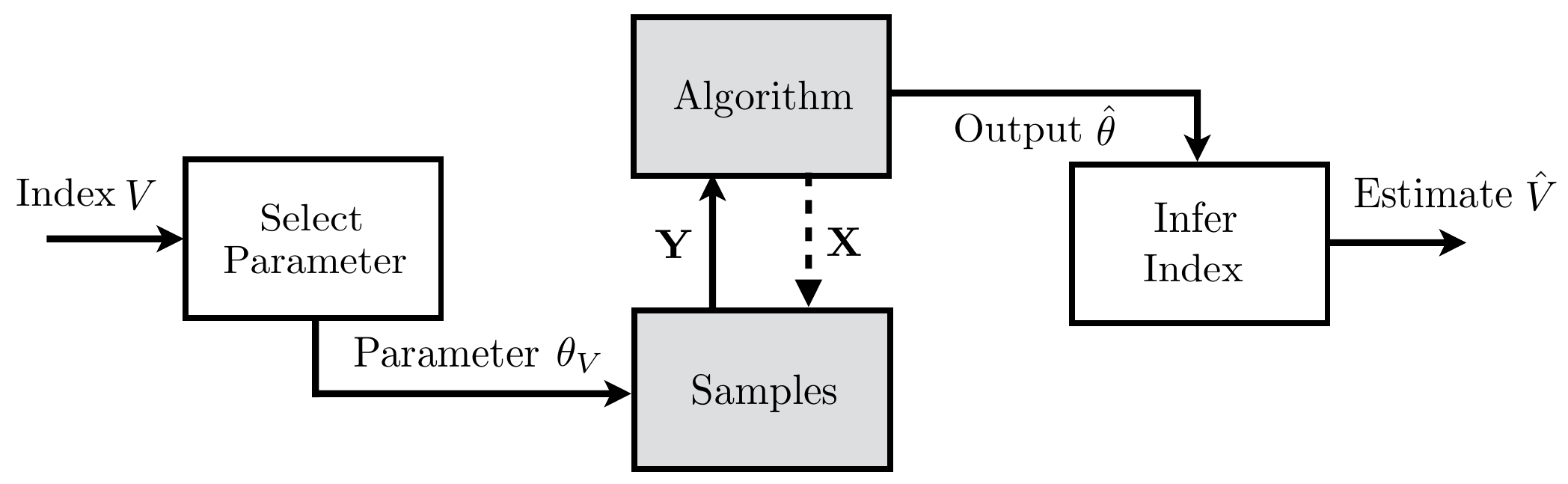}
        \par
    \end{centering}
    
    \caption{Reduction of minimax estimation to multiple hypothesis testing.  The gray boxes are fixed as part of the problem statement, whereas the white boxes are constructed to our liking for the purpose of proving a converse bound.  The dashed line marked with $\Xv$ is optional, depending on whether inputs are present. \label{fig:Reduction}}
\end{figure}

{\bf Discussion.} Selecting the hard subset $\{\theta_1,\dotsc,\theta_M\}$ of parameters is often considered somewhat of an art.  While the proofs of existing converse bounds may seem easy in hindsight when the hard subset is known, coming up with a suitable choice for a new problem usually requires some creativity and/or exploration.  % It is typically unclear {\em a priori} which choice will lead to a good converse bound, and the best approach can vary significantly from problem to problem.  
Despite this, there exist general approaches that have proved to been effective in a wide range problems, which we exemplify in Sections \ref{sec:apps_discrete} and \ref{sec:apps_cont}.

In general, selecting the hard subset requires balancing conflicting goals: Increasing $M$ so that the hypothesis test is more difficult, keeping the elements ``close'' so that they are difficult to distinguish, and keeping the elements ``sufficiently distant'' so that one can recover $V$ from $\hat{\theta}$. Typically, one of the following three approaches is adopted: (i)  explicitly construct a set whose elements are known or believed to be difficult to distinguish; (ii) prove the existence of such a set using probabilistic arguments; or (iii) consider packing as many elements as possible into the entire space.  We will provide examples of all three kinds.

In the Bayesian setting, $\theta$ is already random, so we cannot use the above-mentioned method of lower bounding the worst-case performance by the average.  Nevertheless, if $\Theta$ is discrete, we can still use the trivial reduction $V = \theta$ to form a multiple hypothesis testing problem with a possibly non-uniform prior.  In the continuous Bayesian setting, one typically requires more advanced methods not covered in this chapter; we provide further discussion in Section \ref{sec:generalizations}.

\subsubsection{Step 2: Application of Fano's Inequality}

Once a multiple hypothesis test is set up, Fano's inequality provides a lower bound on its error probability in terms of the mutual information, which is one of the most fundamental information measures in information theory.  The mutual information can often be explicitly characterized given the problem formulation, and a variety of useful properties are known for doing so, as outlined below.

We briefly state the standard form of Fano's inequality \index{subject}{Fano's inequality} for the case that $V$ is uniform on $\{1,\dotsc,M\}$ and $\hat{V}$ is some estimate of $V$:
\begin{equation}
    \PP[\hat{V}\ne V] \ge 1 - \frac{I(V;\hat{V}) + \log 2}{\log M}. \label{eq:Fano_intro}
\end{equation}
The intuition is as follows: The term $\log M$ represents the prior uncertainty (i.e., entropy) of $V$, and the mutual information $I(V;\hat{V})$ represents how much information $\hat{V}$ reveals about $V$.  In order to have a small probability of error, we require that the information revealed is close to the prior uncertainty.

Beyond the standard form of Fano's inequality \eqref{eq:Fano_intro}, it is useful to consider other variants, including approximate recovery and conditional versions.  These are the topic of Section \ref{sec:fano}, and we discuss other alternatives in Section \ref{sec:generalizations}.

\subsubsection{Step 3: Bounding the Mutual Information}

In order to make lower bounds such as \eqref{eq:Fano_intro} explicit, we need to upper bound the mutual information therein. This often consists of tedious yet routine calculations, but there are cases where it is highly non-trivial.  The mutual information depends crucially on the choice of reduction in the first step.

The joint distribution of $(V,\hat{V})$ is decoder-dependent and usually very complicated, so to simplify matters, the typical first step is to apply an upper bound known as the data processing inequality. 
In the simplest case that there is no extra input to the sampling mechanism (i.e., $\Xv$ is absent in Figure \ref{fig:Reduction}), this inequality takes the form $I(V;\hat{V}) \le I(V;\Yv)$ under the Markov chain $V \to \Yv \to \hat{V}$.  Thus, we are left to answer the question of {how much information the samples reveal about the index $V$}.  

In Section \ref{sec:mi_bounds}, we introduce several useful tools for this purpose, including:
\begin{itemize}
    \item \underline{Tensorization}:
    If the samples $\Yv = (Y_1,\dotsc,Y_n)$ are conditionally independent given $V$, we have $I(V;\Yv) \le \sum_{i=1}^n I(V;Y_i)$.  Bounds of this type simplify the mutual information containing a set of observations to simpler terms containing only a single observation.
    \item \underline{KL divergence based bounds}: 
    Straightforward bounds on the mutual information reveal that if $\{\Ptvn\}_{v = 1,\dotsc,M}$ are close in terms of KL divergence, then the mutual information is small.  Results of this type are useful, as the relevant KL divergences can often be evaluated exactly or tightly bounded.
    % \item \underline{Covering}: Continuing on the previous point, instead of insisting that {\em all} of the distributions are close in KL divergence, one can try to ``cover'' the space with auxiliary distributions $Q^n_1,\dotsc,Q^n_N$ such that each $\Ptvn$ is close to {\em at least one} $Q^n_j$.  This approach can allow greater flexibility for bounding the mutual information.
\end{itemize}
In addition to these, we introduce variations for cases that the input $\Xv$ is present in Figure \ref{fig:Reduction}, distinguishing between non-adaptive and adaptive sampling.

{\bf Toy example.} To give a simple example of how this step is combined with the previous one, consider the case that we wish to identify one of $M$ hypotheses, with the $v$-th hypothesis being that $\Yv \sim P_v(\yv)$ for some distribution $P_v$ on $\{0,1\}^n$.  That is, the $n$ observations $(Y_1,\dotsc,Y_n)$ are binary-valued.  Starting with the above-mentioned bound  $I(V;\hat{V}) \le I(V;\Yv)$, we simply write $I(V;\Yv) \le H(\Yv) \le n \log 2$, which follows since $\Yv$ takes one of at most $2^n$ values.  Substitution into \eqref{eq:Fano_intro} yields $\pe \ge 1 - \frac{n+1}{\log_2 M}$, which means that achieving $\pe \le \delta$ requires $n \ge (1-\delta)\log_2 M - 1$.  This formalizes the intuitive fact that reliably identifying one of $M \gg 1$ hypotheses requires roughly $\log_2 M$ binary observations.

%\subsection{Summary}
%
%For reference, the steps described above are summarized as follows.
%
%\begin{procedure}[h]
%    \caption{ Steps for obtaining converse bounds for estimation problems via Fano's inequality. \label{alg:steps} }
%    
%    \begin{enumerate}
%        \item Reduce the estimation problem to a multiple hypothesis testing problem;
%        \item Use Fano's inequality to lower bound the error probability in terms of the mutual information;
%        \item Upper bound the mutual information and deduce the final converse bound.
%    \end{enumerate}
%\end{procedure}

\section{Fano's Inequality and its Variants} \label{sec:fano}

In this section, we state various forms of Fano's inequality that will form the basis for the results in the remainder of the chapter.  % Here and in subsequent sections, we defer several of the proofs to the appendix, particularly those for results that are standard in introductory information theory textbooks.

% While these results pertain to the estimation of discrete random variables, they will be directly used when handling continuous problems in Section \ref{sec:continuous}.

\subsection{Standard Version}

We begin with the most simple and widely-used form of Fano's inequality.  We use the generic notation $V$ for the discrete random variable in a multiple hypothesis test, and we write its estimate as $\hat{V}$.  In typical applications, one has a Markov chain relation such as $V \to \Yv \to \hat{V}$, where $\Yv$ is the collection of samples; we will exploit this fact in Section \ref{sec:mi_bounds}, but for now, one can think of $\hat{V}$ being randomly generated by {any} means given $V$.  

The two fundamental quantities appearing in Fano's inequality are the conditional entropy $H(V|\hat{V})$, representing the uncertainty of $V$ given its estimate, and the {\em error probability}: 
\begin{equation}
    \pe = \PP[\hat{V} \ne \fV]. \label{eq:pe}
\end{equation}
Since $H(V|\hat{V}) = H(V) - I(V;\hat{V})$, the conditional entropy is closely related to the mutual information, representing how much information $\hat{V}$ reveals about $V$.

\begin{thm} \label{thm:Fano}
    {\em (Fano's inequality)} 
    For any discrete random variables $\fV$ and $\hat{V}$ on a common finite alphabet $\Vc$, we have
    \begin{equation}
        H(V | \hat{V}) \le H_2(\pe) + \pe \log\big(|\Vc| - 1\big), \label{eq:Fano1}
    \end{equation}
    where $H_2(\alpha) = \alpha\log\frac{1}{\alpha} + (1-\alpha)\log\frac{1}{1-\alpha}$ is the binary entropy function.  In particular, if $V$ is uniform on $\Vc$, we have
    \begin{equation}
        I(\fV;\hat{V}) \ge (1-\pe) \log|\Vc| - \log 2, \label{eq:Fano2}
    \end{equation}
    or equivalently,
    \begin{equation}
        \pe \ge 1 - \frac{I(V;\hat{V}) + \log 2}{\log|\Vc|}. \label{eq:Fano3}
    \end{equation}
\end{thm}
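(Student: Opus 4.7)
The plan is to prove the first inequality \eqref{eq:Fano1} directly, and then derive \eqref{eq:Fano2} and \eqref{eq:Fano3} as straightforward consequences by specializing to a uniform $V$ and loosening constants. The key device is the introduction of the error indicator $E = \openone\{V \ne \hat{V}\}$, which is Bernoulli with parameter $\pe$.

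First I would apply the chain rule for entropy to $H(E, V \mid \hat{V})$ in two different orders:
\begin{equation}
H(V \mid \hat{V}) + H(E \mid V, \hat{V}) \;=\; H(E \mid \hat{V}) + H(V \mid E, \hat{V}).
\end{equation}
The term $H(E \mid V, \hat{V})$ vanishes because $E$ is a deterministic function of $(V, \hat{V})$. For the right-hand side, I would use that conditioning reduces entropy to obtain $H(E \mid \hat{V}) \le H(E) = H_2(\pe)$, and then split the remaining term by conditioning on $E$:
\begin{equation}
H(V \mid E, \hat{V}) \;=\; (1-\pe)\,H(V \mid \hat{V}, E = 0) + \pe\, H(V \mid \hat{V}, E = 1).
\end{equation}
When $E = 0$ we have $V = \hat{V}$ so the first conditional entropy is zero, and when $E = 1$ the variable $V$ lies in $\Vc \setminus \{\hat{V}\}$, a set of size $|\Vc| - 1$, so the second conditional entropy is at most $\log(|\Vc|-1)$ by the uniform bound on entropy. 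Combining these gives \eqref{eq:Fano1}.

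For \eqref{eq:Fano2}, I would use $I(V;\hat{V}) = H(V) - H(V \mid \hat{V})$ together with $H(V) = \log|\Vc|$ (uniformity of $V$), and then plug in \eqref{eq:Fano1}. Bounding $H_2(\pe) \le \log 2$ and $\log(|\Vc|-1) \le \log|\Vc|$ yields
\begin{equation}
I(V; \hat{V}) \;\ge\; \log|\Vc| - \log 2 - \pe \log|\Vc| \;=\; (1-\pe)\log|\Vc| - \log 2.
\end{equation}
Finally, \eqref{eq:Fano3} is obtained by rearranging \eqref{eq:Fano2} for $\pe$.

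There is no real obstacle here; the argument is a short symbolic manipulation. The only step requiring a moment of care is recognizing that $E$ is a function of $(V,\hat{V})$ so that $H(E \mid V, \hat{V}) = 0$, and tracking the two cases $E = 0$ and $E = 1$ separately so that the bound $\log(|\Vc|-1)$ appears with the right prefactor $\pe$. Everything else is direct application of standard entropy identities.
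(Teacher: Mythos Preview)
Your proposal is correct and essentially identical to the paper's proof: both introduce the error indicator $E = \openone\{V \ne \hat{V}\}$, use the chain rule together with the fact that $E$ is a function of $(V,\hat{V})$, bound $H(E|\hat{V}) \le H_2(\pe)$, and split $H(V|\hat{V},E)$ into the cases $E=0$ and $E=1$. The derivation of \eqref{eq:Fano2}--\eqref{eq:Fano3} via $H_2(\pe) \le \log 2$ and $\log(|\Vc|-1) \le \log|\Vc|$ also matches the paper exactly.
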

%\begin{proof}
%
%\end{proof}
% \noindent The proof is given in Appendix \ref{sec:pf_Fano}.

Since the proof of Theorem \ref{thm:Fano} is widely accessible in standard references such as \cite{Cov01}, we provide only an intuitive explanation of \eqref{eq:Fano1}: To resolve the uncertainty in $V$ given $\hat{V}$, we can first ask whether the two are equal, which bears uncertainty $H_2(\pe)$.  In the case that they differ, which only occurs a fraction $\pe$ of the time, the remaining uncertainty is at most $\log\big(|\Vc| - 1\big)$.

\begin{rem} \label{rem:FanoWeaken}
%    In typical applications of \eqref{eq:Fano1}--\eqref{eq:Fano2}, one weakens the bound in two ways: (i) If $\hat{V}$ is formed based on a set of observations $\Vc$ such that $V \to \Yv \to \Vc$ forms a Markov chain, then one can bound $I(V;\hat{V}) \le I(V;\Yv)$ by the data processing inequality.  (ii) The binary entropy function satisfies $H_2(\pe) \le \log 2$.  For example, applying these bounds to \eqref{eq:Fano2} gives
%    \begin{equation}
%        I(\fV;\Yv) \ge (1-\pe) \log|\Vc| - \log 2, \label{eq:Fano2a}
%    \end{equation}
%    or equivalently
%    \begin{equation}
%        \pe \ge 1 - \frac{I(V;\Yv) + \log 2}{\log|\Vc| }. \label{eq:Fano2b}
%    \end{equation}

    For uniform $\fV$, we obtain \eqref{eq:Fano2} by upper bounding $|\Vc| - 1 \le |\Vc|$ and $H_2(\pe) \le \log 2$ in \eqref{eq:Fano1}, and subtracting $H(V) = \log|\Vc|$ on both sides.  While these additional bounds have a minimal impact for moderate to large values of $|\Vc|$, a notable case where one should use \eqref{eq:Fano1} is the binary setting, i.e., $|\Vc| = 2$.  In this case, \eqref{eq:Fano2} is meaningless due to the right-hand side being negative, whereas \eqref{eq:Fano1} yields the following for uniform $V$:
    \begin{equation}
        I(V; \hat{V}) \ge \log 2 - H_2(\pe). \label{eq:FanoM2a}
    \end{equation}
    It follows that the error probability is lower bounded as
    \begin{equation}
        \pe \ge H_2^{-1}\big( \log 2 - I(V;\hat{V}) \big), \label{eq:FanoM2b}
    \end{equation}
      where $H_2^{-1}(\cdot) \in \big[0,\frac{1}{2}\big]$ is the inverse of $H_2(\cdot) \in [0,\log2]$ on the domain $\big[0,\frac{1}{2}\big]$.
\end{rem}

%\begin{rem} \label{rem:FanoAltProof}
%    In the case that $\fV$ is uniform on $\Vc$, we can provide an alternative proof of Fano's inequality that is more amenable to certain generalizations ({\em cf.}, Section \ref{sec:generalizations}):
%    \begin{align}
%        I(V;\hat{V}) 
%            &= D(P_{\fV\hat{V}} \| P_{\fV}\times P_{\hat{V}})  \\
%            &\ge D_2\big( P_{\fV\hat{V}}[E] \,\|\, (P_{\fV}\times P_{\hat{V}})[E] \big) \label{eq:Fano_alt2} \\
%            &= D_2\Big( \pe \,\Big\|\, 1-\frac{1}{|\Vc|} \Big) \label{eq:Fano_alt3} \\
%            &= (1-\pe) \log\big( |\Vc| (1-\pe) \big) + \pe\log\frac{\pe}{1-\frac{1}{|\Vc|}} \label{eq:Fano_alt4} \\
%            &= (1-\pe) \log |\Vc| + \pe\log\frac{1}{1 - \frac{1}{|\Vc|}} - H_2(\pe), \label{eq:Fano_alt5}
%    \end{align}
%    where \eqref{eq:Fano_alt2} holds with $d(p\|q) = p\log\frac{p}{q} + (1-p)\log\frac{1-p}{1-q}$ from the data processing inequality, \eqref{eq:Fano_alt3} follows from since if $\fV$ and $\hat{V}$ are generated independently then the probability of the two being equal is $\frac{1}{|\Vc|}$ by uniformity, and \eqref{eq:Fano_alt4}--\eqref{eq:Fano_alt5} follows from the definitions of $d(\cdot\|\cdot)$ and $H_2(\cdot)$.  From \eqref{eq:Fano_alt5}, we obtain \eqref{eq:Fano2} by writing $\pe\log\frac{1}{1-\frac{1}{|\Vc|}} \ge 0$ and $H_2(\pe) \le \log 2$, and we also obtain \eqref{eq:FanoM2a} by setting $|\Vc| = 2$.
%\end{rem}

\subsection{Approximate Recovery}

The notion of error probability considered in Theorem \ref{thm:Fano} is that of {exact recovery}, insisting that $\hat{V} = V$.  More generally, one can consider notions of {\em approximate recovery}, where one only requires $\hat{V}$ to be ``close'' to $V$ in some sense.  This is useful for at least two reasons:
\begin{itemize}
    \item Exact recovery is often a highly stringent criterion in discrete statistical estimation problems, and it is of considerable interest to understand to what extent moving to approximate recovery makes the problem easier;
    \item When we reduce continuous estimation problems to the discrete setting ({\em cf.}, Section \ref{sec:continuous}), permitting approximate recovery will provide a useful additional degree of freedom.
\end{itemize}
We consider a general setup with a random variable $V$, an estimate $\hat{V}$, and an error probability of the form 
\begin{equation}
\pe(t) = \PP\big[ d(V,\hat{V}) > t \big] \label{eq:pe_partial}
\end{equation}
for some real-valued function $d(v,\hat{v})$ and threshold $t \in \RR$.  In contrast to the exact recovery setting, there are interesting cases where $V$ and $\hat{V}$ are defined on different alphabets, so we denote these by $\Vc$ and $\hat{\Vc}$, respectively.

One can interpret \eqref{eq:pe_partial} as requiring $\hat{V}$ to be within a ``distance'' $t$ of $V$.  However, $d$ need not be a true distance function, and need not even be symmetric nor take non-negative values.  This definition of error probability in fact entails no loss of generality, since one can set $t=0$ and $d(V,\hat{V}) = \openone\{(V,\hat{V}) \in \Ec\}$ for an arbitrary set $\Ec$ containing the pairs that are considered errors.

In the following, we make use of the quantities
\begin{equation}
\Nmax(t) = \max_{\hat{v} \in \hat{\Vc}} N_{\hat{v}}(t), \qquad \Nmin(t) = \min_{\hat{v} \in \hat{\Vc}} N_{\hat{v}}(t),
\end{equation}
where
\begin{equation}
N_{\hat{v}}(t) = \sum_{v \in \Vc} \openone\{ d(v,\hat{v}) \le t \}
\end{equation}
counts the number of $v \in \Vc$ within a ``distance'' $t$ of $\hat{v} \in \hat{\Vc}$. 

\begin{thm} \label{thm:Partial}
    {\em (Fano's inequality with approximate recovery)}
    For any random variables $V,\hat{V}$ on the finite alphabets $\Vc,\hat{\Vc}$, we have
    \begin{equation}
    H(\fV|\hat{V}) \le H_2( \pe(t) ) + \pe(t)\log\frac{|\Vc| - \Nmin(t)}{\Nmax(t)} + \log \Nmax(t). \label{eq:Partial1}
    \end{equation}
    In particular, if $\fV$ is uniform on $\Vc$, then
    \begin{equation}
    I(\fV;\hat{V}) \ge (1 - \pe(t))\log\frac{|\Vc|}{\Nmax(t)} - \log 2, \label{eq:Partial2}
    \end{equation}
    or equivalently
    \begin{equation}
    \pe(t) \ge 1 - \frac{I(\fV;\hat{V}) + \log 2}{ \log\frac{|\Vc|}{\Nmax(t)} }. \label{eq:Partial3}
    \end{equation}
\end{thm}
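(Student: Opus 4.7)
The plan is to mirror the classical proof of Fano's inequality, but with the error indicator redefined to reflect approximate rather than exact recovery. Specifically, I would introduce the Bernoulli random variable $E = \openone\{d(V,\hat{V}) > t\}$, which by construction satisfies $\PP[E=1] = \pe(t)$ and is a deterministic function of $(V,\hat{V})$. Since $E$ is determined by $(V,\hat{V})$, we have $H(E \mid V, \hat{V}) = 0$, so the chain rule gives
\begin{equation}
H(V \mid \hat{V}) = H(V, E \mid \hat{V}) = H(E \mid \hat{V}) + H(V \mid E, \hat{V}) \le H_2(\pe(t)) + H(V \mid E, \hat{V}),
\end{equation}
using that $H(E\mid \hat{V}) \le H(E) = H_2(\pe(t))$ since $E$ is binary.

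The main step is to bound $H(V \mid E, \hat{V})$ carefully. Conditioning on a fixed value $\hat{V} = \hat{v}$, when $E = 0$ (the approximate-recovery case), $V$ must lie in the set $\{v \in \Vc : d(v,\hat{v}) \le t\}$, which has cardinality $N_{\hat{v}}(t) \le \Nmax(t)$; so the conditional entropy is at most $\log \Nmax(t)$. When $E = 1$ (an error), $V$ must lie in the complementary set of size $|\Vc| - N_{\hat{v}}(t) \le |\Vc| - \Nmin(t)$; so the conditional entropy is at most $\log(|\Vc| - \Nmin(t))$. Taking the convex combination weighted by $\PP[E=0] = 1 - \pe(t)$ and $\PP[E=1] = \pe(t)$ and combining with the previous display yields
\begin{equation}
H(V \mid \hat{V}) \le H_2(\pe(t)) + (1-\pe(t)) \log \Nmax(t) + \pe(t) \log(|\Vc| - \Nmin(t)),
\end{equation}
which rearranges to \eqref{eq:Partial1} after pulling out the common $\log \Nmax(t)$ term.

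For the uniform specialization, I would start from $I(V;\hat{V}) = H(V) - H(V \mid \hat{V}) = \log|\Vc| - H(V \mid \hat{V})$, apply the bound just established, and then weaken it using $H_2(\pe(t)) \le \log 2$ together with $|\Vc| - \Nmin(t) \le |\Vc|$, which converts the first-order term into $\pe(t) \log \frac{|\Vc|}{\Nmax(t)}$. Collecting terms yields exactly \eqref{eq:Partial2}, and \eqref{eq:Partial3} follows by solving for $\pe(t)$.

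The only step requiring any real care is the case-split bound on $H(V \mid E, \hat{V})$: one must remember to take the worst case over $\hat{v}$ on each side of the split independently, which is precisely what gives $\Nmax(t)$ in the success branch and $|\Vc| - \Nmin(t)$ in the error branch. Everything else is the standard chain-rule manipulation and the routine weakening that produces the uniform-$V$ corollary.
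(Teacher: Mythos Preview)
Your proposal is correct and essentially identical to the paper's proof: the paper also introduces the approximate-recovery error indicator $E_t = \openone\{d(V,\hat{V}) > t\}$, applies the chain rule exactly as you describe, and bounds the two branches by $\log \Nmax(t)$ and $\log(|\Vc|-\Nmin(t))$ via the cardinality constraints on the sets $\{v : d(v,\hat{v}) \le t\}$ and their complements. The uniform specialization in the paper also proceeds by the same weakenings $H_2(\pe(t)) \le \log 2$ and $|\Vc| - \Nmin(t) \le |\Vc|$ that you outline.
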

%\begin{proof}
%
%\end{proof}
\noindent The proof is similar to that of Theorem \ref{thm:Fano}, and can be found in \cite{Duc13}.%\footnote{All omitted proofs can be found in the supplementary material, which is included for the reviewers' reference.} % The proof is given in Appendix \ref{sec:pf_Partial}.

By setting $d(v,\hat{v}) = \openone\{ v \ne \hat{v} \}$ and $t=0$, we find that Theorem \ref{thm:Partial} recovers Theorem \ref{thm:Fano} as a special case.  More generally, the bounds \eqref{eq:Partial2}--\eqref{eq:Partial3} resemble those for exact recovery in \eqref{eq:Fano2}--\eqref{eq:Fano3}, but $\log|\Vc|$ is replaced by $\log\frac{|\Vc|}{\Nmax(t)}$.  When $\Vc = \hat{\Vc}$, one can intuitively think of the approximate recovery setting as dividing the space into regions of size $\Nmax(t)$, and only requiring the correct region to be identified, thereby reducing the effective alphabet size to $\frac{|\Vc|}{\Nmax(t)}$.

%\begin{cor} \label{cor:list}
%    \emph{(List decoding)}
%    For any random variable $V$ on a finite alphabet $\Vc$, and any random list $\Lc$ consisting of $L$ elements from $\Vc$, it holds that
%    \begin{equation}
%        \PP[ V \notin \Lc ] \ge 1 - \frac{I(V;\Lc) + \log 2}{ \log\frac{|\Vc|}{L} }.
%    \end{equation}
%\end{cor}
%\begin{proof}
%    This follows from Theorem \ref{thm:Partial} with the function $d(V,\Lc) = \openone\{V \notin \Lc\}$ and threshold $t = 0$.
%\end{proof}

\subsection{Conditional Version} \label{sec:cond}

%In the following, in addition to the average error probability in \eqref{eq:pe}, we consider the maximal error probability:
%\begin{equation}
%    \pmax = \max_{v} \PP[\hat{V} \ne V \,|\, V=v].
%\end{equation}

When applying Fano's inequality, it is often useful to condition on certain random events and random variables.  The following theorem states a general variant of Theorem \ref{thm:Fano} with such conditioning.  Conditional forms for the case of approximate recovery (Theorem \ref{thm:Partial}) follow in an identical manner.

\begin{thm} \label{thm:Conditional}
    {\em (Conditional Fano inequality)} 
    For any discrete random variables $\fV$ and $\hat{V}$ on a common alphabet $\Vc$, any discrete random variable $A$ on an alphabet $\Ac$, and any subset $\Ac' \subseteq \Ac$, the error probability $\pe = \PP[\hat{V} \ne V]$ satisfies
    \begin{equation}
        \pe \ge \sum_{a \in \Ac'} \PP[A = a] \frac{H(V|\hat{V},A = a) - \log 2}{ \log\big(|\Vc_a| - 1\big) }, \label{eq:Conditional}
    \end{equation}
    where $\Vc_a = \{ v \in \Vc \,:\, \PP[V = v \,|\, A = a] > 0 \}$.  For possibly continuous $A$, the same holds true with $\sum_{a \in \Ac'} \PP[A = a] (\,\cdots)$ replaced by $\EE[ \openone\{ A \in \Ac '\} (\,\cdots)]$.
\end{thm}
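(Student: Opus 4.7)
The plan is to prove \eqref{eq:Conditional} by applying the standard Fano inequality \eqref{eq:Fano1} \emph{pointwise} in the conditioning value $a$, and then averaging against the distribution of $A$ restricted to $\Ac'$. This reduction is natural because the right-hand side of \eqref{eq:Conditional} is already written as an $A$-expectation of Fano-type per-$a$ quantities, with the effective alphabet $\Vc_a$ playing the role of $\Vc$ in Theorem \ref{thm:Fano}.

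For each $a \in \Ac'$ with $\PP[A=a] > 0$, I would apply Theorem \ref{thm:Fano} under the conditional distribution of $(V,\hat V)$ given $A=a$; since $V$ is then supported on $\Vc_a$, the inequality \eqref{eq:Fano1} yields
\begin{equation*}
H(V \mid \hat V, A = a) \;\le\; H_2\big(\pe^{(a)}\big) + \pe^{(a)} \log\big(|\Vc_a| - 1\big),
\end{equation*}
where $\pe^{(a)} := \PP[\hat V \ne V \mid A = a]$. Bounding $H_2 \le \log 2$ and rearranging (when $|\Vc_a| > 1$) produces the per-$a$ lower bound $\pe^{(a)} \ge \big(H(V \mid \hat V, A=a) - \log 2\big) / \log(|\Vc_a|-1)$. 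Next I would multiply through by $\PP[A=a]$ and sum over $a \in \Ac'$; the resulting left-hand side is
\begin{equation*}
\sum_{a \in \Ac'} \PP[A=a]\,\pe^{(a)} \;=\; \PP[\hat V \ne V,\, A \in \Ac'] \;\le\; \PP[\hat V \ne V] \;=\; \pe,
\end{equation*}
which delivers \eqref{eq:Conditional}. The continuous-$A$ extension follows identically, with the sum replaced by $\EE[\openone\{A \in \Ac'\}(\,\cdots)]$ using a regular conditional distribution of $(V,\hat V)$ given $A$.

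The main obstacle is essentially bookkeeping rather than a new idea: handling the degenerate stratum $|\Vc_a| = 1$, in which $V$ is almost surely determined by $A=a$ and the formal denominator $\log(|\Vc_a|-1)$ becomes $-\infty$. I would deal with this either by tacitly restricting $\Ac'$ to $\{a : |\Vc_a| > 1\}$ (which loses nothing, as the corresponding numerator $-\log 2 < 0$ gives only a vacuous lower bound on $\pe^{(a)} \ge 0$), or by adopting the convention that the ratio equals $0$ when $|\Vc_a|=1$. Beyond this convention, the whole argument is a clean reduction to the unconditional Fano inequality already stated as Theorem \ref{thm:Fano}.
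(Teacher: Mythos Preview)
Your proposal is correct and matches the paper's own proof essentially line for line: decompose $\pe \ge \sum_{a \in \Ac'} \PP[A=a]\,\PP[\hat V \ne V \mid A=a]$ and apply Theorem~\ref{thm:Fano} under the conditional law of $(V,\hat V)$ given $A=a$, with $\Vc_a$ playing the role of $\Vc$. Your explicit handling of the degenerate case $|\Vc_a|=1$ is a nice addition that the paper leaves implicit.
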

\begin{proof}
    We write $\pe \ge \sum_{a \in \Ac'} \PP[A = a] \PP[\hat{V} \ne V \,|\, A = a]$, and lower bound the conditional error probability using Fano's inequality ({\em cf.}, Theorem \ref{thm:Fano}) under the joint distribution of $(V,\hat{V})$ conditioned on $A = a$.  
\end{proof}

\begin{rem} \label{rem:cond_uses}
    Our main use of Theorem \ref{thm:Conditional} will be to average over the input $\Xv$ ({\em cf.}, Figure \ref{fig:Reduction}) in the case that it is random and independent of $V$.  In such cases, by setting $A = \Xv$ in \eqref{eq:Conditional} and letting $\Ac'$ contain all possible outcomes, we simply recover Theorem \ref{thm:Fano} with conditioning on $\Xv$ in the conditional entropy and mutual information terms.  The approximate recovery version, Theorem \ref{thm:Partial}, extends in the same way.
    In Section \ref{sec:apps_discrete}, we will discuss more advanced applications of Theorem \ref{thm:Conditional}, including (i) genie arguments, in which some information about $V$ is revealed to the decoder, and (ii) typicality arguments, where we condition on $V$ falling in some high-probability set.
\end{rem}

\section{Mutual Information Bounds} \label{sec:mi_bounds}

We saw in  Section \ref{sec:fano} that the mutual information $I(V;\hat{V})$ naturally arises from Fano's inequality when $V$ is uniform.  More generally, we have $H(V|\hat{V}) = H(V) - I(V;\hat{V})$, so we can characterize the conditional entropy by characterizing both the entropy and the mutual information.  In this section, we provide some of the main useful tools for upper bounding the mutual information.  For brevity, we omit the proofs of standard results commonly found in information theory textbooks, or simple variations thereof. 

% We have seen that mutual information terms such as arise in all of the variants in for which $V$ is uniform.  In fact, the mutual information still plays a crucial role in the non-uniform variants written in terms of $H(V|\hat{V})$, since by definition we have $H(V|\hat{V}) = H(V) - I(V;\hat{V})$.  One of the main steps in any application of Fano's inequality is bounding this mutual information.  For instance, if the estimate $\hat{V}$ is based on $n$ independent samples from some distribution, and each such sample provides at most at most $C$ nats of information about $V$, then we can upper bound $I(V;\hat{V}) \le nC$, and re-arranging \eqref{eq:Fano2} immediately gives the lower bound $n \ge \frac{1}{C} \big( (1-\pe) \log |\Vc| - \log 2 \big)$ on the required number of samples we must gather to recover $V$ with probability at least $1-\pe$.  In this section, we provide a variety of general results that provide mutual information bounds of this type, and of other types.

Throughout the section, the random variables $V$ and $\hat{V}$ are assumed to be discrete, whereas the other random variables involved, including the inputs $\Xv = (X_1,\dotsc,X_n)$ and samples $\Yv = (Y_1,\dotsc,Y_n)$, may be continuous.  Hence, notation such as $P_Y(y)$ may represent either a probability mass function (PMF) or a probability density function (PDF).
%  In such cases, entropy quantities such as $H(Y_i)$ should be interpreted as being the {\em differential entropy} \cite[Ch.~8]{Cov01}.  % In contrast to the entropy, when it comes to mutual information there is no need to distinguish between the discrete and continuous cases.

\subsection{Data Processing Inequality}

Recall the random variables $V$, $\Xv$, $\Yv$, and $\hat{V}$ in the multiple hypothesis testing reduction depicted in Figure \ref{fig:Reduction}.  In nearly all cases, the first step in bounding a mutual information term such as $I(V;\hat{V})$ is to upper bound it in terms of the samples $\Yv$, and possibly the inputs $\Xv$.  By doing so, we remove the dependence on $\hat{V}$, and form a bound that is {algorithm-independent}.

The following lemma provides three variations along these lines.  The three are all essentially equivalent, but are written separately since each will be more naturally suited to certain settings, as described below.  Recall the terminology that $X \to Y \to Z$ {\em forms a Markov chain} if $X$ and $Z$ are conditionally independent given $Y$, or equivalently, $Z$ depends on $(X,Y)$ only through $Y$.

\begin{lem} \label{lem:dpi}
    {\em (Data processing inequality)}  
    
    \noindent (i) If $V \to \Yv \to \hat{V}$ forms a Markov chain, then $I(V;\hat{V}) \le I(V;\Yv)$.
    
    \noindent (ii) If $V \to \Yv \to \hat{V}$ forms a Markov chain conditioned on $\Xv$, then $I(V;\hat{V}|\Xv) \le I(V;\Yv|\Xv)$.
    
    \noindent (iii) If $V \to (\Xv,\Yv) \to \hat{V}$ forms a Markov chain, then $I(V;\hat{V}) \le I(V;\Xv,\Yv)$.
\end{lem}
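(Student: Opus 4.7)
The plan is to prove all three parts using the standard chain-rule argument for mutual information, exploiting the defining property of a Markov chain (conditional independence) to eliminate one of the cross terms, and then invoking the non-negativity of (conditional) mutual information. Since these are classical results, the goal of the exposition is clarity rather than novelty.

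For part (i), I would expand the joint mutual information $I(V; \Yv, \hat{V})$ in the two ways permitted by the chain rule:
\begin{equation}
I(V; \Yv, \hat{V}) = I(V; \Yv) + I(V; \hat{V} \mid \Yv) = I(V; \hat{V}) + I(V; \Yv \mid \hat{V}).
\end{equation}
The Markov assumption $V \to \Yv \to \hat{V}$ means that $V$ and $\hat{V}$ are conditionally independent given $\Yv$, so $I(V; \hat{V} \mid \Yv) = 0$. Rearranging gives $I(V; \Yv) = I(V; \hat{V}) + I(V; \Yv \mid \hat{V}) \ge I(V; \hat{V})$, where the inequality uses non-negativity of conditional mutual information.

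For part (ii), I would repeat the same chain-rule manipulation on the conditional mutual information $I(V; \Yv, \hat{V} \mid \Xv)$, obtaining
\begin{equation}
I(V; \Yv \mid \Xv) + I(V; \hat{V} \mid \Yv, \Xv) = I(V; \hat{V} \mid \Xv) + I(V; \Yv \mid \hat{V}, \Xv),
\end{equation}
and then use the assumed conditional Markov property to conclude $I(V; \hat{V} \mid \Yv, \Xv) = 0$, followed by non-negativity of the remaining term on the right.

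Part (iii) requires no separate argument: it is simply part (i) applied with the composite variable $(\Xv, \Yv)$ playing the role of $\Yv$, since the hypothesis $V \to (\Xv, \Yv) \to \hat{V}$ has exactly that form. The main (and only) obstacle is really just making sure the reader sees that each Markov hypothesis is used precisely to zero out the right cross term; beyond that, the proof is short and mechanical. No new technical machinery beyond the chain rule and non-negativity of mutual information is needed.
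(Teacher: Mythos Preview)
Your proof is correct and essentially equivalent to the paper's: the paper writes $I(V;\hat{V}) = H(V) - H(V|\hat{V}) \le H(V) - H(V|\hat{V},\Yv) = H(V) - H(V|\Yv) = I(V;\Yv)$, invoking ``conditioning reduces entropy'' and then the Markov property, which is exactly your chain-rule identity unpacked at the entropy level (non-negativity of $I(V;\Yv\mid\hat{V})$ is the conditioning step, and $I(V;\hat{V}\mid\Yv)=0$ is the Markov step). The paper likewise treats parts (ii) and (iii) as immediate special cases of (i), just as you do.
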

% \noindent The proof is given in Appendix

\noindent We will use the first part when $\Xv$ is absent or deterministic, the second part for random non-adaptive $\Xv$, and the third when the elements of $\Xv$ can be chosen adaptively based on the past samples ({\em cf.} Section \ref{sec:overview}).

% In the literature, the terminology {\em Fano's inequality} often refers to the weakened version of Theorem \ref{thm:Fano} obtained by applying $I(V;\hat{V}) \le I(V;\Yv)$.  However, we prefer to state the data processing step separately to highlight the three different ways (at least) that it can be applied.

\subsection{Tensorization}

One of the most useful properties of mutual information is {\em tensorization}: Under suitable conditional independence assumptions, mutual information terms containing length-$n$ sequences (e.g., $\Yv = (Y_1,\dotsc,Y_n)$) can be upper bounded by a sum of $n$ mutual information terms, the $i$-th of which contains the corresponding entry of each associated vector (e.g., $Y_i$).  Thus, we can reduce a complicated mutual information term containing sequences to a sum of simpler terms containing individual elements.  The following lemma provides some of the most common scenarios in which such tensorization can be performed.

\begin{lem} \label{lem:mi_tensorization}
    {\em (Tensorization of mutual information)}
    (i) If the entries of $\Yv = (Y_1,\dotsc,Y_n)$ are conditionally independent given $\fV$, then
    \begin{equation}
        I(\fV;\Yv) \le \sum_{i=1}^n I(V; Y_i).
    \end{equation}
    (ii) If the entries of $\Yv$ are conditionally independent given $(V,\Xv)$, and $Y_i$ depends on $(V,\Xv)$ only through $(V,X_i)$, then
    \begin{equation}
        I(\fV; \Yv | \Xv) \le \sum_{i=1}^n I(V ; Y_i | X_i).
    \end{equation}
    (iii) If, in addition to the assumptions in part (ii), $Y_i$ depends on $(V,X_i)$ only through $U_i = \psi_i(V,X_i)$ for some deterministic function $\psi_i$, then
    \begin{equation}
        I(\fV; \Yv | \Xv) \le \sum_{i=1}^n I(U_i ; Y_i).
    \end{equation}
\end{lem}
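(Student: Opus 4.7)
The plan is to prove each of the three parts via the entropy decomposition $I = H - H(\cdot|\cdot)$, combined with the chain rule, the subadditivity of entropy, and the fact that conditioning reduces entropy. None of the three parts requires anything beyond these standard information-theoretic identities, so I do not anticipate a genuine obstacle; the main care is in tracking which conditioning events are active and when the structural hypotheses (conditional independence of the $Y_i$'s, and the ``depends only through'' assumption) get invoked.

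For part (i), I would write $I(V;\Yv) = H(\Yv) - H(\Yv|V)$. The conditional independence of $Y_1,\dotsc,Y_n$ given $V$ turns the second term into $\sum_{i=1}^n H(Y_i|V)$ exactly, via the chain rule. The first term is bounded by subadditivity, $H(\Yv) \le \sum_{i=1}^n H(Y_i)$. Subtracting gives the claim.

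For part (ii), the argument is the analogous one with $\Xv$ in the conditioning. Write $I(V;\Yv|\Xv) = H(\Yv|\Xv) - H(\Yv|V,\Xv)$. Applying the chain rule and then dropping the conditioning on past $Y_j$'s (``more conditioning reduces entropy'') yields $H(\Yv|\Xv) \le \sum_i H(Y_i|\Xv) \le \sum_i H(Y_i|X_i)$. For the second term, the conditional independence of the $Y_i$'s given $(V,\Xv)$ gives $H(\Yv|V,\Xv) = \sum_i H(Y_i|V,\Xv)$, and the structural assumption that $Y_i$ depends on $(V,\Xv)$ only through $(V,X_i)$ collapses this to $\sum_i H(Y_i|V,X_i)$. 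Subtraction yields $\sum_i I(V;Y_i|X_i)$.

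For part (iii), I would first invoke part (ii) to reduce to showing, term by term, that $I(V;Y_i|X_i) \le I(U_i;Y_i)$. Since $U_i = \psi_i(V,X_i)$ is a deterministic function of $(V,X_i)$, conditioning on $(V,X_i)$ is at least as strong as conditioning on $U_i$; combined with the hypothesis that $Y_i$ depends on $(V,X_i)$ only through $U_i$, we get $H(Y_i|V,X_i) = H(Y_i|U_i)$. Then $I(V;Y_i|X_i) = H(Y_i|X_i) - H(Y_i|U_i) \le H(Y_i) - H(Y_i|U_i) = I(U_i;Y_i)$, where the inequality again uses that conditioning reduces entropy. Summing over $i$ completes the proof.
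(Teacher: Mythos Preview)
Your proposal is correct and matches the paper's proof essentially step for step: the paper also proves part (ii) first via the decomposition $H(\Yv|\Xv) - H(\Yv|V,\Xv)$ with sub-additivity on the first term and conditional independence plus the ``depends only through'' reduction on the second, then obtains part (i) as a special case. For part (iii) the paper writes the single line $I(V;Y_i|X_i) \le I(V,X_i;Y_i) = I(U_i;Y_i)$, which unpacks to exactly your entropy argument $H(Y_i|X_i) - H(Y_i|U_i) \le H(Y_i) - H(Y_i|U_i)$.
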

\noindent The proof is based on the sub-additivity of entropy, along with the conditional independence assumptions given.  We will use the first part of the lemma when $\Xv$ is absent or deterministic, and the second and third parts for random non-adaptive $\Xv$.  When $\Xv$ can be chosen adaptively based on the past samples ({\em cf.} Section \ref{sec:overview}), the following variant is used.
%\begin{proof}
%
%\end{proof}
% \noindent The proof is given in Appendix \ref{sec:pf_tens}.

% In the second part of Lemma \ref{lem:dpi} (which combines naturally with the second part of Lemma \ref{lem:mi_tensorization}), we assume that $V \to \Yv \to \hat{V}$ conditioned on the {\em entire sequence} $\Xv$.  This condition holds in cases that $\Xv$ is fully specified {\em in advance}, i.e., before collecting the samples $\Yv$.  In many applications of interest, one has the additional freedom of {\em adaptivity}: The $i$-th element $X_i$ of $\Xv$ can be designed {\em based on the history} $X_1^{i-1} = (X_1,\dotsc,X_{i-1})$ and $Y_1^{i-1} = (Y_1,\dotsc,Y_{i-1})$.  In such cases, Lemma \ref{lem:mi_tensorization} no longer applies directly, but one has the following analogous result.

\begin{lem} \label{lem:mi_adaptive}
    {\em (Tensorization of mutual information for adaptive settings)}
    (i) If $X_i$ is a function of $(X_1^{i-1},Y_1^{i-1})$, and $Y_i$ is conditionally independent of $(X_1^{i-1},Y_1^{i-1})$ given $(V,X_i)$, then
    \begin{equation}
         I(V;\Xv,\Yv) \le  \sum_{i=1}^n I(\fV; Y_i | X_i). \label{eq:mi_adaptive}
    \end{equation}
    (ii)  If, in addition to the assumptions in part (i), $Y_i$ depends on $(V,X_i)$ only through $U_i = \psi_i(V,X_i)$ for some deterministic function $\psi_i$, then
    \begin{equation}
        I(V;\Xv,\Yv) \le \sum_{i=1}^n I(U_i;Y_i).
    \end{equation}
\end{lem}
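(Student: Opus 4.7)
The plan is to decompose $I(V; \Xv, \Yv)$ via the chain rule one step at a time, and then use the two hypotheses in the lemma to simplify each per-step contribution.

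For part (i), I would first write
$$I(V; \Xv, \Yv) = \sum_{i=1}^n \bigl[ I(V; X_i \mid X_1^{i-1}, Y_1^{i-1}) + I(V; Y_i \mid X_1^i, Y_1^{i-1}) \bigr].$$
The first summand vanishes because $X_i$ is a deterministic function of $(X_1^{i-1}, Y_1^{i-1})$, so only the second contributes. It therefore suffices to show $I(V; Y_i \mid X_1^i, Y_1^{i-1}) \le I(V; Y_i \mid X_i)$ for each $i$. Expanding both sides as differences of entropies, the conditional independence assumption $Y_i \perp (X_1^{i-1}, Y_1^{i-1}) \mid (V, X_i)$ forces the two ``given $V$'' entropies to coincide, i.e.\ $H(Y_i \mid V, X_1^i, Y_1^{i-1}) = H(Y_i \mid V, X_i)$, after which the remaining inequality $H(Y_i \mid X_1^i, Y_1^{i-1}) \le H(Y_i \mid X_i)$ is just the fact that conditioning reduces entropy. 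Summing over $i$ then yields \eqref{eq:mi_adaptive}.

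For part (ii), I would start from the result of part (i) and establish the pointwise bound $I(V; Y_i \mid X_i) \le I(U_i; Y_i)$. The key observation is that the extra hypothesis means $P(Y_i \mid V, X_i) = Q_i(Y_i \mid U_i)$ for some kernel $Q_i$ depending on $(V, X_i)$ only through $U_i = \psi_i(V, X_i)$. Averaging this identity over the pairs $(v, x_i)$ with $\psi_i(v, x_i) = u$ gives $P(Y_i \mid U_i = u) = Q_i(\cdot \mid u)$ as well, so $H(Y_i \mid V, X_i) = H(Y_i \mid U_i)$. Writing $I(V; Y_i \mid X_i) = H(Y_i \mid X_i) - H(Y_i \mid U_i)$ and $I(U_i; Y_i) = H(Y_i) - H(Y_i \mid U_i)$, the desired pointwise inequality again reduces to $H(Y_i \mid X_i) \le H(Y_i)$.

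There is no serious obstacle here; the only subtlety — which I would spell out carefully in the final write-up — is in part (ii), where one must verify that $H(Y_i \mid U_i)$ really equals $H(Y_i \mid V, X_i)$ and not merely $H(Y_i \mid V, X_i, U_i)$. This rests on the fact that the conditional law $Q_i(\cdot \mid u)$ depends only on $u$, so conditioning further on any $(V, X_i)$ compatible with $U_i = u$ provides no additional information about $Y_i$. With that identity in hand, both parts reduce to the chain rule for mutual information combined with the ``conditioning reduces entropy'' inequality, with no further machinery required.
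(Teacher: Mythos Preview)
Your proposal is correct and follows essentially the same route as the paper: chain rule, drop the $X_i$-term by determinism, use the conditional-independence assumption to equate $H(Y_i\mid V,X_1^i,Y_1^{i-1})$ with $H(Y_i\mid V,X_i)$, and finish with ``conditioning reduces entropy''. For part (ii) the paper packages your entropy identity as the single step $I(V;Y_i\mid X_i)\le I(V,X_i;Y_i)=I(U_i;Y_i)$, but this is exactly what your entropy decomposition unpacks.
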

\noindent The proof is based on the chain rule for mutual information, i.e., $I(V;\Xv,\Yv) = \sum_{i=1}^n I(X_i, Y_i; \fV \,|\, X_1^{i-1}, Y_1^{i-1})$, as well as suitable simplifications via the conditional independence assumptions.
%\begin{proof}
%    
%\end{proof}
% \noindent The proof is given in Appendix \ref{sec:pf_adaptive}.

\begin{rem}
    The mutual information bounds in Lemma \ref{lem:mi_adaptive} are analogous to those used in the problem of communication with feedback \cite[Sec.~7.12]{Cov01}.  A key difference is that in the latter setting, the channel input $X_i$ is a function of $(V,X_1^{i-1},Y_1^{i-1})$, with $V$ representing the message.  In statistical estimation problems, the quantity $V$ being estimated is typically unknown to the decision-maker,  so the input $X_i$ is only a function of $(X_1^{i-1},Y_1^{i-1})$
\end{rem}

\begin{rem}
    Lemma \ref{lem:mi_adaptive} should be applied with care, since even if $V$ is uniform on some set {\em a priori}, it may not be uniform conditioned on $X_i$.  This is because in the adaptive setting, $X_i$ depends on $Y_1^{i-1}$, which in turn depends on $V$.
\end{rem}

\subsection{KL Divergence Based Bounds}

By definition, the mutual information is the KL divergence between the joint distribution and the product of marginals, $I(V;Y) = D(P_{VY} \| P_V \times P_Y)$, and can equivalently be viewed as a conditional divergence $I(V;Y) = D(P_{Y|V} \| P_Y | P_V)$.  Viewing the mutual information in this way leads to a variety of useful bounds in terms of related KL divergence quantities, as the following lemma shows.

\begin{lem} \label{lem:mi_kl}
    {\em (KL divergence based bounds)}
    Let $P_{V}$, $P_{Y}$, and $P_{Y|V}$ be the marginal distributions corresponding to a pair $(V,Y)$, where $V$ is discrete.  For any auxiliary distribution $Q_{Y}$, we have
    \begin{align}
        I(V;Y) 
            &= \sum_{v} P_{V}(v) D\big( P_{Y|V}(\cdot\,|\,v) \,\big\|\, P_{Y}\big) \label{eq:Aux1} \\
            & \le \sum_{v} P_{V}(v) D\big( P_{Y|V}(\cdot\,|\,v) \,\big\|\, Q_{Y}\big) \label{eq:Aux2} \\
            &\le \max_{v} D\big( P_{Y|V}(\cdot\,|\,v) \,\big\|\, Q_{Y}\big), \label{eq:Aux3}
    \end{align}
    and in addition,
    \begin{align}
        I(V;Y) 
            &\le \sum_{v,v'} P_{V}(v) P_{V}(v')  D\big( P_{Y|V}(\cdot\,|\,v) \,\big\|\, P_{Y}(\cdot\,|\,v') \big) \label{eq:TwoThetas1} \\ 
            &\le \max_{v,v'} D\big( P_{Y|V}(\cdot\,|\,v) \,\big\|\, P_{Y|V}(\cdot\,|\,v') \big). \label{eq:TwoThetas2}
    \end{align}
\end{lem}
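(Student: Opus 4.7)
The plan is to prove the five displayed inequalities in order, since each one flows naturally from the preceding identities.

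First, I would establish \eqref{eq:Aux1} directly from the definition of mutual information: writing $I(V;Y) = D(P_{VY} \| P_V \times P_Y)$ and expanding the outer sum over $v$ gives $I(V;Y) = \sum_v P_V(v) \sum_y P_{Y|V}(y|v) \log\frac{P_{Y|V}(y|v)}{P_Y(y)}$, which is exactly $\sum_v P_V(v) D\bigl(P_{Y|V}(\cdot|v) \,\|\, P_Y\bigr)$.

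Next, for \eqref{eq:Aux2}, I would compute the gap between the right-hand sides of \eqref{eq:Aux1} and \eqref{eq:Aux2}. The key observation is that
\begin{equation*}
\sum_v P_V(v) D\bigl(P_{Y|V}(\cdot|v) \,\|\, Q_Y\bigr) - \sum_v P_V(v) D\bigl(P_{Y|V}(\cdot|v) \,\|\, P_Y\bigr) = \sum_y \Bigl(\sum_v P_V(v) P_{Y|V}(y|v)\Bigr) \log \frac{P_Y(y)}{Q_Y(y)},
\end{equation*}
and the inner sum over $v$ collapses to $P_Y(y)$, producing $D(P_Y \| Q_Y) \ge 0$ by Gibbs' inequality. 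This yields \eqref{eq:Aux2}, and \eqref{eq:Aux3} is then immediate by upper bounding the average in $v$ by the maximum.

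For \eqref{eq:TwoThetas1}, I would apply \eqref{eq:Aux2} with the choice $Q_Y = P_{Y|V}(\cdot|v')$ for each fixed $v' \in \Vc$. Since $I(V;Y)$ is a constant that does not depend on $v'$, taking a convex combination of these bounds with weights $P_V(v')$ preserves the inequality and produces the double sum in \eqref{eq:TwoThetas1}. (An equivalent derivation goes directly through Jensen's inequality applied to $-\log P_Y(y) = -\log \sum_{v'} P_V(v') P_{Y|V}(y|v')$.) Finally, \eqref{eq:TwoThetas2} follows by again bounding the double average by the maximum over $(v,v')$.

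There is no real obstacle here: each bound is a routine manipulation of KL divergences combined with either Gibbs' inequality or an average-to-max bound. The only step that requires any care is ensuring that the auxiliary distribution $Q_Y$ in \eqref{eq:Aux2} is absolutely continuous with respect to $P_{Y|V}(\cdot|v)$ for each $v$ with $P_V(v)>0$, so that the relevant KL divergences are well-defined; otherwise the bounds hold trivially since the right-hand sides are $+\infty$.
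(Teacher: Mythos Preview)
Your proposal is correct and follows essentially the same approach as the paper: the paper also derives \eqref{eq:Aux1} from the definition of mutual information, obtains \eqref{eq:Aux2} by observing that the difference between the two sides equals $D(P_Y\|Q_Y)\ge 0$, deduces \eqref{eq:TwoThetas1} by choosing $Q_Y = P_{Y|V}(\cdot\,|\,v')$ in \eqref{eq:Aux2}, and dismisses \eqref{eq:Aux3} and \eqref{eq:TwoThetas2} as trivial average-to-max bounds. Your write-up is slightly more explicit (and your remark on absolute continuity is a reasonable technical caveat), but the underlying argument is identical.
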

\begin{proof}
    We obtain \eqref{eq:Aux1} from the definition of mutual information, and \eqref{eq:Aux2} from the fact that $\EE\big[ \log\frac{P_{Y|V}(Y|V)}{P_{Y}(Y)}\big] = \EE\big[ \log\frac{P_{Y|V}(Y|V)}{Q_{Y}(Y)}\big] - \EE\big[ \log\frac{P_{Y}(Y)}{Q_{Y}(Y)}\big]$; the second term here is a KL divergence, and is therefore non-negative.  We obtain \eqref{eq:TwoThetas1} from \eqref{eq:Aux2} by noting that $Q_{Y}$ can be chosen to be any of the $P_{Y}(\cdot\,|\,v')$, and the remaining inequalities \eqref{eq:Aux3} and \eqref{eq:TwoThetas2} are trivial.
\end{proof}
% \noindent The proof is given in Appendix \ref{sec:pf_kl}.

\noindent The upper bounds in \eqref{eq:Aux2}--\eqref{eq:TwoThetas2} are closely related, and often essentially equivalent in the sense that they lead to very similar converse bounds.  In the authors' experience, it is usually slightly simpler to choose a suitable auxiliary distribution $Q_Y$ and apply \eqref{eq:Aux3}, rather than bounding the pairwise divergences as in \eqref{eq:TwoThetas2}.  Examples will be given in Sections \ref{sec:apps_discrete} and \ref{sec:apps_cont}.

\begin{rem} \label{rem:kl_bounds}
    We have used the generic notation $Y$ in Lemma \ref{lem:mi_kl}, but in applications this may represent either the entire vector $\Yv$, or a single one of its entries $Y_i$.  Hence, the lemma may be used to bound $I(V;\Yv)$ directly, or one may first apply tensorization and then use the lemma to bound each $I(V;Y_i)$.
\end{rem}

\begin{rem} \label{rem:kl_bounds_cond}
    Lemma \ref{lem:mi_kl} can also be used to bound {\em conditional} mutual information terms such as $I(V;Y|X)$.  Conditioned on any $X = x$, we can upper bound $I(V;Y|X=x)$ using Lemma \ref{lem:mi_kl}, with an auxiliary distribution $Q_{Y|X=x}$ that may depend on $x$.  For instance, doing this for \eqref{eq:Aux3} and then averaging over $X$, we obtain for any $Q_{Y|X}$ that
        \begin{align}
            I(V;Y|X) 
                &\le \max_{v} D\big( P_{Y|X,V}(\cdot\,|\,\cdot,v) \,\big\|\, Q_{Y|X} | P_{X}\big) \\
                &\le \max_{x,v} D\big( P_{Y|X,V}(\cdot\,|\,x,v) \,\big\|\, Q_{Y|X}(\cdot\,|\,x) \big).
        \end{align}
\end{rem}

The bound \eqref{eq:Aux3} in Lemma \ref{lem:mi_kl} is useful when there exists a single auxiliary distribution $Q_Y$ that is ``close'' to each $P_{Y|V}(\cdot|v)$ in KL divergence, i.e., $D\big( P_{Y|V}(\cdot\,|\,v) \,\big\|\, Q_{Y}\big)$ is small.   It is natural to extend this idea by introducing {multiple} auxiliary distributions, and only requiring that {any one of them} is close to a given $P_{Y|V}(\cdot|v)$.  This can be viewed as ``covering''  the conditional distributions $\{P_{Y|V}(\cdot|v)\}_{v\in\Vc}$ with ``KL divergence balls'', and we will return to this viewpoint in Section \ref{sec:global_local}.

\begin{lem} \label{lem:Covering}
    {\em (Mutual information bound via covering)} 
    Under the setup of Lemma \ref{lem:mi_kl}, suppose there exist $N$ distributions $Q_1(y),\dotsc,Q_N(y)$ such that for all $v$ and some $\epsilon > 0$, it holds that
    \begin{equation}
        \min_{j=1,\dotsc,N} D\big( P_{Y|V}(\cdot\,|\,v) \,\big\|\, Q_{j}\big) \le \epsilon. \label{eq:CoveringCond}
    \end{equation}
    Then we have
    \begin{equation}
        I(V;Y) \le \log N + \epsilon. \label{eq:CoveringBound}
    \end{equation}
\end{lem}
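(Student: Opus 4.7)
The plan is to apply the bound \eqref{eq:Aux3} from Lemma \ref{lem:mi_kl} with an auxiliary distribution $Q_Y$ carefully chosen as a \emph{uniform mixture} of the $N$ covering distributions. Specifically, I would set
\begin{equation}
    Q_Y(y) = \frac{1}{N}\sum_{j=1}^{N} Q_j(y).
\end{equation}
The intuition is that since every $P_{Y|V}(\cdot|v)$ is close (in KL) to at least one of the $Q_j$'s, it should also be close to their uniform mixture, up to a $\log N$ penalty for not knowing a priori which component of the mixture is the relevant one.

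The main computation is to show that for every $v \in \Vc$,
\begin{equation}
    D\big(P_{Y|V}(\cdot\,|\,v) \,\big\|\, Q_Y\big) \le \log N + \epsilon.
\end{equation}
To see this, let $j^\ast(v) \in \argmin_{j} D\big(P_{Y|V}(\cdot\,|\,v) \,\big\|\, Q_j\big)$, which satisfies $D(P_{Y|V}(\cdot|v) \| Q_{j^\ast(v)}) \le \epsilon$ by the covering assumption \eqref{eq:CoveringCond}. Since $Q_Y(y) \ge \frac{1}{N} Q_{j^\ast(v)}(y)$ pointwise, monotonicity of the logarithm gives
\begin{equation}
    D\big(P_{Y|V}(\cdot\,|\,v) \,\big\|\, Q_Y\big) \le D\Big(P_{Y|V}(\cdot\,|\,v) \,\Big\|\, \tfrac{1}{N} Q_{j^\ast(v)}\Big) = \log N + D\big(P_{Y|V}(\cdot\,|\,v) \,\big\|\, Q_{j^\ast(v)}\big) \le \log N + \epsilon,
\end{equation}
where the equality uses that replacing $Q$ by $\alpha Q$ in a KL divergence shifts the value by $-\log \alpha$ (for any positive constant $\alpha$; here $\alpha = 1/N$, noting that normalization is not required for the inequality $Q_Y \ge \tfrac{1}{N} Q_{j^\ast(v)}$ to be used this way).

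Finally, I would take the maximum over $v$ and invoke \eqref{eq:Aux3} to conclude $I(V;Y) \le \log N + \epsilon$, which is the desired bound \eqref{eq:CoveringBound}. There is no real obstacle here: the proof is essentially a one-line mixture trick, and the only subtlety worth flagging is that the bound works because of the pointwise inequality $Q_Y \ge \frac{1}{N} Q_{j^\ast(v)}$, so one does not need to worry about how $j^\ast(v)$ varies with $v$ (the choice can be made separately for each $v$ inside the bound). This construction is also the standard way in which covering arguments interface with Fano-type impossibility results, and it foreshadows the global-versus-local discussion promised in Section \ref{sec:global_local}.
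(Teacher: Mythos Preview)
Your proof is correct and follows essentially the same approach as the paper: both choose $Q_Y = \frac{1}{N}\sum_j Q_j$, apply \eqref{eq:Aux3}, and bound each $D(P_{Y|V}(\cdot|v)\|Q_Y)$ by lower bounding the mixture pointwise by $\frac{1}{N}Q_{j^\ast(v)}$. The only cosmetic difference is that the paper writes out the expectation explicitly rather than invoking a KL divergence to the unnormalized measure $\frac{1}{N}Q_{j^\ast(v)}$, thereby sidestepping the normalization caveat you flagged.
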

%\begin{proof}
%
%\end{proof}
\noindent The proof is based on applying \eqref{eq:Aux2} with $Q_{Y}(y) = \frac{1}{N}\sum_{j=1}^N Q_j(y)$, and then lower bounding this summation over $j$ by the value $j^*(v)$ achieving the minimum in \eqref{eq:CoveringCond}. %  The details are left as an exercise.
% The proof is given in Appendix \ref{sec:pf_covering}.
We observe that setting $N = 1$ in Lemma \ref{lem:Covering} simply yields \eqref{eq:Aux3}.  % Moreover, similarly to Remark \ref{rem:kl_bounds}, we can apply the lemma with the vector of samples $\Yv$ playing the role of $Y$.

% Typical applications of Lemma \ref{lem:Covering} directly bound the entire sequence of observations $I(V;\Yv)$, rather than a single observation $I(V;Y_i)$.

\subsection{Relations Between KL Divergence and Other Measures}

As evidenced above, the KL divergence plays a crucial role in applications of Fano's inequality.  In some cases, directly characterizing the KL divergence can still be difficult, and it is more convenient to bound it in terms of other divergences or distances.  The following lemma gives a few simple examples of such relations; the reader is referred to \cite{Sas16} for a more thorough treatment.

\begin{lem} \label{lem:relations}
    {\em (Relations between divergence measures)}
    Fix two distributions $P$ and $Q$, and consider the KL divergence $D(P\|Q) = \EE_P\big[ \log \frac{P(Y)}{Q(Y)}\big]$, total variation $\dTV(P,Q) = \frac{1}{2}\EE_Q\big[ \big| \frac{P(Y)}{Q(Y)} - 1 \big| \big]$, squared Hellinger distance $H^2(P,Q) = \EE_Q\big[ \big(\sqrt{\frac{P(Y)}{Q(Y)}} - 1\big)^2\big]$, and  $\chi^2$-divergence $\chi^2(P\|Q) = \EE_Q\big[ \big(\frac{P(Y)}{Q(Y)}-1\big)^2\big]$. We have:
    \begin{itemize}
        \item {\em (KL vs.~TV)} $D(P\|Q) \ge 2 \dTV(P,Q)^2$, whereas if  $P$ and $Q$ are probability mass functions and each entry of $Q$ is at least $\eta > 0$, then $D(P\|Q) \le \frac{2}{\eta} \dTV(P,Q)^2$.
        \item {\em (Hellinger vs.~TV)} $\frac{1}{2}H^2(P,Q) \le \dTV(P,Q) \le H(P,Q)\sqrt{1 - \frac{H^2(P,Q)}{4}}$;
        \item {\em (KL vs.~$\chi^2$)} $D(P\|Q) \le \log(1 + \chi^2(P\|Q)) \le \chi^2(P\|Q)$.
    \end{itemize}
\end{lem}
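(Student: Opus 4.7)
The plan is to prove the three pairs of inequalities in turn. Throughout, I write sums as if $P,Q$ are probability mass functions; the density case is analogous with integrals.

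For the KL versus TV bounds, the forward inequality is Pinsker's, which I would prove by reduction to the binary case. Letting $A = \{y : P(y) \ge Q(y)\}$, $p = P(A)$, and $q = Q(A)$, the data processing inequality applied to the binary statistic $\openone\{Y \in A\}$ yields $D(P\|Q) \ge d(p\|q)$, while $\dTV(P,Q) = p - q$. It then suffices to establish the scalar inequality $d(p\|q) \ge 2(p-q)^2$. Fixing $p$ and setting $f(q) = d(p\|q) - 2(p-q)^2$, a direct computation gives $f'(q) = (q-p)\bigl[\tfrac{1}{q(1-q)} - 4\bigr]$, which has the same sign as $q-p$ because $q(1-q) \le \tfrac{1}{4}$; hence $f$ attains its minimum value $f(p) = 0$ at $q = p$, yielding the claim. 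For the reverse direction under $Q(y) \ge \eta$, I would chain the third-bullet inequality $D(P\|Q) \le \chi^2(P\|Q)$ with
\begin{equation*}
    \chi^2(P\|Q) = \sum_y \frac{(P(y)-Q(y))^2}{Q(y)} \le \frac{1}{\eta}\sum_y (P(y)-Q(y))^2 \le \frac{1}{\eta}\Bigl(\sum_y |P(y)-Q(y)|\Bigr)^2 = \frac{4}{\eta}\dTV(P,Q)^2.
\end{equation*}
This chain yields the constant $4/\eta$ rather than the stated $2/\eta$; to sharpen it, I would refine the bound $D \le \chi^2$ via a second-order Taylor expansion of $x \mapsto x\log x$ on the bounded interval of likelihood ratios, appealing to one of the standard reverse-Pinsker results such as those compiled in \cite{Sas16}.

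For the Hellinger versus TV bounds, the starting point is the factorization
\begin{equation*}
    |P(y) - Q(y)| = \bigl|\sqrt{P(y)} - \sqrt{Q(y)}\bigr| \cdot \bigl(\sqrt{P(y)} + \sqrt{Q(y)}\bigr),
\end{equation*}
together with the identity $H^2(P,Q) = \sum_y \bigl(\sqrt{P(y)} - \sqrt{Q(y)}\bigr)^2 = 2 - 2\sum_y \sqrt{P(y)Q(y)}$. For the lower bound, $\sqrt{P}+\sqrt{Q} \ge |\sqrt{P}-\sqrt{Q}|$ gives $|P-Q| \ge (\sqrt{P}-\sqrt{Q})^2$, and summing produces $2\dTV \ge H^2$. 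For the upper bound, Cauchy--Schwarz applied to the factorization gives
\begin{equation*}
    2\dTV(P,Q) \le \sqrt{\sum_y\bigl(\sqrt{P(y)}-\sqrt{Q(y)}\bigr)^2}\,\sqrt{\sum_y\bigl(\sqrt{P(y)}+\sqrt{Q(y)}\bigr)^2} = H(P,Q)\,\sqrt{4 - H^2(P,Q)},
\end{equation*}
where the last step uses $\sum_y (\sqrt{P}+\sqrt{Q})^2 = 2 + 2\sum_y \sqrt{PQ} = 4 - H^2$, which rearranges to the stated bound.

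Finally, the KL versus $\chi^2$ bounds follow from a single application of Jensen's inequality to $\log$:
\begin{equation*}
    D(P\|Q) = \EE_P\Bigl[\log \tfrac{P(Y)}{Q(Y)}\Bigr] \le \log \EE_P\Bigl[\tfrac{P(Y)}{Q(Y)}\Bigr] = \log \sum_y \frac{P(y)^2}{Q(y)} = \log\bigl(1 + \chi^2(P\|Q)\bigr),
\end{equation*}
using the algebraic identity $\sum_y P(y)^2/Q(y) = 1 + \sum_y (P(y)-Q(y))^2/Q(y)$, followed by $\log(1+x) \le x$. The main obstacle I foresee is obtaining the sharp constant $2/\eta$ in the reverse Pinsker inequality; all other steps reduce to elementary uses of Jensen and Cauchy--Schwarz.
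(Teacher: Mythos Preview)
The paper does not prove this lemma at all; it simply states the inequalities and refers the reader to \cite{Sas16} for details. So there is no paper proof to compare against, and your proposal already goes further than the paper itself.

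Your arguments for Pinsker's inequality, for both Hellinger--TV comparisons, and for the KL--$\chi^2$ bound are standard and correct. The only substantive issue is the one you flag yourself: the chain $D \le \chi^2 \le \tfrac{1}{\eta}\sum_y (P(y)-Q(y))^2 \le \tfrac{4}{\eta}\dTV^2$ loses a factor of two relative to the stated $2/\eta$, and the loss occurs at the crude step $\sum_y a_y^2 \le (\sum_y |a_y|)^2$. A generic second-order Taylor refinement of $D \le \chi^2$ will not recover this factor, since the inequality $x\log x - x + 1 \le (x-1)^2$ is already the sharp pointwise quadratic bound; getting $2/\eta$ requires a different pointwise inequality (of the reverse-Pinsker type in \cite{Sas16}) rather than routing through $\chi^2$. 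Since the paper's own treatment is to cite \cite{Sas16}, deferring the sharp constant to that reference is entirely consistent with what the paper does.
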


\section{Applications -- Discrete Settings} \label{sec:apps_discrete}

In this section, we provide two examples of statistical estimation problems in which the quantity being estimated is discrete: group testing and graphical model selection.  Our goal is not to treat these problems comprehensively, but rather, to study particular instances that permit a simple analysis while still illustrating the key ideas and tools introduced in the previous sections.   We consider the {\em high-dimensional} setting, in which the underlying number of parameters being estimated is much higher than the number of measurements.  To simplify the final results, we will often write them using the asymptotic notation $o(1)$ for asymptotically vanishing terms, but non-asymptotic variants are easily inferred from the proofs.

\subsection{Group Testing}

The group testing problem consists of determining a small subset of ``defective'' items within a larger set of items based on a number of pooled tests.  A given test contains some subset of the items, and the binary test outcome indicates, possibly in a noisy manner, whether or not {\em at least one} defective item was included in the test.  This problem has a history in medical testing \cite{Dor43}, and has regained significant attention following applications in communication protocols, pattern matching, database systems, and more.

In more detail, the setup is described as follows:
\begin{itemize}
    \item In a population of $p$ items, there are $k$ unknown {\em defective items}.  This defective set is denoted by $S \subseteq \{1,\dotsc,p\}$, and is assumed to be uniform on the set of ${p \choose k}$ subsets having cardinality $k$.  Hence, in this example, we are in the Bayesian setting with a uniform prior.  We focus on the {sparse} setting, in which $k \ll p$, i.e., defective items are rare.
    \item There are $n$ tests specified by a {\em test matrix} $\Xv \in \{0,1\}^{n \times p}$: The $(i,j)$-th entry of $\Xv$, denoted by $X_{ij}$, indicates whether item $j$ is included in test $i$.  We initially consider the {\em non-adaptive} setting, where $\Xv$ is chosen in advance.  We allow for this choice to be random; for instance, a common choice of random design is to let the entries of $\Xv$ be i.i.d.~Bernoulli random variables.
    \item To account for possible noise, we consider the following observation model:
    \begin{equation}
        Y_i = \bigg( \bigvee_{j \in S} X_{ij} \bigg) \oplus Z_i, \label{eq:gt_noisy}
    \end{equation}
    where $Z_i \sim \Bernoulli(\epsilon)$ for some $\epsilon \in \big[0,\frac{1}{2}\big)$, $\oplus$ denotes modulo-2 addition, and $\vee$ is the ``OR'' operation.  In the channel coding terminology, this corresponds to passing the noiseless test outcome $\bigvee_{j \in S} X_{ij}$ through a {binary symmetric channel}.   We assume that the noise variables $Z_i$ are independent of each other and of $\Xv$, and we define the vector of test outcomes $\Yv = (Y_1,\dotsc,Y_n)$.
    \item Given $\Xv$ and $\Yv$, a {decoder} forms an estimate $\hat{S}$ of $S$.  We initially consider the {exact recovery} criterion, in which the error probability is given by
    \begin{equation}
        \pe = \PP[\hat{S} \ne S], \label{eq:gt_exact_criterion}
    \end{equation}
    where the probability with respect to $S$, $\Xv$, and $\Yv$.
\end{itemize}
In the following subsections, we present several results and analysis techniques that are primarily drawn from \cite{Mal78,Ati12}.

\subsubsection{Exact Recovery with Non-Adaptive Testing}

Under the exact recovery criterion \eqref{eq:gt_exact_criterion}, we have the following lower bound on the required number of tests.  Recall that $H_2(\alpha) = \alpha\log\frac{1}{\alpha} + (1-\alpha)\log\frac{1}{1-\alpha}$ denotes the binary entropy function.

\begin{thm} \label{thm:gt_exact}
    {\em (Group testing with exact recovery)}
    Under the preceding noisy group testing setup, in order to achieve $\pe \le \delta$, it is necessary that
    \begin{equation}
        n \ge \frac{ k\log\frac{p}{k} }{ \log 2 - H_2(\epsilon) } (1 - \delta - o(1)) \label{eq:gt_exact}
        % n \ge \max\bigg\{ \frac{ \log{p \choose k} }{ \log 2 - H_2(\epsilon) }, \frac{ k\log (p -k + 1) }{ (1-2\epsilon)\log\frac{1-\epsilon}{\epsilon} } \bigg\} (1 - \delta - o(1)). \label{eq:gt_exact}
    \end{equation}
    as $p \to \infty$, possibly with $k \to \infty $ simultaneously.
\end{thm}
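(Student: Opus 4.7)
The plan is to apply Fano's inequality directly to the uniform prior on $S$, then control the mutual information $I(S;\hat{S})$ using the data processing inequality together with a simple entropy computation that exploits the binary symmetric noise structure in \eqref{eq:gt_noisy}. No explicit hard subset needs to be constructed here, because $S$ is already assumed uniform over the $\binom{p}{k}$ $k$-subsets; this plays the role of the multiple hypothesis test with $M = \binom{p}{k}$.

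First I would invoke the standard Fano bound \eqref{eq:Fano3} with $V = S$, yielding
\begin{equation}
\delta \ge \pe \ge 1 - \frac{I(S;\hat{S}) + \log 2}{\log \binom{p}{k}}.
\end{equation}
Since $S$ is independent of $\Xv$ (the non-adaptive test matrix is chosen without knowledge of $S$) and $S \to (\Xv,\Yv) \to \hat{S}$ is a Markov chain, the data processing inequality (Lemma \ref{lem:dpi}, part (iii)) gives $I(S;\hat{S}) \le I(S;\Xv,\Yv) = I(S;\Yv|\Xv)$, where the equality uses $I(S;\Xv)=0$.

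Next I would bound $I(S;\Yv|\Xv) = H(\Yv|\Xv) - H(\Yv|S,\Xv)$ by evaluating each term. Since $Y_i \in \{0,1\}$, we have $H(\Yv|\Xv) \le \sum_i H(Y_i|\Xv) \le n \log 2$. For the conditional entropy, the observation model \eqref{eq:gt_noisy} together with independence of the $Z_i$ implies that, conditioned on $(S,\Xv)$, the variable $Y_i$ is a deterministic function of $S,X_i$ XOR-ed with an independent $\Bernoulli(\epsilon)$, so $H(Y_i \mid S,\Xv,Y_1^{i-1}) = H_2(\epsilon)$ for every $i$. Applying the chain rule gives $H(\Yv|S,\Xv) = n H_2(\epsilon)$ exactly, and therefore
\begin{equation}
I(S;\Yv|\Xv) \le n\bigl(\log 2 - H_2(\epsilon)\bigr).
\end{equation}

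Substituting back into Fano and rearranging yields
\begin{equation}
n \ge \frac{(1-\delta)\log\binom{p}{k} - \log 2}{\log 2 - H_2(\epsilon)}.
\end{equation}
The final step is to bound $\log\binom{p}{k} \ge k\log(p/k)$, which is the elementary estimate $\binom{p}{k} \ge (p/k)^k$. In the high-dimensional regime $p \to \infty$ (possibly with $k \to \infty$), the additive $\log 2$ is negligible compared to $k\log(p/k)$, so it can be absorbed into the $o(1)$ term, giving the claimed bound. There is no real obstacle here; the only point requiring a bit of care is the tensorization of $H(\Yv|S,\Xv)$, where one must verify that the noise $Z_i$ being independent of $\Xv$ and of past $Z_j$'s allows each conditional entropy to reduce exactly to $H_2(\epsilon)$ rather than merely an upper bound.
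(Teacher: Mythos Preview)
Your proof is correct and follows essentially the same approach as the paper: reduce to multiple hypothesis testing with $V=S$, apply Fano's inequality, and bound the mutual information by $n(\log 2 - H_2(\epsilon))$ before invoking $\binom{p}{k}\ge (p/k)^k$. The only cosmetic difference is that the paper routes the mutual information bound through the tensorization lemma (Lemma~\ref{lem:mi_tensorization}(iii)) with $U_i=\bigvee_{j\in S}X_{ij}$ and then cites the BSC capacity, whereas you compute $H(\Yv|\Xv)-H(\Yv|S,\Xv)$ directly; these are the same calculation unpacked.
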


%We can translate this condition into a lower bound on $n$ by the second part of Lemma applying the first part of Lemma \ref{lem:mi_tensorization}.  Specifically, since $S \to (\Xv,\Yv) \to \hat{S}$ forms a Markov chain, applying the lemma with $(\Xv,\Yv)$ in place of $\Yv$ gives 
%\begin{align}
%    I(S;\hat{S}) 
%        \le \sum_{i=1}^n I(V;X_i,Y_i) \label{eq:gt_mi_ub1} \\
%        = n I(V;X,Y) \label{eq:gt_mi_ub2} \\
%        = n I(V;Y | X), \label{eq:gt_mi_ub3}
%\end{align}
%where \eqref{eq:gt_mi_ub1} holds with $(X,Y) = (X_i,Y_i)$ for an arbitrary fixed $i$ since the pairs $(X_i,Y_i)$ are identically distributed, and \eqref{eq:gt_mi_ub3} holds since $X$ is independent of $V$.  Substituting \eqref{eq:gt_mi_ub3} into \eqref{eq:gt_exact_mi} gives
%\begin{equation}
%    n \ge \frac{ (1-\delta)\log|\Sc| - \log 2 }{ I(S;Y|X) }. \label{eq:gt_exact_mi2}
%\end{equation}
\begin{proof}
    Since $S$ is discrete-valued, we can use the trivial reduction to multiple hypothesis testing with $V = S$.
    Applying Fano's inequality ({\em cf.}, Theorem \ref{thm:Fano}) with conditioning on $\Xv$ ({\em cf.}, Section \ref{sec:cond}), we obtain
    \begin{equation}
        I(S;\Yv|\Xv) \ge (1-\delta) \log{p \choose k} - \log 2, \label{eq:gt_exact_mi}
    \end{equation}
    where we have also upper bounded $I(S;\hat{S}|\Xv) \le I(S;\Yv|\Xv)$ using the data processing inequality ({\em cf.}, second part of Lemma \ref{lem:dpi}), which in turn uses the fact that $S \to \Yv \to \hat{S}$ conditioned on $\Xv$.
    
    Let $U_i = \bigvee_{j \in S} X_{ij}$ denote the hypothetical noiseless outcome.  Since the noise variables $\{Z_i\}_{i=1}^n$ are independent and $Y_i$ depends on $(S,\Xv)$ only through $U_i$ ({\em cf.}, \eqref{eq:gt_noisy}), we can apply tensorization ({\em cf.}, third part of Lemma \ref{lem:mi_tensorization}) to obtain
    \begin{align}
        I(S;\Yv|\Xv)
            &\le \sum_{i=1}^n I(U_i; Y_i) \label{eq:gt_exact1_mi1} \\
            &\le n\big(\log 2 - H_2(\epsilon)\big), \label{eq:gt_exact1_mi2}
    \end{align}
    where \eqref{eq:gt_exact1_mi2} follows since $Y_i$ is generated from $U_i$ according to a binary symmetric channel, which has capacity $\log 2 - H_2(\epsilon)$.  Substituting \eqref{eq:gt_exact1_mi2} and ${p \choose k} \ge \big(\frac{p}{k}\big)^k$ into \eqref{eq:gt_exact_mi} and rearranging, we obtain \eqref{eq:gt_exact}.
\end{proof}

Theorem \ref{thm:gt_exact} is known to be tight in terms of scaling laws whenever $\delta \in (0,1)$ is fixed and $k = o(p)$, and perhaps more interestingly, {tight including constant factors} as $\delta \to 0$ under the scaling $k = O(p^{\theta})$ for sufficiently small $\theta > 0$.  The matching achievability result in this regime can be proved using maximum-likelihood decoding \cite{Sca15b}.  However, achieving such a result using a computationally efficient decoder remains a challenging open problem.

\subsubsection{Approximate Recovery with Non-Adaptive Testing}

We now move to an approximate recovery criterion:  The decoder outputs a list $\Lc \subseteq \{1,\dotsc,p\}$ of cardinality $L \ge k$, and we require that at least a fraction $(1-\alpha )k$ of the defective items appear in the list, for some $\alpha  \in (0,1)$.  It follows that the error probability can be written as
\begin{equation}
    \pe(t) = \PP[d(S,\Lc) > t],
\end{equation}
where $d(S,\Lc) = |S \backslash \Lc|$, and $t = \alpha k$.  Notice that a higher value of $L$ means more non-defective items may be included in the list, whereas a higher value of $\alpha$ means more defective items may be absent.

\begin{thm} \label{thm:gt_partial}
    {\em (Group testing with approximate recovery)}
    Under the preceding noisy group testing setup with list size $L \ge k$, in order to achieve $\pe(\alpha k) \le \delta$ for some $\alpha  \in (0,1)$ (not depending on $p$), it is necessary that
    \begin{equation}
        n \ge  \frac{ (1-\alpha ) k \log\frac{p}{L} }{ \log 2 - H_2(\epsilon) } (1 - \delta - o(1)) \label{eq:gt_partial}
    \end{equation}
    as $p \to \infty$, $k \to \infty$ and $L \to \infty$ simultaneously with $L = o(p)$.
\end{thm}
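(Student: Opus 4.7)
The plan is to mimic the proof of Theorem \ref{thm:gt_exact} step by step, but replace the standard Fano inequality by its approximate-recovery version (Theorem \ref{thm:Partial}) and do a bit of combinatorial bookkeeping to evaluate the effective alphabet size. Take the trivial reduction $V = S$, uniform on the $\binom{p}{k}$ subsets of size $k$, and view the decoder's output $\hat V = \Lc$ as an element of the set $\hat{\Vc}$ of $L$-subsets of $\{1,\dots,p\}$, with $d(S,\Lc) = |S\setminus\Lc|$ and $t = \alpha k$. Conditioning on $\Xv$ (as in Section \ref{sec:cond} and Remark \ref{rem:cond_uses}) and applying the data processing inequality (Lemma \ref{lem:dpi}(ii), using $S\to\Yv\to\hat S$ given $\Xv$), the approximate-recovery Fano inequality \eqref{eq:Partial2} gives
\[
I(S;\Yv\,|\,\Xv) \;\ge\; (1-\pe(\alpha k))\,\log\frac{\binom{p}{k}}{\Nmax(\alpha k)} - \log 2.
\]

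Next I would evaluate $\Nmax(\alpha k)$ combinatorially. By symmetry, every $L$-subset $\hat v$ gives the same count; partitioning $S$ according to $j = |S\setminus \hat v|$ yields
\[
\Nmax(\alpha k) \;=\; \sum_{j=0}^{\lfloor \alpha k\rfloor}\binom{L}{k-j}\binom{p-L}{j}.
\]
Since $L = o(p)$ and $L \ge k$, the summand is increasing in $j$, so this is at most $(\alpha k + 1)\binom{L}{(1-\alpha)k}\binom{p-L}{\alpha k}$. Plugging in $\binom{p}{k}\ge (p/k)^k$ and the standard upper bound $\binom{n}{r}\le (en/r)^r$, after a short calculation the $\pm k\log k$ contributions cancel and the remaining cross terms combine into a single $-kH_2(\alpha)$ correction plus $\alpha k\log\frac{p}{p-L}$, giving
\[
\log\frac{\binom{p}{k}}{\Nmax(\alpha k)} \;\ge\; (1-\alpha)k\log\frac{p}{L} + \alpha k\log\frac{p}{p-L} - kH_2(\alpha) - k - O(\log k).
\]
Under the stated scaling ($L=o(p)$, $k\to\infty$), the second term vanishes, and since $\log(p/L)\to\infty$ while $H_2(\alpha) \le \log 2$ is a constant, the corrections are absorbed into a $(1-o(1))$ factor, yielding $\log\frac{\binom{p}{k}}{\Nmax(\alpha k)} \ge (1-\alpha)k\log(p/L)(1-o(1))$.

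The mutual information upper bound is verbatim the one used in the proof of Theorem \ref{thm:gt_exact}: setting $U_i = \bigvee_{j\in S} X_{ij}$, Lemma \ref{lem:mi_tensorization}(iii) and the BSC capacity $\log 2 - H_2(\epsilon)$ give
\[
I(S;\Yv\,|\,\Xv) \;\le\; n\bigl(\log 2 - H_2(\epsilon)\bigr).
\]
Chaining this with the Fano lower bound, using $\pe(\alpha k)\le\delta$, and absorbing the additive $-\log 2$ into the $o(1)$ term (which is valid because $k\log(p/L)\to\infty$) produces \eqref{eq:gt_partial} after dividing by $\log 2 - H_2(\epsilon)$.

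The main obstacle is Step 2: verifying that the asymptotic identity $\log\frac{\binom{p}{k}}{\Nmax(\alpha k)} = (1-\alpha)k\log(p/L)(1-o(1))$ holds \emph{with the right constant}. The $kH_2(\alpha)$ and $\alpha k\log\frac{p}{p-L}$ terms look potentially comparable to the main term and would spoil the bound if $\log(p/L)$ were bounded; it is precisely the hypothesis $L=o(p)$ that drives $\log(p/L)\to\infty$ and renders these corrections negligible. The hypothesis $L\to\infty$ (and $k\to\infty$) is what lets Stirling-type estimates for $\binom{L}{(1-\alpha)k}$ be applied cleanly so that the dominant-term approximation of the tail sum is tight up to polynomial factors that are absorbed in the $o(1)$.
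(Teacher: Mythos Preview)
Your proposal is correct and follows essentially the same approach as the paper: the same reduction $V=S$, the same conditional approximate-recovery Fano bound, the same combinatorial expression $\Nmax(\alpha k)=\sum_{j\le\lfloor\alpha k\rfloor}\binom{L}{k-j}\binom{p-L}{j}$ bounded by $(\alpha k+1)$ times its largest term, and the same tensorization/BSC-capacity bound on $I(S;\Yv\mid\Xv)$. Your asymptotic bookkeeping is in fact slightly more explicit than the paper's (which relegates the simplification to an appendix), and your observation that the $-kH_2(\alpha)-k$ correction is absorbed precisely because $L=o(p)$ forces $\log(p/L)\to\infty$ is exactly the point.
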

\begin{proof}
    We apply the approximate recovery version of Fano's inequality ({\em cf.}, Theorem \ref{thm:Partial}) with $d(S,\Lc) = |S \backslash \Lc|$ and $t = \alpha k$ as above.  For any $\Lc$ with cardinality $L$, the number of $S$ with $d(S,\Lc) \le \alpha k$ is given by
    $\Nmax(t) = \sum_{j=0}^{\lfloor \alpha k \rfloor} {p-L \choose j}{L \choose k-j}$,
    which follows by counting the number of ways to place $k-j$ defective items in $\Lc$, and the remaining $j$ defective items in the other $p - L$ entries.  Hence, using Theorem \ref{thm:Partial} with conditioning on $\Xv$ ({\em cf.}, Section \ref{sec:cond}), and applying the data processing inequality ({\em cf.}, second part of Lemma \ref{lem:dpi}), we obtain
    \begin{equation}
        I(S;\Yv|\Xv) \ge (1-\delta) \log\frac{{p \choose k}}{\sum_{j=0}^{\lfloor \alpha k \rfloor} {p-L \choose j}{L \choose k -j}} - \log 2. \label{eq:gt_partial_init}
    \end{equation}
    By upper bounding the summation by $\lfloor \alpha k \rfloor +1$ times the maximum value, and performing some asymptotic simplifications via the assumption $L = o(p)$, we can simplify the logarithm to $\big(k \log \frac{p}{L}\big)(1+o(1))$ \cite{Sca17}.  The theorem is then established by upper bounding the conditional mutual information using \eqref{eq:gt_exact1_mi2}.
\end{proof}

Theorem \ref{thm:gt_partial} matches Theorem \ref{thm:gt_exact} up to the factor of $1-\alpha$ and the replacement of $\log\frac{p}{k}$ by $\log\frac{p}{L}$, suggesting that approximate recovery provides a minimal reduction in the number of tests even for moderate values of $\alpha$ and $L$.  However, under approximate recovery, a near-matching achievability bound is known under the scaling $k = O(p^{\theta})$ {\em for all $\theta \in (0,1)$}, rather than only sufficiently small $\theta$ \cite{Sca15b}.

% is close to that in Theorem \ref{thm:gt_exact}, particularly when $\alpha$ is small and $L$ is not too large.  In particular, in the above-mentioned scaling regimes of the form $k = O(p^{\theta})$ where Theorem \ref{thm:gt_partial} is tight, we deduce that if $L = O(k)$, then {approximate recovery provides an asymptotic gain of at most a factor $1-\alpha$}.  However, approximate recovery has been shown to help significantly more in certain regimes with higher $k$ \cite{Sca17}.

% In particular, when $\Xv$ is restricted to have i.i.d.~Bernoulli entries, scaling regimes are known for which $n = \big(k\log_2\frac{p}{k}\big) (1+o(1))$ tests are sufficient for approximate recovery with $L = k$ and any fixed $\alpha \in (0,1)$, but insufficient for exact recovery 

\subsubsection{Adaptive Testing}

Next, we discuss the {adaptive testing} setting, in which a given input vector $X_i \in \{0,1\}^p$, corresponding to a single row of $\Xv$, is allowed to depend on the previous inputs and outcomes, i.e., $X_1^{i-1} = (X_1,\dotsc,X_{i-1})$ and $Y_1^{i-1} = (Y_1,\dotsc,Y_{i-1})$.  In fact, it turns out that Theorems \ref{thm:gt_exact} and \ref{thm:gt_partial} still apply in this setting.  Establishing this simply requires making the following modifications to the above analysis:
\begin{itemize}
    \item Apply the data processing inequality in the form of the {\em third} part of Lemma \ref{lem:dpi}, yielding \eqref{eq:gt_exact_mi} and \eqref{eq:gt_partial_init} with $I(S;\Xv,\Yv)$ in place of $I(S;\Yv|\Xv)$;
    \item Apply tensorization via Lemma \ref{lem:mi_adaptive} to deduce \eqref{eq:gt_exact1_mi1}--\eqref{eq:gt_exact1_mi2} with $I(S;\Xv,\Yv)$ in place of $I(S;\Yv|\Xv)$.
\end{itemize}
In the regimes where Theorems \ref{thm:gt_exact} and/or \ref{thm:gt_partial} are known to have matching upper bounds with non-adaptive designs, we can clearly deduce that {adaptivity provides no asymptotic gain}.  However, as with approximate recovery, adaptivity can significantly broaden the conditions under which matching achievability bounds are known, at least in the noiseless setting \cite{Bal13}.

% However, even when adaptivity does not help in an information-theoretic sense, its added freedom can be helpful for designing {\em practical} algorithms.  For instance, for noiseless group testing (i.e., $\epsilon = 0$), efficient and practical adaptive algorithms are known that match the converse, whereas only computationally intractable algorithms have been proved to do so in the non-adaptive case.

\subsubsection{Discussion: General Noise Models}

The preceding analysis can easily be extended to more general group testing models in which the observations $(Y_1,\dotsc,Y_n)$ are conditionally independent given $\Xv$.  A broad class of such models can be written in the form $(Y_i|N_i) \sim P_{Y|N}$, where $N_i = \sum_{j \in S} \openone\{X_{ij} = 1\}$ denotes the number of defective items in the $i$-th test.  In such cases, the preceding results hold true more generally when $\log 2 - H_2(\epsilon)$ is replaced by the capacity $\max_{P_N} I(N;Y)$ of the ``channel'' $P_{Y|N}$.

For certain models, we can obtain a better lower bound by applying a {\em genie argument}, along with the conditional form of Fano's inequality in Theorem \ref{thm:Conditional}.  Fix $\ell \in \{1,\dotsc,k\}$,  and suppose that a uniformly random subset $S^{(1)} \subseteq S$ of cardinality $k - \ell$ is revealed to the decoder.  This extra information can only make the group testing problem easier, so any converse bound for this modified setting remains valid for the original setting.  Perhaps counter-intuitively, this idea can lead to a better final bound.

We only briefly outline the details of this more general analysis, and refer the interested reader to \cite{Ati12,Sca16b}.  Using Theorem \ref{thm:Conditional} with $A = S^{(1)}$, and applying the data processing inequality and tensorization, one can obtain
\begin{equation}
    \pe \ge 1 - \frac{\sum_{i=1}^n I(N_i^{(0)};Y_i|N_i^{(1)}) - \log 2}{ \log{p - k + \ell \choose \ell} }, \label{eq:gt_genie_init}
\end{equation}
where $N_i^{(1)} = \sum_{j \in S^{(1)}} \openone\{X_{ij} = 1\}$, and $N_i^{(0)} = N_i - N_i^{(1)}$.  The intuition is that we condition on $N_i^{(1)}$ since it is known via the genie, while the remaining information about $Y_i$ is determined by $N_i^{(0)}$.  Once \eqref{eq:gt_genie_init} is established, it only remains to simplify the mutual information terms; see \cite{Ati12,Sca16b} for further details.

% Upper bounding $I(N_i^{(0)};Y_i|N_i^{(1)})$ is somewhat more difficult for general $\ell$ than for the case $\ell = k$ that we already analyzed.  In the random testing scenario where the entries of $\Xv$ are i.i.d.~Bernoulli, $N_i^{(0)}$ and $N_i^{(1)}$ follow Binomial distributions, and in this case the mutual information can often be explicitly bounded via a direct calculation.  For a fixed deterministic design $\Xv$, the random variables $N_i^{(0)}$ and $N_i^{(1)}$ instead follow Hypergeometric distributions; these are somewhat trickier to handle, but can explicitly be approximated by Binomial distributions.  The interested reader is referred to \cite{Ati12,Sca16b}.

\subsection{Graphical Model Selection}

Graphical models provide compact representations of the conditional independence relations between random variables, and frequently arise in areas such as image processing, statistical physics, computational biology, and natural language processing.  The fundamental problem of \emph{graphical model selection} consists of recovering the graph structure given a number of independent samples from the underlying distribution.

Graphical model selection has been studied under several different families of joint distributions, and also several different graph classes.  We focus our attention on the commonly-used {\em Ising model} with binary observations, and on a simple graph class known as {\em forests}, defined to contain the graphs having no cycles. % An attractive feature of this setup is that it is known to permit practical learning algorithms with rigorous guarantees on the required number of samples \cite{Tan11,Ana12}.

Formally, the setup is described as follows:
\begin{itemize}
    \item We are given $n$ independent samples $Y_1,\dotsc,Y_n$ from a $p$-dimensional joint distribution: $Y_i = (Y_{i1},\dotsc,Y_{ip})$ for $i=1,\dotsc,n$.  This joint distribution is encoded by a graph $G = (V,E)$, where $V = \{1,\dotsc,p\}$ is the {\em vertex set}, and $E \subseteq V \times V$ is the {\em edge set}.  We use the terminology {\em vertex} and {\em node} interchangeably.  We assume that there are no edges from a vertex to itself, and that the edges are {\em undirected}: $(i,j) \in E$ and $(j,i) \in E$ are equivalent, and only count as one edge.
    \item We focus on the {\em Ising model}, in which the observations are binary-valued, and the joint distribution of a given sample, say $Y_1 = (Y_{11},\dotsc,Y_{1p}) \in \{-1,1\}^p$, is
    \begin{equation}
        P_{G}(y_1) = \frac{1}{Z}\exp\bigg( \lambda \sum_{(i,j)\in E} y_{1i} y_{1j} \bigg), \label{eq:Ising}
    \end{equation}
    where $Z$ is a normalizing constant. Here $\lambda > 0$ is a parameter to the distribution dictating the {edge strength}; a higher value means it is more likely that $Y_{1i} = Y_{1j}$ for any given edge $(i,j) \in E$.
    \item We restrict the graph $G = (V,E)$ to be the set of all {\em forests}:  
    \begin{equation}
        \Gforest = \big\{ G \,:\, G\text{ has no cycles} \},
    \end{equation}
    where a {\em cycle} is defined to be a path of distinct edges leading back to the start node, e.g., $(1,4),(4,2),(2,1)$.  A special case of a forest is a {\em tree}, which is an acyclic graph for which a path exists between any two nodes.  One can view any forest as being a disjoint union of trees, each defined on some subset of $V$.  See Figure \ref{fig:ForestTree} for an illustration.
    \item Let $\Yv \in \{-1,1\}^{n \times p}$ be the matrix whose $i$-th row contains the $p$ entries of the $i$-th sample.  Given $\Yv$, a decoder forms an estimate $\hat{G}$ of $G$, or equivalently, an estimate $\hat{E}$ of $E$.  We initially focus on the exact recovery criterion, in which the minimax error probability is given by
    \begin{equation}
        \Mc_n(\Gforest,\lambda) = \inf_{\hat{G}} \sup_{G \in \Gforest} \PP_G[\hat{G} \ne G],
    \end{equation}
    where $\PP_G$ denotes probability when the true graph is $G$, and the infimum is over all estimators.
\end{itemize}
To our knowledge, Fano's inequality has not been applied previously in this exact setup; we do so using the general tools for Ising models given in \cite{San12,Sha14,Sca16,Sca16d}.

\begin{figure}
    
    \begin{centering}
        % \subfloat[Example forest] {
            \includegraphics[height=0.2\columnwidth]{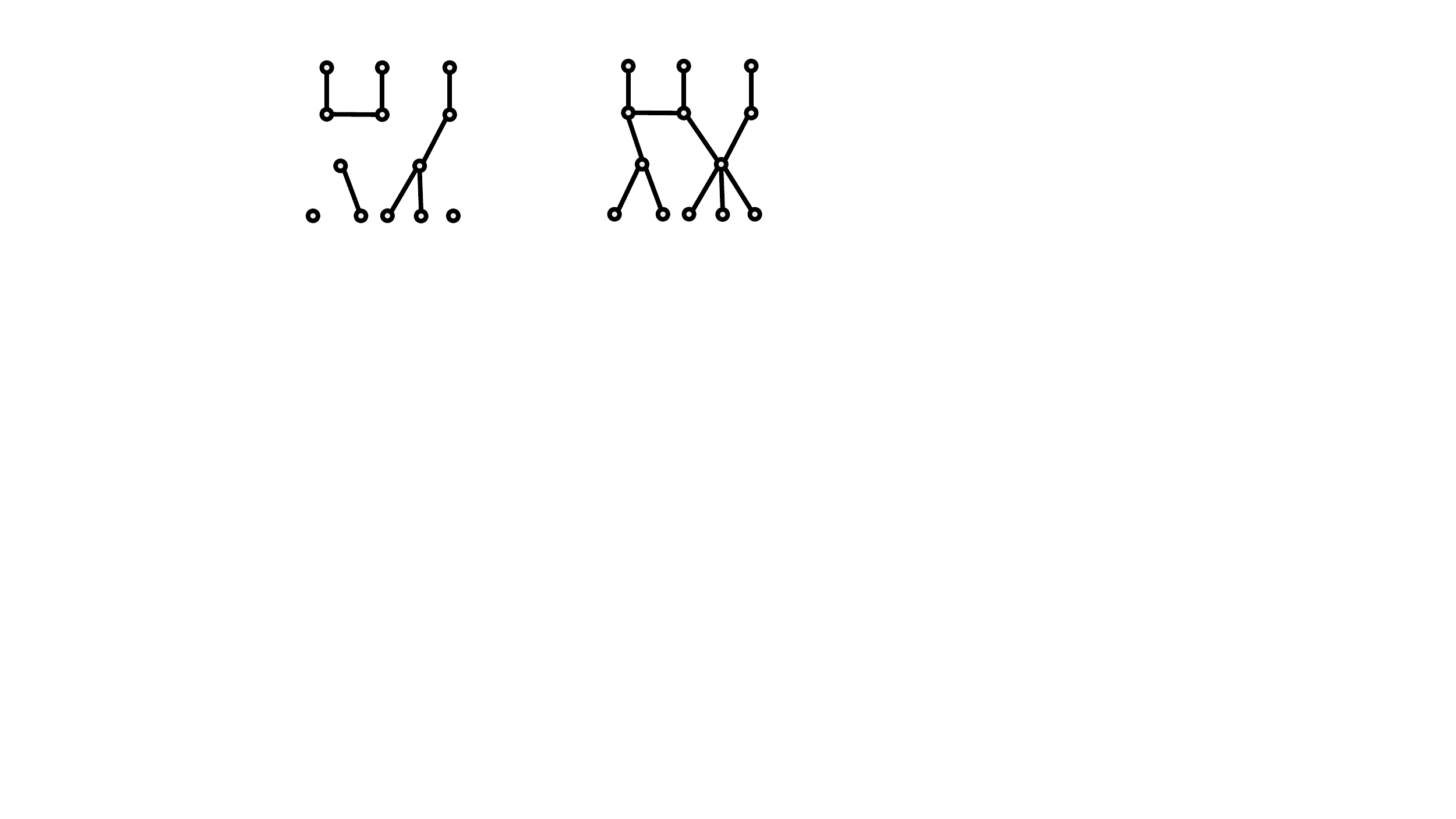}
        % }
        \qquad\qquad
       %  \subfloat[Example tree] {
            \includegraphics[height=0.2\columnwidth]{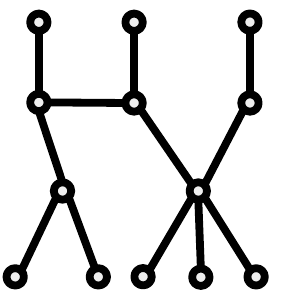}
        % }
        \par
    \end{centering}
    \caption{Two examples of graphs that are forests (i.e., acyclic graphs); the graph on the right is also a tree (i.e., a connected acyclic graph). \label{fig:ForestTree}} \vspace*{-3ex}
\end{figure}

\subsubsection{Exact Recovery}

Under the exact recovery criterion, we have the following.

\begin{thm} \label{thm:GraphExact}
    {\em (Exact recovery of forest graphical models)}
    Under the preceding Ising graphical model selection setup with a given edge parameter $\lambda > 0$, in order to achieve $\Mc_n(\Gforest,\lambda) \le \delta$, it is necessary that
    \begin{equation}
        n \ge \max \Bigg\{ \frac{\log p}{\log 2}, \frac{2 \log p}{ \lambda \tanh\lambda } \Bigg\} (1-\delta-o(1)) \label{eq:GraphExact}
    \end{equation}
    as $p \to \infty$.
\end{thm}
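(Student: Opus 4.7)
The plan is to prove the two lower bounds in \eqref{eq:GraphExact} separately via two different hard subsets of $\Gforest$ and apply Fano's inequality (Theorem \ref{thm:Fano}) to the resulting uniform-prior $M$-ary hypothesis test, using the minimax-to-Bayes reduction described in Section \ref{sec:step1}.  Since the $n$ samples are i.i.d., the data processing inequality gives $I(V;\hat{G}) \le I(V;\Yv)$ in both cases, and it remains to control $I(V;\Yv)$ and $M$.

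For the first bound $\frac{\log p}{\log 2}$, I would take as the hard subset all labeled spanning trees on $p$ vertices, of which there are $M = p^{p-2}$ by Cayley's formula, each lying in $\Gforest$.  Since $\Yv \in \{-1,+1\}^{n\times p}$, the trivial entropy bound gives $I(V;\Yv) \le H(\Yv) \le np \log 2$.  Substituting into \eqref{eq:Fano3} and rearranging, the condition $\pe \le \delta$ forces $n \ge \frac{(1-\delta)(p-2)\log p - \log 2}{p\log 2}$, which is $(1-\delta-o(1))\frac{\log p}{\log 2}$ as $p \to \infty$.

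For the second bound $\frac{2\log p}{\lambda \tanh \lambda}$, I would take as the hard subset the $M = \binom{p}{2}$ single-edge graphs, each of which is a tree.  A short calculation using \eqref{eq:Ising} shows that for such a graph with edge $\{a,b\}$ the normalizing constant is $Z = 2^p \cosh \lambda$, so $P_G(y) = e^{\lambda y_a y_b}/(2^p \cosh \lambda)$.  For two distinct single-edge graphs $G_v, G_{v'}$, the log-likelihood ratio reduces to $\lambda(y_a y_b - y_{a'} y_{b'})$, and therefore
\begin{equation}
    D(P_{G_v}\|P_{G_{v'}}) = \lambda\bigl(\EE_{G_v}[Y_a Y_b] - \EE_{G_v}[Y_{a'} Y_{b'}]\bigr) = \lambda \tanh \lambda,
\end{equation}
since $\EE_{G_v}[Y_a Y_b] = \tanh\lambda$ while $\EE_{G_v}[Y_{a'}Y_{b'}] = 0$ (at least one of $a',b'$ is independent of the edge variables under $P_{G_v}$).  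By independence of the $n$ samples, $D(P_{G_v}^n\|P_{G_{v'}}^n) = n \lambda \tanh\lambda$, and then \eqref{eq:TwoThetas2} of Lemma \ref{lem:mi_kl} yields $I(V;\Yv) \le n\lambda\tanh\lambda$.  Plugging into \eqref{eq:Fano3} and using $\log\binom{p}{2} = 2\log p\,(1-o(1))$ delivers $n \ge \frac{2\log p}{\lambda\tanh\lambda}(1-\delta-o(1))$.

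The hard part is really just choosing the two subsets: one needs the counting-limited family (many trees, trivial entropy bound) to dominate when $\lambda$ is large and the KL-limited family (few graphs, tight pairwise divergence) to dominate when $\lambda$ is small, so that the maximum of the two lower bounds appears.  Both calculations themselves are routine; the one subtlety worth flagging is that \eqref{eq:TwoThetas2} lets one avoid ever computing the mixture marginal $P_{\Yv}$ arising from a uniform prior over single-edge graphs, which is what keeps the KL argument clean.
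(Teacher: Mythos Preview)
Your proof is correct and follows essentially the same approach as the paper: the same two hard subsets (all spanning trees via Cayley's formula for the first bound, all single-edge graphs for the second) and the same Fano-based mutual information argument. The only minor difference is that for the second ensemble you bound $I(V;\Yv)$ via the pairwise divergences \eqref{eq:TwoThetas2}, exploiting that single-edge graphs share the same normalizing constant $2^p\cosh\lambda$, whereas the paper uses \eqref{eq:Aux3} with the empty-graph distribution as the auxiliary $Q_Y$ and a symmetrized-KL trick to cancel the normalizing constants; both routes yield the identical bound $n\lambda\tanh\lambda$.
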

\begin{proof}
    Recall from Section \ref{sec:overview} that we can lower bound the worst-case error probability over $\Gforest$ by the average error probability over {any} subset of $\Gforest$.  This gives us an important degree of freedom in the reduction to multiple hypothesis testing, and corresponds to selecting a hard subset $\theta_1,\dotsc,\theta_M$ as described in Section \ref{sec:step1}.  We refer to a given subset $\Gc \subseteq \Gforest$ as a {\em graph ensemble}, and provide two choices that lead to the two terms in \eqref{eq:GraphExact}.  % The two ensembles will provide different trade-offs between the number of graphs that need to be distinguished, and how ``close'' the various joint distributions $P_G$ are from each other.
    
    For any choice of $\Gc \subseteq \Gforest$, Fano's inequality (Theorem \ref{thm:Fano}) gives
    \begin{equation}
        n \ge \frac{ (1-\delta)\log |\Gc| - \log 2  }{ I(G;Y_1) }, \label{eq:graph_init}
    \end{equation}
    for $G$ uniform on $\Gc$, where we used $I(G;\hat{G}) \le I(G;\Yv) \le nI(G;Y_1)$ by the data processing inequality and tensorization ({\em cf.} first parts of Lemmas \ref{lem:dpi} and \ref{lem:mi_tensorization}).
    
    \underline{Restricted ensemble 1}: Let $\Gc_1$ be the set of all trees.  It is well-known from graph theory that the number of trees on $p$ nodes is $|\Gc_1| = p^{p-2}$ \cite{Tan11}.  Moreover, since $Y_1$ is a length-$p$ binary sequence, we have $I(G;Y_1) \le H(Y_1) \le p \log 2$.  Hence, \eqref{eq:graph_init} yields $n \ge \frac{(1-\delta) (p-2)\log p - \log 2}{ p \log 2 }$, implying the first bound in \eqref{eq:GraphExact}.
    
    \underline{Restricted ensemble 2}: Let $\Gc_2$ be the set of graphs containing a single edge, so that $|\Gc_2| = {p \choose 2}$.  We will upper bound the mutual information using \eqref{eq:Aux3} in Lemma \ref{lem:mi_kl}, choosing the auxiliary distribution $Q_Y$ to be $P_{\Gbar}$ with $\Gbar$ being the empty graph.  Thus, we need to bound $D(P_G \| P_{\Gbar})$ for each $G \in \Gc_2$.  
    
    We first give an upper bound on $D(P_G \| P_{\Gbar})$ for {\em any} two graphs $(G,\Gbar)$.  We start with the trivial bound
    \begin{equation}
        D(P_G \| P_{\Gbar}) \le D(P_G \| P_{\Gbar}) + D(P_{\Gbar} \| P_G). \label{eq:GraphKL_init}
    \end{equation}
    Recall the definition $D(P \| Q) = \EE_{P}\big[ \log \frac{P(Y)}{Q(Y)} \big]$, and consider the substitution of $P_G$ and $P_{\Gbar}$ according to \eqref{eq:Ising}, with different normalizing constants $Z_G$ and $Z_{\Gbar}$.  We see that when we sum the two terms in \eqref{eq:GraphKL_init}, the normalizing constants inside the logarithms cancel, and we are left with 
    \begin{align}
        D(P_G \| P_{\Gbar}) \le &\sum_{(i,j) \in E \backslash \Ebar} \lambda\big( \EE_G[Y_{1i} Y_{1j}] - \EE_{\Gbar}[Y_{1i} Y_{1j}] \big) \nonumber \\ + &\sum_{(i,j) \in \Ebar \backslash E} \lambda\big( \EE_{\Gbar}[Y_{1i} Y_{1j}] - \EE_{G}[Y_{1i} Y_{1j}] \big) \label{eq:GraphKL_init2}
    \end{align}
    for $G = (V,E)$ and $\Gbar = (V,\Ebar)$.  
    
    In the case that $G$ has a single edge (i.e., $G \in \Gc_2$) and $\Gbar$ is the empty graph, we can easily compute $\EE_{\Gbar}[Y_{1i} Y_{1j}] = 0$, and \eqref{eq:GraphKL_init2} simplifies to
    \begin{equation}
        D(P_G \| P_{\Gbar}) \le \lambda \EE_G[Y_{1i} Y_{1j}], \label{eq:GraphKL_init3}
    \end{equation}
    where $(i,j)$ is the unique edge in $G$.  Since $Y_{1i}$ and $Y_{1j}$ only take values in $\{-1,1\}$, we have $\EE_G[Y_{1i} Y_{1j}] = (+1)\PP[Y_{1i} = Y_{1j}] + (-1)\PP[Y_{1i} \ne Y_{1j}] = 2\PP[Y_{1i} = Y_{1j}] - 1$, and letting $E$ have a single edge in \eqref{eq:Ising} yields $\PP_G[(Y_{1i},Y_{1j}) = (y_i,y_j)] = \frac{e^{\lambda y_i y_j}}{ 2e^{\lambda} + 2e^{-\lambda} }$, and hence $\PP_G[Y_{1i} = Y_{1j}] = \frac{e^{\lambda}}{e^{\lambda} + e^{-\lambda}}$.  Combining this with $\EE_G[Y_{1i} Y_{1j}] = 2\PP[Y_{1i} = Y_{1j}] - 1$ yields $\EE_G[Y_{1i} Y_{1j}] = \frac{2e^{\lambda}}{e^{\lambda} + e^{-\lambda}} - 1 = \tanh\lambda$.  Hence, using \eqref{eq:GraphKL_init3} along with \eqref{eq:Aux3} in Lemma \ref{lem:mi_kl}, we obtain $I(G;Y_1) \le \lambda\tanh\lambda$.  Substitution into \eqref{eq:graph_init} (with $\log |\Gc| = (2\log p) (1+o(1))$) yields the second bound in \eqref{eq:GraphExact}.
\end{proof}

Theorem \ref{thm:GraphExact} is known to be tight up to constant factors whenever $\lambda = O(1)$ \cite{Tan11,Ana12}: When $\lambda$ is constant the lower bound becomes $n = \Omega(\log p)$, whereas for asymptotically vanishing $\lambda$ it simplifies to $n = \Omega\big(\frac{1}{\lambda^2} \log p\big)$.  % However, unlike the group testing problem, it is unclear to what extent the constant factors can be improved. 

\subsubsection{Approximate Recovery}

We consider the approximate recovery of $G = (V,E)$ with respect to the {\em edit distance} $d(G,\hat{G}) = |E \backslash \hat{E}| + |\hat{E} \backslash E|$, which is the number of edge additions and removals needed to transform $G$ into $\hat{G}$ or vice versa.  Since any forest can have at most $p-1$ edges, it is natural to consider the case that an edit distance of up to $\alpha  p$ is permitted, for some $\alpha > 0$.  Hence, the minimax risk is given by
\begin{equation}
    \Mc_n(\Gforest,\lambda,\alpha ) = \inf_{\hat{G}} \sup_{G \in \Gforest} \PP_G[d(G,\hat{G}) > \alpha  p].
\end{equation}
In this setting, we have the following.

\begin{thm} \label{thm:GraphPartial}
    {\em (Approximate recovery of forest graphical models)}
    Under the preceding Ising graphical model selection setup with a given edge parameter $\lambda > 0$ and approximate recovery parameter $\alpha  \in \big(0,\frac{1}{2}\big)$ (with the latter not depending on $p$), in order to achieve $\Mc_n(\Gforest,\lambda,\alpha ) \le \delta$, it is necessary that
    \begin{equation}
        n \ge \max \Bigg\{ \frac{(1-\alpha )\log p}{\log 2}, \frac{2(1-\alpha ) \log p}{ \lambda \tanh\lambda }\Bigg\} (1-\delta-o(1)) \label{eq:GraphPartial}
    \end{equation}
    as $p \to \infty$.
\end{thm}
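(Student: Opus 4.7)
The plan is to apply Theorem \ref{thm:Partial} to the restricted ensemble $\Gc_1\subseteq\Gforest$ consisting of all labelled spanning trees on $p$ nodes, mirroring the first restricted ensemble used in Theorem \ref{thm:GraphExact}. Both terms in \eqref{eq:GraphPartial} will arise from this single ensemble by combining one counting estimate for $\Nmax(\alpha p)$ with two different upper bounds on the single-sample mutual information $I(G;Y_1)$. As in the exact-recovery proof, I would first lower bound the minimax risk by the Bayes risk under $G\sim\mathrm{Unif}(\Gc_1)$, and then chain the data-processing inequality $I(G;\hat G)\le I(G;\Yv)$ (Lemma \ref{lem:dpi}(i)) with the i.i.d.\ tensorization $I(G;\Yv)\le n I(G;Y_1)$ (Lemma \ref{lem:mi_tensorization}(i)); substituting into \eqref{eq:Partial3} gives
\begin{equation*}
    n \;\ge\; \frac{(1-\delta)\log\bigl(|\Gc_1|/\Nmax(\alpha p)\bigr) - \log 2}{I(G;Y_1)}.
\end{equation*}

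For the counting, Cayley's formula gives $|\Gc_1|=p^{p-2}$, and $\Nmax(\alpha p)$ is crudely upper bounded by the number of \emph{arbitrary} graphs (not just trees) within edit distance $\alpha p$ of any $\hat G$, namely $\sum_{j=0}^{\lfloor\alpha p\rfloor}\binom{\binom{p}{2}}{j}$. The standard estimate $\binom{\binom{p}{2}}{\alpha p}\le(ep/(2\alpha))^{\alpha p}$ yields $\log\Nmax(\alpha p)\le\alpha p\log p\,(1+o(1))$, and hence $\log(|\Gc_1|/\Nmax(\alpha p))\ge(1-\alpha)\,p\log p\,(1-o(1))$.

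For the mutual information, use two bounds in parallel. The trivial entropy bound $I(G;Y_1)\le H(Y_1)\le p\log 2$ produces the first term $\frac{(1-\alpha)\log p}{\log 2}$. For the $\lambda$-dependent term, apply \eqref{eq:Aux3} of Lemma \ref{lem:mi_kl} with auxiliary $Q_Y=P_\emptyset$ (empty-graph Ising measure). For any tree $G$, iterative summation over leaves gives the closed-form partition function $Z_G=2^p(\cosh\lambda)^{p-1}$, while edge correlations are $\EE_G[Y_{1i}Y_{1j}]=\tanh\lambda$, so direct calculation yields
\begin{equation*}
    D(P_G\,\|\,P_\emptyset) \;=\; (p-1)\bigl(\lambda\tanh\lambda-\log\cosh\lambda\bigr).
\end{equation*}
Combined with the elementary inequality $\log\cosh\lambda\ge\tfrac{1}{2}\lambda\tanh\lambda$ for $\lambda\ge 0$ (proved by noting $\frac{d}{d\lambda}\bigl[\log\cosh\lambda-\tfrac{1}{2}\lambda\tanh\lambda\bigr]=\frac{\sinh(2\lambda)-2\lambda}{4\cosh^2\lambda}\ge 0$ and the expression vanishes at $\lambda=0$), this gives $I(G;Y_1)\le\frac{p-1}{2}\lambda\tanh\lambda$ and hence the second term $\frac{2(1-\alpha)\log p}{\lambda\tanh\lambda}$. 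Taking the larger of the two lower bounds yields \eqref{eq:GraphPartial}.

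The main obstacle is the factor of two in the second term: the loose ``add-the-reverse-divergence'' trick used in Theorem \ref{thm:GraphExact} would here give only $D(P_G\,\|\,P_\emptyset)\le(p-1)\lambda\tanh\lambda$, costing a factor of two. One must therefore evaluate $D$ exactly via the closed-form tree partition function and then invoke $\log\cosh\lambda\ge\tfrac{1}{2}\lambda\tanh\lambda$. A secondary subtlety is that the obvious analogue of the second ensemble in Theorem \ref{thm:GraphExact}---single-edge graphs---is unusable here: any two such graphs are at edit distance two, so an adversarial decoder outputting the empty graph achieves zero loss once $\alpha p\ge 1$, forcing $\Nmax(\alpha p)=|\Gc_2|$ and rendering the approximate-recovery Fano bound vacuous. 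Restricting to the tree ensemble circumvents this because $\hat G=\emptyset$ incurs distance $p-1>\alpha p$ from every $G\in\Gc_1$.
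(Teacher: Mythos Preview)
Your proof is correct, but the route to the second (\(\lambda\)-dependent) term differs from the paper's.  The paper obtains the first term exactly as you do, from the tree ensemble with \(I(G;Y_1)\le p\log 2\).  For the second term, however, the paper abandons the tree ensemble and introduces a new ensemble \(\Gc_{2a}\) of perfect matchings (graphs consisting of \(p/2\) isolated edges).  For such graphs the Ising measure factorizes over edges, so additivity of KL for product distributions immediately gives \(D(P_G\|P_\emptyset)\le \tfrac{p}{2}\lambda\tanh\lambda\), reusing verbatim the single-edge bound from the exact-recovery proof; the counting \(\log|\Gc_{2a}|\ge (p\log p)(1+o(1))\) and \(\log\Nmax(\alpha p)\le(\alpha p\log p)(1+o(1))\) then yield the second term.

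Your approach stays with the single tree ensemble throughout and instead sharpens the KL bound: you compute \(D(P_G\|P_\emptyset)=(p-1)(\lambda\tanh\lambda-\log\cosh\lambda)\) exactly via the closed-form tree partition function, and then invoke the elementary inequality \(\log\cosh\lambda\ge\tfrac{1}{2}\lambda\tanh\lambda\) to recover the same \(\tfrac{p-1}{2}\lambda\tanh\lambda\) bound.  This is cleaner in that it avoids a second ensemble (and you correctly identify why the single-edge ensemble from Theorem~\ref{thm:GraphExact} is useless here).  The paper's approach, in turn, avoids the partition-function computation and the \(\log\cosh\) inequality by exploiting product structure; it also reuses the loose ``add the reverse divergence'' bound \eqref{eq:GraphKL_init}--\eqref{eq:GraphKL_init2} rather than computing anything exactly.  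Either argument yields \eqref{eq:GraphPartial}.
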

\begin{proof}
    For any $\Gc \subseteq \Gforest$, Theorem \ref{thm:Partial} provides the following analog of \eqref{eq:graph_init}:
    \begin{equation}
        n \ge \frac{ (1-\delta)\log\frac{|\Gc|}{\Nmax(\alpha p)} - \log 2  }{ I(G;Y_1) } \label{eq:graph_init_P}
    \end{equation}
    for $G$ uniform on $\Gc$, where $\Nmax(t) = \max_{\hat{G}} \sum_{G \in \Gc} \openone\{ d(G,\hat{G}) \le t\}$ implicitly depends on $\Gc$.  We again consider two restricted ensembles; the first is identical to the exact recovery setting, whereas the second is modified due to the fact that learning single-edge graphs with approximate recovery is trivial.
    
    \underline{Restricted ensemble 1}: Once again, let $\Gc_1$ be the set of all trees.  We have already established $|\Gc_1| = (p-2)\log p$ and $I(G;Y_1) \le n \log 2$ for this ensemble, so it only remains to characterize $\Nmax(\alpha p)$.  
    
    While the decoder may output a graph $\hat{G}$ not lying in $\Gc_1$, we can assume without loss of generality that $\hat{G}$ is always selected such that $d(\hat{G},G^*) \le \alpha p$ for some $G^* \in \Gc_1$; otherwise, an error would be guaranteed.  As a result, for any $\hat{G}$, and any $G \in \Gc_1$ such that $d(G,\hat{G}) \le \alpha p$, we have from the triangle inequality that $d(G,G^*) \le d(G,\hat{G}) + d(\hat{G},G^*) \le 2\alpha p$, which implies that 
    \begin{equation}
        \Nmax(\alpha p) \le \sum_{G \in \Gc_1} \openone\{ d(G,G^*) \le 2\alpha p\}. \label{eq:GraphNmax_init1}
    \end{equation}
    Now observe that since all graphs in $\Gc_1$ have exactly $p-1$ edges, transforming $G$ to $G^*$ requires removing $j$ edges and adding $j$ different edges, for some $j \le \alpha p$.  Hence, we have 
    \begin{equation}
        \Nmax(\alpha p) \le \sum_{j=0}^{\lfloor \alpha p \rfloor} {p-1 \choose j} { {p \choose 2} - p + 1 \choose j }. \label{eq:GraphNmax_init2}
    \end{equation}
    By upper bounding the summation by $\lfloor \alpha p \rfloor + 1$ times the maximum, and performing some asymptotic simplifications, we can show that $\log \Nmax(\alpha p) \le \big(\alpha p \log p\big) (1+o(1))$.  Substituting into \eqref{eq:graph_init_P} and recalling that $|\Gc_1| = (p-2)\log p$ and $I(G;Y_1) \le p\log 2$, we obtain the first bound in \eqref{eq:GraphPartial}.
    
    \underline{Restricted ensemble 2a}: Let $\Gc_{2a}$ be the set of all graphs on $p$ nodes containing exactly $\frac{p}{2}$ isolated edges; if $p$ is an odd number, the same analysis applies with an arbitrary single node ignored.  We proceed by characterizing $|\Gc_{2a}|$, $I(G;Y_1)$, and $\Nmax(\alpha p)$.  The number of graphs in the ensemble is $|\Gc_{2a}| = {p \choose 2}{p -2 \choose 2}\cdots{4 \choose 2}{2 \choose 2} = \frac{p!}{2^{p/2}}$, and Stirling's approximation yields $\log |\Gc_{2a}| \ge \big( p \log p\big) (1+o(1))$.
    
    % The number of graphs in the ensemble is $|\Gc_{2a}| = {p \choose 2}{p -2 \choose 2}\cdots{4 \choose 2}{2 \choose 2}$.  For any $\epsilon > 0$, this is lower bounded by ${\epsilon p \choose 2}^{\frac{(1-\epsilon)p}{2}}$, and straightforward asymptotic simplifications yield $\log |\Gc_{2a}| \ge \big((1-\epsilon) p \log p\big) (1+o(1))$ for any fixed $\epsilon> 0$.  Since $\epsilon$ may be arbitrarily small, we conclude that $\log |\Gc_{2a}| \ge \big( p \log p\big) (1+o(1))$.
    
    Since the KL divergence is additive for product distributions, and we established in the exact recovery case that the KL divergence between the distributions of a single-edge graph and an empty graph is at most $\lambda \tanh\lambda$, we deduce that $D(P_G \| P_{\Gbar}) \le \frac{p}{2} \lambda \tanh\lambda$ for any $G \in \Gc_{2a}$, where $\Gbar$ is the empty graph.  We therefore obtain from Lemma \ref{lem:mi_kl} that $I(G;Y_1) \le \frac{p}{2} \lambda \tanh\lambda$.
    
    A similar argument to that of Ensemble 1 yields $\Nmax(\alpha p) \le \sum_{j=0}^{ \lfloor \alpha p \rfloor} {\frac{p}{2} \choose j} { {p \choose 2} - \frac{p}{2} \choose j }$, in analogy with \eqref{eq:GraphNmax_init2}.  This again simplifies to $\Nmax(\alpha p) \le \big(\alpha p \log p\big) (1+o(1))$, and having established $\log |\Gc_{2a}| \ge \big( p \log p\big) (1+o(1))$ and  $I(G;Y_1) \le \frac{p}{2} \lambda \tanh\lambda$, substitution into  \eqref{eq:graph_init_P} yields the second bound in \eqref{eq:GraphPartial}. 
\end{proof}

The bound in Theorem \ref{thm:GraphPartial} matches that of Theorem \ref{thm:GraphExact} up to a multiplicative factor of $1-\alpha$, thus suggesting that approximate recovery does not significantly help in reducing the required number of samples, at least in the minimax sense, for the Ising model and forest graph class.  % Note that although we reached this conclusion for both examples in this section, this should not be taken as an indication that it is something one should expect in general.  For instance, see \cite{Ree11,Ree12} for a sparse recovery problem in which considering approximate recovery is crucial.

\subsubsection{Adaptive Sampling} \label{sec:GraphAdaptive}

We now return to the exact recovery setting, and consider a modification in which we have an added degree of freedom in the form of {\em adaptive sampling}:
\begin{itemize}
    \item The algorithm proceeds in rounds; in round $i$, the algorithm queries a {subset} of the $p$ nodes indexed by $X_i \in \{0,1\}^p$, and the corresponding sample $Y_i$ is generated as follows:
    \begin{itemize}
        \item The joint distribution of the entries of $Y_i$, corresponding to the entries where $X_i$ is one, coincides with the corresponding marginal distribution of $P_G$, with independence between rounds;
        \item The values of the entries of $Y_i$, corresponding to the entries where $X_i$ is zero, are given by $\missing$, a symbol indicating that the node was not observed.
    \end{itemize}    
    We allow $X_i$ to be selected based on the past queries and samples, namely, $X_1^{i-1} = (X_1,\dotsc,X_{i-1})$ and $Y_1^{i-1} = (Y_1,\dotsc,Y_{i-1})$.
    \item Let $n(X_i)$ denote the number of ones in $X_i$, i.e., the number of nodes observed in round $i$.  While we allow the total number of rounds to vary, we restrict the algorithm to output an estimate $\hat{G}$ after observing at most $\nnode$ nodes.  This quantity is related to $n$ in the non-adaptive setting according to $\nnode = np$, since in the non-adaptive setting we always observe all $p$ nodes in each sample.
    \item The minimax risk is given by
    \begin{equation}
        \Mc_{\nnode}(\Gforest,\lambda) = \inf_{\hat{G}} \sup_{G \in \Gforest} \PP_G[\hat{G} \ne G],
    \end{equation}
    where the infimum is over all adaptive algorithms that observe at most $\nnode$ nodes in total.
\end{itemize}

\begin{thm} \label{thm:GraphAdaptive}
    {\em (Adaptive sampling for forest graphical models)}
    Under the preceding Ising graphical model selection problem with adaptive sampling and a given parameter $\lambda > 0$, in order to achieve $\Mc_{\nnode}(\Gforest,\lambda) \le \delta$, it is necessary that
    \begin{equation}
        \nnode \ge \max \Bigg\{ \frac{p\log p}{\log 2}, \frac{2 p\log p}{ \lambda \tanh\lambda } \Bigg\} (1-\delta-o(1)) \label{eq:GraphAdaptive}
    \end{equation}
    as $p \to \infty$.
\end{thm}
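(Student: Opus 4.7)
The plan is to mirror the two restricted-ensemble argument of Theorem \ref{thm:GraphExact} while swapping the non-adaptive tensorization tools for their adaptive counterparts. For any ensemble $\Gc \subseteq \Gforest$, I would first apply Fano's inequality to $G$ drawn uniformly from $\Gc$ and then invoke the third part of Lemma \ref{lem:dpi} to write $I(G;\hat{G}) \le I(G;\Xv,\Yv)$, since $\hat{G}$ is a deterministic function of the queries $\Xv$ and their responses $\Yv$. Next, Lemma \ref{lem:mi_adaptive}(i) gives $I(G;\Xv,\Yv) \le \sum_{i=1}^{T} I(G;Y_i\mid X_i)$; the required conditional independence holds because distinct rounds are by assumption independent samples from the marginals of $P_G$. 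To accommodate a possibly variable stopping time $T$, I would pad the execution with null queries $X_i = \mathbf{0}$ (which produce only $\missing$ symbols and therefore contribute nothing), so that the sum runs over a fixed horizon while the budget constraint $\sum_i n(X_i) \le \nnode$ is preserved.

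For the first term in \eqref{eq:GraphAdaptive}, I would reuse the ensemble $\Gc_1$ of all trees on $p$ nodes, for which $|\Gc_1| = p^{p-2}$. Bounding each round by $I(G;Y_i\mid X_i) \le H(Y_i\mid X_i) \le \EE[n(X_i)] \log 2$ (since conditioned on $X_i = x_i$ only the $n(x_i)$ observed entries carry entropy, the rest being deterministically $\missing$), and summing, yields $I(G;\Xv,\Yv) \le \nnode \log 2$. Substituting into Fano gives $\nnode \ge \frac{p\log p}{\log 2}(1-\delta-o(1))$.

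For the second term, I would reuse the perfect-matching ensemble $\Gc_{2a}$ from Theorem \ref{thm:GraphPartial}, for which $\log|\Gc_{2a}| = (p\log p)(1+o(1))$. The key observation is that for any $G \in \Gc_{2a}$ and any query pattern $X_i = x_i$ with observed set $S_{x_i} = \{j : x_{ij}=1\}$, the marginal of $Y_i$ on $S_{x_i}$ under $P_G$ factorizes into a product of two-variable pair distributions (one per edge of $G$ whose endpoints both lie in $S_{x_i}$) and independent uniform factors (for observed nodes whose match is absent from $S_{x_i}$). Comparing against the empty-graph reference distribution $Q_{Y_i\mid X_i}$ (uniform on $\{-1,1\}^{n(x_i)}$), additivity of KL divergence across these independent factors, combined with the single-edge computation from the proof of Theorem \ref{thm:GraphExact} giving $\lambda\tanh\lambda$ per matched pair, produces $D(P_{Y_i\mid X_i,G}\,\|\,Q_{Y_i\mid X_i}) \le \frac{n(X_i)}{2}\lambda\tanh\lambda$, since a matching covers at most $n(X_i)/2$ edges inside $S_{x_i}$. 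Applying the conditional form of Lemma \ref{lem:mi_kl} as in Remark \ref{rem:kl_bounds_cond} and summing over rounds collapses to $I(G;\Xv,\Yv) \le \frac{\nnode}{2}\lambda\tanh\lambda$, and Fano then produces the claimed second bound.

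The main obstacle I anticipate is the adaptive coupling between $X_i$ and $G$: since $X_i$ depends on $(X_1^{i-1},Y_1^{i-1})$, which in turn depends on $G$, we cannot treat $X_i$ as independent of $G$ at the outset, and the averaged-KL bounds used in the non-adaptive proof no longer factor as cleanly. The resolution is to use an auxiliary distribution $Q_{Y_i\mid X_i}$ that does not depend on $G$, namely the empty-graph marginal on the queried subset; the per-round KL bound then holds pointwise in both $G$ and $x_i$, so averaging over the joint law of $(G,X_i)$ and summing over rounds telescopes cleanly through the deterministic budget $\sum_i n(X_i) \le \nnode$, yielding the claimed $np\log p$-type rates without ever needing to disentangle the adaptive dependencies directly.
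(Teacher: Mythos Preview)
Your proposal is correct and follows essentially the same route as the paper's proof: the same two ensembles ($\Gc_1$ and $\Gc_{2a}$), the same padding to a fixed horizon, the adaptive tensorization via Lemma~\ref{lem:mi_adaptive}(i), and the same per-round bounds ($n(x_i)\log 2$ for trees, $\tfrac{n(x_i)}{2}\lambda\tanh\lambda$ via the empty-graph auxiliary for matchings). Your explicit discussion of why the pointwise-in-$(g,x_i)$ KL bound survives the adaptive dependence between $X_i$ and $G$ is a point the paper leaves implicit, but otherwise the arguments coincide.
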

\begin{proof}
    We prove the result using Ensemble 1 and Ensemble 2a above.  We let $N$ denote the number of rounds; while this quantity is allowed to vary, we can assume without loss of generality that $N = \nnode$ by adding or removing rounds where no nodes are queried.  For any subset $\Gc \subseteq \Gforest$, applying Fano's inequality ({\em cf.}, Theorem \ref{thm:Fano}) and tensorization  ({\em cf.}, first part of Theorem \ref{lem:mi_adaptive}) yields
    \begin{equation}
        \sum_{i=1}^N I(G;Y_i | X_i) \ge (1-\delta)\log |\Gc| - \log 2, \label{eq:graph_init_A}
    \end{equation}
    where $G$ is uniform on $\Gc$.
    
    \underline{Restricted ensemble 1}: We again let $\Gc_1$ be the set of all trees, for which we know that $|\Gc| = p^{p-2}$.  Since the $n(X_i)$ entries of $Y_i$ differing from $\missing$ are binary, and those equaling $\missing$ are deterministic given $X_i$, we have $I(G;Y_i | X_i = x_i) \le n(x_i) \log 2$.  Averaging over $X_i$ and summing over $i$ yields $\sum_{i=1}^N I(G;Y_i | X_i) \le \sum_{i=1}^N \EE[n(X_i)] \log 2 \le \nnode \log 2$, and substitution into \eqref{eq:graph_init_A} yields the first bound in \eqref{eq:GraphAdaptive}.
    
    \underline{Restricted ensemble 2a}: We again use the above-defined ensemble $\Gc_{2a}$ of graphs with $\frac{p}{2}$ isolated edges, for which we know that $|\Gc_{2a}| \ge \big( p \log p \big)(1+o(1))$.  In this case, when we observe $n(X_i)$ nodes, the sub-graph corresponding to these observed nodes has at most $\frac{n(X_i)}{2}$ edges, all of which are isolated.  Hence, using Lemma \ref{lem:mi_kl}, the above-established fact that the KL divergence from a single-edge graph to the empty graph is at most $\lambda \tanh \lambda$, and the additivity of KL divergence for product distributions, we deduce that $I(G;Y_i | X_i = x_i) \le \frac{n(x_i)}{2} \lambda\tanh\lambda$.  Averaging over $X_i$ and summing over $i$ yields $\sum_{i=1}^N I(G;Y_i | X_i) \le \frac{1}{2}\nnode \lambda\tanh\lambda$, and substitution into \eqref{eq:graph_init_A} yields the second bound in \eqref{eq:GraphAdaptive}.
\end{proof}

The threshold in Theorem \ref{thm:GraphAdaptive} matches that of Theorem \ref{thm:GraphExact}, and in fact, a similar analysis under approximate recovery also recovers the threshold in Theorem \ref{thm:GraphPartial}.  This suggests that adaptivity is of limited help in the minimax sense for the Ising model and forest graph class.  There are, however, other instances of graphical model selection where adaptivity provably helps \cite{Das16,Sca16d}.

\subsubsection{Discussion: Other Graph Classes}

\underline{Degree and edge constraints}: While the class $\Gforest$ is a relatively easy class to handle, similar techniques have also been used for more difficult classes, notably including those that place restrictions on the maximal degree $d$ and/or the number of edges $k$.  Ensembles 2 and 2a above can again be used, and the resulting bounds are tight in certain scaling regimes where $\lambda \to 0$, but loose in other regimes due to their lack of dependence on $d$ and $k$.  To obtain bounds with such a dependence, alternative ensembles have been proposed consisting of sub-graphs with highly correlated nodes \cite{San12,Sha14,Sca16}.  

For instance, suppose that a group of $d+1$ nodes has all possible edges connected except one.  Unless $d$ or the edge strength $\lambda$ are small, the high connectivity makes the nodes very highly correlated, and the sub-graph is {difficult to distinguish from a fully-connected sub-graph}.  This is in contrast with Ensembles 2 and 2a above, whose graphs are difficult to distinguish from the {empty graph}.

%\begin{figure}
%  
%    \begin{centering}
%        \subfloat[Exact recovery] {
%            \includegraphics[height=0.15\columnwidth]{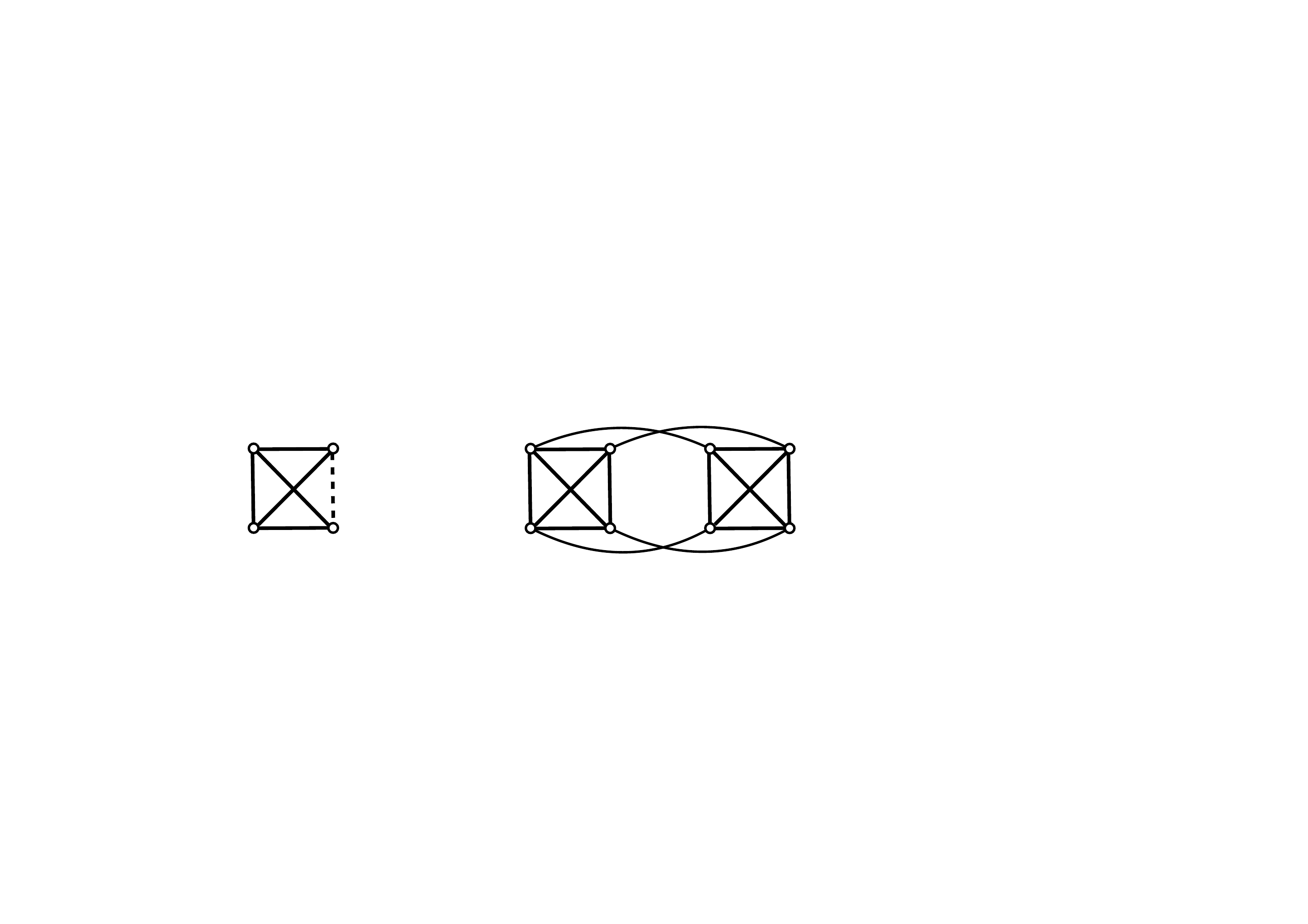}
%        }
%        \qquad
%        \subfloat[Approximate recovery] {
%            \includegraphics[height=0.15\columnwidth]{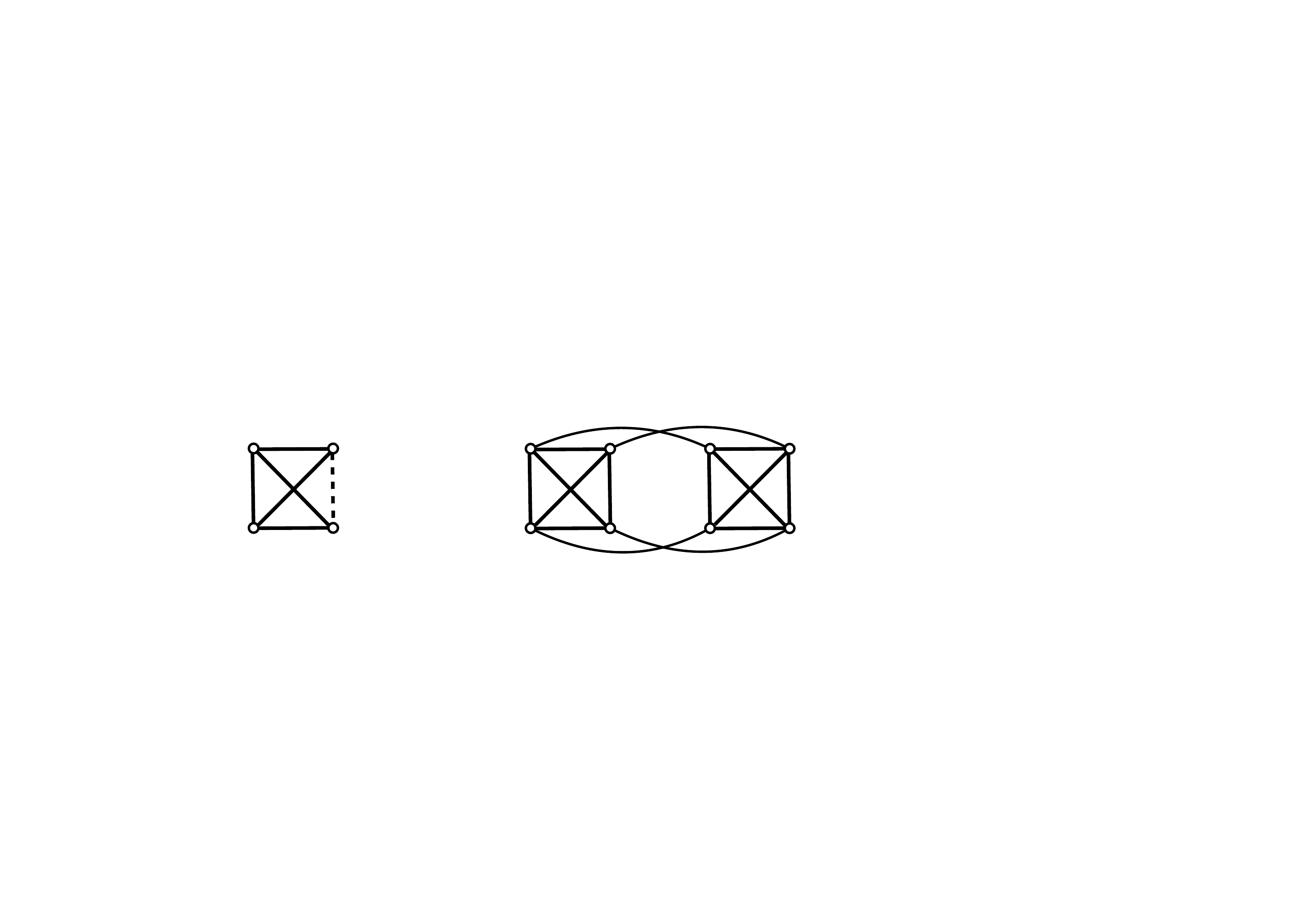}
%        }
%        \par
%    \end{centering}
%    
%    \caption{Example building blocks for exact and approximate recovery of Ising models with edge or degree constraints.  For instance, in the exact recovery example, the edges present may make the two right-most edges highly correlated, thus making it difficult to determine whether the additional dashed edge is present. \label{fig:BuildingBlocks}} \vspace*{-3ex}
%\end{figure}

\underline{Bayesian setting}: Beyond minimax estimation, it is also of interest to understand the fundamental limits of random graphs.  A particularly prominent example is the Erd\"os-R\'enyi random graph, in which each edge is independently included with some probability $q \in (0,1)$.  This is a case where the conditional form of Fano's inequality has proved useful; specifically, one can apply Theorem \ref{thm:Conditional} with $A = G$, and $\Ac$ equal to the following {\em typical set} of graphs:
\begin{equation}
    \Tc = \bigg\{ G \,:\, (1-\epsilon)q{p \choose 2} \le |E| \le (1 + \epsilon)q{p \choose 2}\bigg\},
\end{equation}
where $\epsilon > 0$ is a constant.  Standard properties of typical sets \cite{Cov01} yield that $\PP[\GER \in \Tc] \to 1$, $|\Tc| = e^{\big(H_2(q){p \choose 2}\big)(1+O(\epsilon))}$, and $H(V|V\in\Tc) = \big(H_2(q){p \choose 2}\big)(1+O(\epsilon))$ whenever $q{p \choose 2} \to \infty$, and once these facts are established, Theorem \ref{thm:Conditional} yields the following following necessary condition for $\pe \le \delta$:
\begin{equation}
    n \ge \frac{p H_2(q)}{ 2 \log 2 } (1-\delta - o(1)). \label{eq:ErdosRenyi}
\end{equation}
For instance, in the case that $q = O\big(\frac{1}{p}\big)$ (i.e., there are $O(p)$ edges on average), we have $H_2(q) = \Theta\big(\frac{\log p}{p}\big)$, and we find that $n = \Omega(\log p)$ samples are necessary. This scaling is tight when $\lambda$ is constant \cite{Ana12}, whereas improved bounds for other scalings can be found in \cite{Sha14}.

% While the bound in \eqref{eq:ErdosRenyi} provides the correct scaling laws in some cases, it is generally somewhat crude in the sense that it does not depend on the edge strength $\lambda$.  Refined bounds with such a dependence have also been obtained for certain scalings of $q$ using the same typicality argument as the first step \cite{Sha14}, albeit with much more involved subsequent steps.

\section{From Discrete to Continuous} \label{sec:continuous}

Thus far, we have focused on using Fano's inequality to provide converse bounds for the estimation of discrete quantities.  In many, if not most, statistical applications, one is instead interested in estimating {continuous} quantities; examples include linear regression, covariance estimation, density estimation, and so on.  It turns out that the discrete form of Fano's inequality is still broadly applicable in such settings.  The idea, as outlined in Section \ref{sec:intro}, is to choose a {finite subset} that still captures the inherent difficulty in the problem.  In this section, we present several tools used for this purpose.  % We first continue with estimation problems, and then move on to optimization problems.

\subsection{Minimax Estimation Setup} \label{sec:minimax_est}

Recall the setup described in Section \ref{sec:overview}: A parameter $\theta$ is known to lie in some subset $\Theta$ of a continuous domain (e.g., $\RR^p$), the samples $\Yv = (Y_1,\dotsc,Y_n)$ are drawn from a joint distribution $\Ptn(\yv)$, an estimate $\thetahat$ is formed, and the loss incurred is $\ell(\theta,\thetahat)$.  For clarity of exposition, we focus primarily on the case that there is no input, i.e., $\Xv$ in Figure \ref{fig:Reduction} is absent or deterministic.  However, the main results ({\em cf.}, Theorems \ref{thm:Packing} and \ref{thm:PackingPartial} below) extend to settings with inputs as described in Section \ref{sec:overview}; the mutual information $I(V;\Yv)$ is replaced by $I(V;\Yv|\Xv)$ in the non-adaptive setting, or $I(V;\Xv,\Yv)$ in the adaptive setting.

In continuous settings, the reduction to multiple hypothesis testing ({\em cf.}, Figure \ref{fig:Reduction}) requires that the loss function is sufficiently well-behaved.  We focus on a widely-considered class of functions that can be written as
\begin{equation}
    \ell(\theta,\thetahat) = \Phi\big( \rho(\theta,\thetahat) \big), \label{eq:ellPhiRho}
\end{equation}
where $\rho(\theta,\theta')$ is a metric, and $\Phi(\cdot)$ is an increasing function from $\RR_+$ to $\RR_+$.  For instance, the squared-$\ell_2$ loss $\ell(\theta,\theta') = \|\theta - \theta'\|_2^2$ clearly takes this form.
    
We focus on the minimax setting, defining the {\em minimax risk} as follows:
\begin{equation}
    \Mc_n(\Theta,\ell) = \inf_{\thetahat} \sup_{\theta \in \Theta} \EE_{\theta}\big[ \ell(\theta,\thetahat) \big],
\end{equation}
where the infimum is over all estimators $\thetahat = \thetahat(\Yv)$, and $\EE_{\theta}$ denotes expectation when the underlying parameter is $\theta$.  We subsequently define $\PP_{\theta}$ analogously.

% As will be exemplified in Section \ref{sec:apps_cont}, this general setting captures a wide range of problems including regression and density estimation.

\subsection{Reduction to the Discrete Case}

We present two related approaches to reducing the continuous estimation problem to a discrete one.  The first, based on the standard form of Fano's inequality in Theorem \ref{thm:Fano}, was discovered much earlier \cite{Ibr77}, and accordingly, it has been used in a much wider range of applications.  However, the second approach, based on the approximate recovery version of Fano's inequality in Theorem \ref{thm:Partial}, has recently been shown to provide added flexibility in the reduction \cite{Duc13}.

\subsubsection{Reduction with Exact Recovery}

As we discussed in Section \ref{sec:intro}, we seek to reduce the continuous problem to multiple hypothesis testing in such a way that successful minimax estimation implies success in the hypothesis test with high probability.  To this end, we choose a {\em hard subset} $\theta_1,\dotsc,\theta_M$, for which the elements are sufficiently well-separated so that the index $v \in \{1,\dotsc,M\}$ can be identified from the estimate $\thetahat$ ({\em cf.}, Figure \ref{fig:Reduction}).  This is formalized in the proof of the following result.

%\begin{defn} \label{def:packing}
%    Given a set $\Theta$ and an associated function $\rho(\theta,\theta')$, a finite subset $\{\theta_1,\dotsc,\theta_M\} \subseteq \Theta$ is said to be an {\em $\epsilon$-packing of $\Theta$ with respect to $\rho$} if
%\end{defn}

% With this definition in place, a general minimax lower bound based on the standard form of Fano's inequality is given as follows.

\begin{thm} \label{thm:Packing}
    {\em (Minimax bound via reduction to exact recovery)}
    Under the preceding minimax estimation setup, fix $\epsilon > 0$, and let $\{\theta_1,\dotsc,\theta_M\}$ be a finite subset of $\Theta$ such that   
    \begin{equation}
        \rho(\theta_v,\theta_{v'}) \ge \epsilon, \quad \forall v,v' \in \{1,\dotsc,M\}, v \ne v'. \label{eq:PackingAssump}
    \end{equation}
    Then, we have
    \begin{equation}
        \Mc_n(\Theta,\ell) \ge \Phi\Big(\frac{\epsilon}{2}\Big) \bigg( 1 - \frac{I(V;\Yv) + \log 2}{ \log M }\bigg), \label{eq:PackingResult}
    \end{equation}
    where $V$ is uniform on $\{1,\dotsc,M\}$, and the mutual information is with respect to $V \to \theta_V \to \Yv$.  Moreover, in the special case $M = 2$, we have
    \begin{equation}
        \Mc_n(\Theta,\ell) \ge \Phi\Big(\frac{\epsilon}{2}\Big) H_2^{-1}\big( \log 2 - I(V;\Yv) \big), \label{eq:PackingResultM2}
    \end{equation}
    where $H_2^{-1}(\cdot) \in [0,0.5]$ is the inverse binary entropy function.
\end{thm}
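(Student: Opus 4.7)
The plan is to follow exactly the reduction outlined in Section \ref{sec:step1}: given an arbitrary estimator $\thetahat$, turn it into a hypothesis test, use a triangle-inequality argument to pass from estimation error to testing error, and then invoke Fano's inequality (Theorem \ref{thm:Fano} for general $M$, or the binary form \eqref{eq:FanoM2b} for $M=2$) together with the data processing inequality (first part of Lemma \ref{lem:dpi}).

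First I would lower bound the supremum over $\theta \in \Theta$ by an average over the hard subset: letting $V$ be uniform on $\{1,\dotsc,M\}$ and setting $\theta = \theta_V$,
\begin{equation}
    \Mc_n(\Theta,\ell) \ge \inf_{\thetahat} \EE\big[\ell(\theta_V,\thetahat)\big] = \inf_{\thetahat} \EE\big[\Phi(\rho(\theta_V,\thetahat))\big].
\end{equation}
Next I would construct the decoder $\hat{V} = \argmin_{v \in \{1,\dotsc,M\}} \rho(\thetahat,\theta_v)$ (breaking ties arbitrarily). The key observation is that if $\rho(\thetahat,\theta_V) < \epsilon/2$, then by the assumed separation \eqref{eq:PackingAssump} and the triangle inequality every other $\theta_{v'}$ satisfies $\rho(\thetahat,\theta_{v'}) \ge \rho(\theta_V,\theta_{v'}) - \rho(\thetahat,\theta_V) > \epsilon/2$, so $\hat{V}=V$. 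Contrapositively,
\begin{equation}
    \PP[\hat{V} \ne V] \le \PP\big[\rho(\thetahat,\theta_V) \ge \epsilon/2\big].
\end{equation}
Since $\Phi$ is increasing and non-negative, Markov's inequality gives $\EE[\Phi(\rho(\theta_V,\thetahat))] \ge \Phi(\epsilon/2)\,\PP[\rho(\thetahat,\theta_V) \ge \epsilon/2] \ge \Phi(\epsilon/2)\,\PP[\hat{V}\ne V]$.

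To close the loop I would apply Fano's inequality to $(V,\hat{V})$ under the Markov chain $V \to \theta_V \to \Yv \to \thetahat \to \hat{V}$. For general $M$, \eqref{eq:Fano3} combined with the data processing inequality $I(V;\hat{V}) \le I(V;\Yv)$ yields
\begin{equation}
    \PP[\hat{V}\ne V] \ge 1 - \frac{I(V;\hat V) + \log 2}{\log M} \ge 1 - \frac{I(V;\Yv) + \log 2}{\log M},
\end{equation}
and chaining the inequalities above produces \eqref{eq:PackingResult}. For $M=2$, the identical chain of reasoning with \eqref{eq:FanoM2b} in place of \eqref{eq:Fano3} (and again data processing) gives \eqref{eq:PackingResultM2}.

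There is no real obstacle here; the only point requiring a small amount of care is verifying that the nearest-point decoder $\hat V$ is indeed a valid (measurable) function of $\Yv$ through $\thetahat$ so that $V \to \Yv \to \hat V$ is a Markov chain justifying the data-processing step, and that the infimum over estimators on the left can be taken after fixing this construction. Both are standard and hold under mild regularity on $\rho$ and $\thetahat$.
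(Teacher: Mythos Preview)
Your proposal is correct and follows essentially the same approach as the paper: reduce the supremum over $\Theta$ to an average over the hard subset $\{\theta_1,\dotsc,\theta_M\}$, use the minimum-distance decoder and the triangle inequality to link $\{\rho(\theta_V,\thetahat)<\epsilon/2\}$ to $\{\hat V=V\}$, apply Markov's inequality with the monotonicity of $\Phi$, and conclude via Fano (Theorem~\ref{thm:Fano} or \eqref{eq:FanoM2b}) plus data processing. The only cosmetic difference is that the paper applies Markov's inequality for each fixed $\theta$ and then passes from $\max_v$ to the average, whereas you average over $V$ first; the resulting chain of inequalities is the same.
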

\begin{proof}
    As illustrated in Figure \ref{fig:Reduction}, the idea is to reduce the estimation problem to a multiple hypothesis testing problem.  As an initial step, we note from Markov's inequality that, for any $\epsilon_0 > 0$,
    \begin{align}
        \sup_{\theta \in \Theta} \EE_{\theta}\big[ \ell(\theta,\thetahat) \big] 
            &\ge \sup_{\theta \in \Theta} \Phi(\epsilon_0) \PP_{\theta}[ \ell(\theta,\thetahat) \ge \Phi(\epsilon_0)] \\
            &= \Phi(\epsilon_0) \sup_{\theta \in \Theta} \PP_{\theta}[ \rho(\theta,\thetahat) \ge \epsilon_0], \label{eq:ExcessDist}
    \end{align}
    where \eqref{eq:ExcessDist} uses \eqref{eq:ellPhiRho} and the assumption that $\Phi(\cdot)$ is increasing.  % Hence, the average loss is lower bounded in terms of the probability of $\rho(\theta,\thetahat)$ exceeding $\epsilon_0$.
    
    Suppose that a random index $V$ is drawn uniformly from $\{1,\dotsc,M\}$, the samples $\Yv$ are drawn from the distribution $\Ptn$ corresponding to $\theta = \theta_V$, and the estimator is applied to produce $\thetahat$.  Let $\hat{V}$ correspond to the closest $\theta_j$ according to the metric $\rho$, i.e., $\hat{V} = \argmin_{v=1,\dotsc,M} \rho(\theta_v,\thetahat)$.  Using the triangle inequality and the assumption \eqref{eq:PackingAssump}, if $\rho(\theta_v,\thetahat) < \frac{\epsilon}{2}$ then we must have $\hat{V} = v$; hence, 
    \begin{equation}
        \PP_v\bigg[ \rho(\theta_v,\thetahat) \ge \frac{\epsilon}{2} \bigg] \ge \PP_v[\hat{V} \ne v], \label{eq:ExcessDistProb}
    \end{equation}
    where $\PP_v$ is a shorthand for $\PP_{\theta_v}$.
    
    With the above tools in place, we proceed as follows:
    \begin{align}
        \sup_{\theta \in \Theta} \PP_{\theta}\bigg[ \rho(\theta,\thetahat) \ge \frac{\epsilon}{2}\bigg]
            &\ge \max_{v=1,\dotsc,M} \PP_{v}\bigg[ \rho(\theta_v,\thetahat) \ge \frac{\epsilon}{2}\bigg] \label{eq:MinimaxEnd1} \\
            &\ge \max_{v=1,\dotsc,M} \PP_{v}[\hat{V} \ne v] \label{eq:MinimaxEnd2}  \\
            &\ge \frac{1}{M} \sum_{v=1,\dotsc,M} \PP_{v}[\hat{V} \ne v] \label{eq:MinimaxEnd3}  \\
            &\ge 1 - \frac{ I(V;\Yv) + \log 2 }{ \log M }, \label{eq:MinimaxEnd4} 
    \end{align}
    where \eqref{eq:MinimaxEnd1} follows by maximizing over a smaller set, \eqref{eq:MinimaxEnd2} follows from \eqref{eq:ExcessDistProb}, \eqref{eq:MinimaxEnd3} lower bounds the maximum by the average, and \eqref{eq:MinimaxEnd4} follows from Fano's inequality (\emph{cf.}, Theorem \ref{thm:Fano}) and the fact that $I(V;\hat{V}) \le I(V;\Yv)$ by the data processing inequality (\emph{cf.}, Lemma \ref{lem:dpi}).
    
    The proof of \eqref{eq:PackingResult} is concluded by substituting \eqref{eq:MinimaxEnd4} into \eqref{eq:ExcessDist} with $\epsilon_0 = \frac{\epsilon}{2}$, and taking the infimum over all estimators $\thetahat$.  For $M = 2$, we obtain \eqref{eq:PackingResultM2} in the same way upon replacing \eqref{eq:MinimaxEnd4} by the version of Fano's inequality for $M = 2$ given in Remark \ref{rem:FanoWeaken}.
\end{proof}

We return to this result in Section \ref{sec:global_local}, where we introduce and compare some of the most widely-used approaches to choosing the set $\{\theta_1,\dotsc,\theta_M\}$ and bounding the mutual information.

%\begin{rem}
%    The statement and proof of Theorem \ref{thm:Packing} extend to the more general case that $\rho$ only satisfies a weaker version of the triangle inequality: $\rho(\theta_1,\theta_2) \le C\big( \rho(\theta_1,\theta_0) + \rho(\theta_0,\theta_2) \big)$ for some $C \ge 1$.  In this case, $\frac{\epsilon}{2}$ in \eqref{eq:PackingResult} should be replaced by $\frac{\epsilon}{2C}$ \cite{Yan99}.  % As argued in \cite{Yan99}, the arguments can also be adapted to the case that this generalized triangle inequality only holds when $\min\{\rho(\theta_1,\theta_0), \rho(\theta_0,\theta_2)\}$ is sufficiently small.
%\end{rem}

\subsubsection{Reduction with Approximate Recovery}

The following generalization of Theorem \ref{thm:Packing}, based on Fano's inequality with approximate recovery ({\em cf.}, Theorem \ref{thm:Partial}), provides added flexibility in the reduction. An example comparing the two approaches will be given in Section \ref{sec:apps_cont} for the sparse linear regression problem.

\begin{thm} \label{thm:PackingPartial}
    {\em (Minimax bound via reduction to approximate recovery)}
    Under the preceding minimax estimation setup, fix $\epsilon > 0$, $t \in \RR$, a finite set $\Vc$ of cardinality $M$, and an arbitrary real-valued function $d(v,v')$ on $\Vc \times \Vc$, and let $\{\theta_v\}_{v \in \Vc}$ be a finite subset of $\Theta$ such that
    \begin{equation}
        d(v,v') > t \implies \rho(\theta_v,\theta_{v'}) \ge \epsilon, \quad \forall v,v' \in \Vc. \label{eq:PackingAssumpPartial}
    \end{equation}
    Then we have for any $\epsilon \ge 0$ that
    \begin{equation}
        \Mc_n(\Theta,\ell) \ge \Phi\Big(\frac{\epsilon}{2}\Big) \bigg( 1 - \frac{I(V;\Yv) + \log 2}{ \log \frac{M}{\Nmax(t)} }\bigg), \label{eq:PackingResultApprox}
    \end{equation}
    where $V$ is uniform on $\{1,\dotsc,M\}$, the mutual information is with respect to $V \to \theta_V \to \Yv$, and $\Nmax(t) = \max_{v' \in \Vc} \sum_{v \in \Vc} \openone\{ d(v,v') \le t \}$.
\end{thm}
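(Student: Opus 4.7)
The plan is to mimic the structure of the proof of Theorem \ref{thm:Packing}, but to replace the step that used the triangle inequality to identify the ``exact'' index $V$ from $\thetahat$ with a softer argument that only identifies $V$ up to the tolerance $t$ in the distance $d$. This way we land on the approximate-recovery Fano bound (Theorem \ref{thm:Partial}) instead of the standard one.

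First I would reduce expected loss to a tail probability exactly as before: by Markov's inequality and the monotonicity of $\Phi$ in the decomposition $\ell = \Phi \circ \rho$, we get
\begin{equation}
\sup_{\theta \in \Theta} \EE_{\theta}[\ell(\theta,\thetahat)] \ge \Phi\Big(\tfrac{\epsilon}{2}\Big)\, \sup_{\theta \in \Theta} \PP_{\theta}\Big[ \rho(\theta,\thetahat) \ge \tfrac{\epsilon}{2}\Big].
\end{equation}
Then I set up the hypothesis test: draw $V$ uniformly on $\Vc$, generate $\Yv$ from $\Ptvn$, and let $\hat{V} = \argmin_{v' \in \Vc} \rho(\theta_{v'},\thetahat)$.

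The key step, which replaces \eqref{eq:ExcessDistProb}, is the following implication:
\begin{equation}
\rho(\theta_V,\thetahat) < \tfrac{\epsilon}{2} \;\Longrightarrow\; d(V,\hat{V}) \le t.
\end{equation}
To see this, suppose $\rho(\theta_V,\thetahat) < \epsilon/2$ and take any $v' \in \Vc$ with $d(V,v') > t$. By hypothesis \eqref{eq:PackingAssumpPartial}, $\rho(\theta_V,\theta_{v'}) \ge \epsilon$, so the triangle inequality gives $\rho(\theta_{v'},\thetahat) \ge \epsilon - \epsilon/2 > \rho(\theta_V,\thetahat)$, meaning $v'$ cannot be the minimizer $\hat{V}$. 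Hence $d(V,\hat{V}) \le t$. Contrapositively, $\PP_v[\rho(\theta_v,\thetahat) \ge \epsilon/2] \ge \PP_v[d(v,\hat{V}) > t]$.

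With that in hand, I chain the inequalities exactly as in \eqref{eq:MinimaxEnd1}--\eqref{eq:MinimaxEnd4}: bounding the supremum over $\theta \in \Theta$ by the maximum over the hard subset, then by the average, which equals $\PP[d(V,\hat{V}) > t]$ under the uniform prior on $V$. Now I invoke the approximate-recovery Fano inequality \eqref{eq:Partial3} from Theorem \ref{thm:Partial}, together with the data processing inequality $I(V;\hat{V}) \le I(V;\Yv)$ from Lemma \ref{lem:dpi}, to lower bound this probability by $1 - \frac{I(V;\Yv) + \log 2}{\log(M/\Nmax(t))}$. Substituting back into the Markov step yields \eqref{eq:PackingResultApprox}.

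There is no real obstacle here beyond identifying the correct generalization of the triangle-inequality step; the rest is entirely parallel to Theorem \ref{thm:Packing}. The main conceptual point to verify carefully is that the asymmetric, non-metric function $d$ on $\Vc$ is only used to index which pairs $(v,v')$ must be separated by at least $\epsilon$ in the $\rho$-metric on $\Theta$, and that $\Nmax(t)$ defined via $d$ is exactly the quantity appearing in Theorem \ref{thm:Partial}, so the two definitions mesh without any additional work.
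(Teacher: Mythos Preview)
Your proposal is correct and matches the paper's own proof essentially step for step: the same Markov/monotonicity reduction to a tail probability, the same minimum-$\rho$ rule for $\hat{V}$, the same triangle-inequality argument showing $\rho(\theta_V,\thetahat)<\epsilon/2 \Rightarrow d(V,\hat{V})\le t$, and the same invocation of Theorem~\ref{thm:Partial} with data processing. There is nothing to add or correct.
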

%\begin{proof}
%
%\end{proof}
\noindent The proof is analogous to that of Theorem \ref{thm:Packing}, and can be found in \cite{Duc13}.

\subsection{Local vs.~Global Approaches} \label{sec:global_local}

Here we highlight two distinct approaches to applying the reduction to exact recovery as per Theorem \ref{thm:Packing}, termed the {\em local} and {\em global} approaches.  We do not make such a distinction for the approximate recovery variant in Theorem \ref{thm:PackingPartial}, since we are not aware of a global approach being used previously for this variant.

{\bf Local approach.} The most common approach to applying Theorem \ref{thm:Packing} is to construct a set $\{\theta_1,\dotsc,\theta_M\}$ of elements that are ``close'' in KL divergence.  Specifically, upper bounding the mutual information via Lemma \ref{lem:mi_kl} (with the vector $\Yv$ playing the role of $Y$ therein), one can weaken \eqref{eq:PackingResult} as follows.

\begin{cor} \label{cor:local}
    {\em (Local approach to minimax estimation)}
    Under the setup of Theorem \ref{thm:Packing} with a given set $\{\theta_1,\dotsc,\theta_M\}$ satisfying \eqref{eq:PackingAssump}, it holds for any auxiliary distribution $Q^n(\yv)$ that
    \begin{equation}
        \Mc_n(\Theta,\ell) \ge \Phi\Big(\frac{\epsilon}{2}\Big) \bigg( 1 - \frac{\min_{v=1,\dotsc,M} D(\Ptvn \| Q^n) + \log 2}{ \log M }\bigg). \label{eq:local}
    \end{equation}
    Moreover, the same bound holds true when $\min_{v} D(\Ptvn \| Q^n)$ is replaced by any of $\frac{1}{M}\sum_{v} D(\Ptvn \| Q^n)$, $\frac{1}{M^2} \sum_{v,v'} D(\Ptvn \| \Ptvvn)$, or $\max_{v,v'} D(\Ptvn \| \Ptvvn)$.
\end{cor}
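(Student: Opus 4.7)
This is a direct corollary of Theorem \ref{thm:Packing}: since $\Phi(\epsilon/2) \ge 0$, the right-hand side of \eqref{eq:PackingResult} is non-increasing in $I(V;\Yv)$, so any valid upper bound on the mutual information produces a valid lower bound on $\Mc_n(\Theta,\ell)$. The role of the proof is therefore to assemble appropriate upper bounds on $I(V;\Yv)$, all of which are already recorded in Lemma \ref{lem:mi_kl}.

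\textbf{Execution.} I would begin by observing that the hypothesis testing reduction underlying Theorem \ref{thm:Packing} puts $V$ uniform on $\{1,\dotsc,M\}$ with conditional law $P_{\Yv|V}(\cdot\,|\,v) = \Ptvn$, so Lemma \ref{lem:mi_kl} applies verbatim with the generic ``$Y$'' instantiated as the entire sample vector $\Yv$, as Remark \ref{rem:kl_bounds} points out. Each of the four entries in the corollary then corresponds to one of the bounds in that lemma: \eqref{eq:Aux3} yields the auxiliary-distribution form controlled by $\max_v D(\Ptvn\|Q^n)$ for arbitrary $Q^n$; specialising \eqref{eq:Aux2} to uniform $V$ yields the $\tfrac{1}{M}\sum_v D(\Ptvn\|Q^n)$ form; specialising \eqref{eq:TwoThetas1} to uniform $V$ yields the $\tfrac{1}{M^2}\sum_{v,v'} D(\Ptvn\|\Ptvvn)$ form; and \eqref{eq:TwoThetas2} yields the $\max_{v,v'} D(\Ptvn\|\Ptvvn)$ form. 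Substituting any of these into \eqref{eq:PackingResult} delivers the corresponding instance of \eqref{eq:local}, up to the observation that $\min_v D(\Ptvn\|Q^n) \le \tfrac{1}{M}\sum_v D(\Ptvn\|Q^n)$ allows the first display to be written with the minimum when the auxiliary is chosen to equalize the divergences (the stated form is then no stronger than what Lemma \ref{lem:mi_kl} delivers).

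\textbf{Difficulty.} There is no substantive obstacle: this is a ``plug and play'' corollary whose only formal content is the identification of the conditional law $P_{\Yv|V}$ in the reduction as $\Ptvn$, which is immediate from the Markov chain $V \to \theta_V \to \Yv$. The genuine difficulty in applying this corollary lies downstream, namely in choosing the packing $\{\theta_v\}$ and the auxiliary $Q^n$ (or the comparison pair $(v,v')$) so as to make both $M$ large and the resulting KL divergences small; that task is deferred to the applications in Section \ref{sec:apps_cont}.
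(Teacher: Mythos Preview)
Your approach is exactly the paper's: the corollary is obtained by substituting the KL upper bounds of Lemma~\ref{lem:mi_kl} (with $\Yv$ playing the role of $Y$) into the mutual-information slot of Theorem~\ref{thm:Packing}. The paper says as much in the sentence preceding the corollary and gives no further proof.

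There is, however, a genuine slip in your handling of the $\min_v$ form. Your inequality $\min_v D(\Ptvn\|Q^n) \le \tfrac{1}{M}\sum_v D(\Ptvn\|Q^n)$ points the wrong way: since the KL term sits in the numerator of the subtracted fraction in \eqref{eq:local}, replacing the average by the minimum makes the right-hand side \emph{larger}, i.e.\ yields a \emph{stronger} claim, not a weakening. Your parenthetical about choosing $Q^n$ to equalise the divergences only handles that single auxiliary, whereas the corollary is stated for arbitrary $Q^n$. And for arbitrary $Q^n$ the $\min_v$ version is simply false: taking $Q^n = P_{\theta_1}^n$ gives $\min_v D(\Ptvn\|Q^n)=0$, forcing $\Mc_n(\Theta,\ell)\ge \Phi(\epsilon/2)\bigl(1-\tfrac{\log 2}{\log M}\bigr)$ uniformly in $n$, which is absurd. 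The $\min_v$ in \eqref{eq:local} is a typo for $\max_v$; this is what \eqref{eq:Aux3} actually delivers and is consistent with the three alternatives listed afterwards (each of which is at least $\max_v$ only when it happens to coincide, but all four are legitimate upper bounds on $I(V;\Yv)$). You should flag the typo rather than try to argue around it.
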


Attaining a good bound in \eqref{eq:local} requires choosing $\{\theta_1,\dotsc,\theta_M\}$ to trade off two competing objectives: (i) A larger value of $M$ means that more hypotheses need to be distinguished; and (ii) A smaller value of $\min_{v} D(\Ptvn \| Q^n)$ means that the hypotheses are more similar.  Generally speaking, there is no single best approach to optimizing this trade-off, and the size and structure of the set can vary significantly from problem to problem.  Moreover, the construction need not be explicit; one can instead use probabilistic arguments to prove the existence of a set satisfying the desired properties.  Examples are given in Section \ref{sec:apps_cont}.  Naturally, an analog of Corollary \ref{cor:local} holds for $M = 2$ as per Theorem \ref{thm:Packing}, and a counterpart for approximate recovery holds as per Theorem \ref{thm:PackingPartial}.

We briefly mention that Corollary \ref{cor:local} has interesting connections with the popular Assouad method from the statistics literature, as detailed in \cite{Yu97}.  In addition, the counterpart of Corollary \ref{cor:local} with $M = 2$ (using \eqref{eq:FanoM2b} in its proof) is similarly related to an analogous technique known as Le Cam's method.

{\bf Global approach.} An alternative approach to applying Theorem \ref{thm:Packing} is the {\em global} approach, which performs the following: (i) Construct a subset of $\Theta$ with as many elements as possible subject to the assumption \eqref{eq:PackingAssump}; (ii) Construct a set that {\em covers} $\Theta$, in the sense of Lemma \ref{lem:Covering}, with as few elements as possible.  The following definitions formalize the notions of forming ``as many'' and ``as few'' elements as possible.  We write these in terms of a general real-valued function $\rho_0(\theta,\theta')$ that need not be a metric.

\begin{defn} \label{def:PackingNumber}
    A set $\{\theta_1,\dotsc,\theta_M\} \subseteq \Theta$ is said to be an {\em $\epsp$-packing set of $\Theta$ with respect to a measure $\rho_0 \,:\,\Theta \times \Theta \to \RR$} if $\rho_0(\theta_v,\theta_{v'}) \ge \epsp$ for all $v,v' \in \{1,\dotsc,M\}$ with $v' \ne v$.  The {\em $\epsp$-packing number} $M_{\rho_0}^*(\Theta,\epsp)$ is defined to be the maximum cardinality of any $\epsp$-packing. 
\end{defn}

\begin{defn} \label{def:CoveringNumber}
    A set $\{\theta_1,\dotsc,\theta_N\} \subseteq \Theta$ is said to be an {\em $\epsc$-covering set of $\Theta$ with respect to $\rho_0 \,:\,\Theta \times \Theta \to \RR$} if, for any $\theta \in \Theta$, there exists some $v \in \{1,\dotsc,N\}$ such that $\rho_0(\theta,\theta_v) \le \epsc$.  The {\em $\epsc$-covering number} $N_{\rho_0}^*(\Theta,\epsc)$ is defined to be the minimum cardinality of any $\epsc$-covering.
\end{defn}

Observe that assumption \eqref{eq:PackingAssump} of Theorem \ref{thm:Packing} precisely states that $\{\theta_1,\dotsc,\theta_M\}$ is an $\epsilon$-packing set, though the result is often applied with $M$ far smaller than the $\epsilon$-packing number.  The logarithm of the covering number is often referred to as the {\em metric entropy}.

\begin{figure}
    \begin{centering}
        \includegraphics[width=0.7\columnwidth]{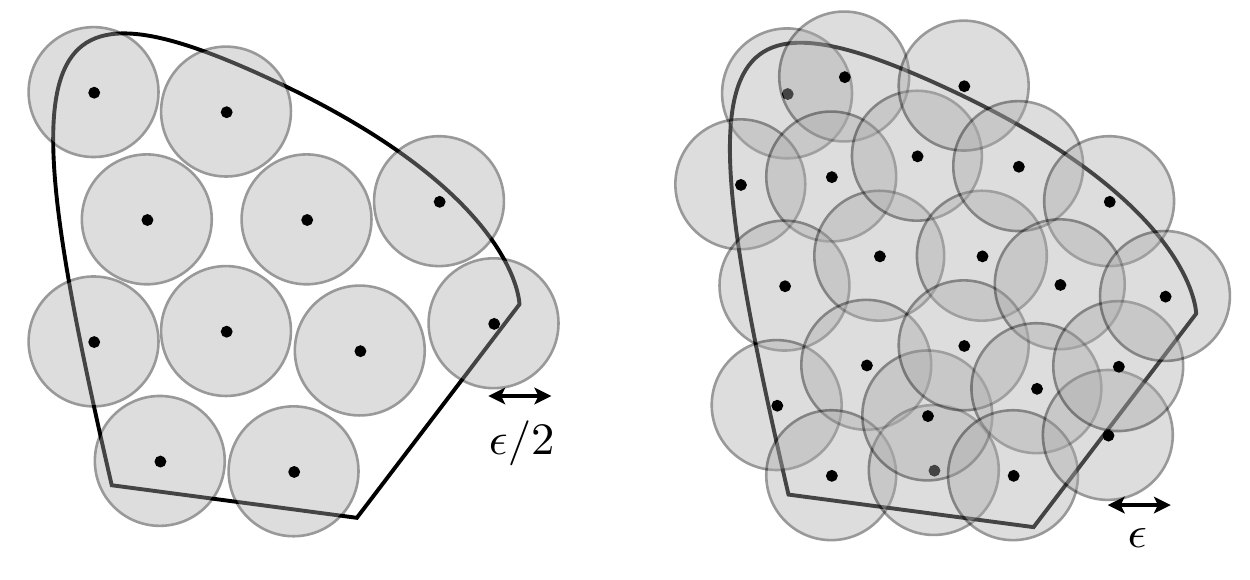}
        \par
    \end{centering}
    
    \caption{Examples of $\epsilon$-packing (Left) and $\epsilon$-covering (Right) sets in the case that $\rho_0$ is the Euclidean distance in $\RR^2$.  Since $\rho_0$ is a metric, a set of points is an $\epsilon$-packing if and only if their corresponding $\frac{\epsilon}{2}$-balls do not intersect. \label{fig:PackingCovering}}
\end{figure}

The notions of packing and covering are illustrated in Figure \ref{fig:PackingCovering}. We do not explore the properties of packing and covering numbers in detail in this chapter; the interested reader is referred to \cite{DuchiNotes,WuNotes} for a more detailed treatment.  We briefly state the following useful property, showing that the two definitions are closely related in the case that $\rho_0$ is a metric.

\begin{lem} {\em (Packing vs.~covering numbers)} \label{lem:pack_cov}
    If $\rho_0$ is a metric, then $M_{\rho_0}^*(\Theta,2\epsilon) \le N_{\rho_0}^*(\Theta,\epsilon) \le M_{\rho_0}^*(\Theta,\epsilon)$.
\end{lem}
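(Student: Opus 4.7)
The plan is to prove the two inequalities separately, with the right-hand inequality coming from a ``maximal packing is a covering'' argument, and the left-hand inequality coming from a pigeonhole plus triangle-inequality argument. Both are short and rely only on the metric axioms (specifically the triangle inequality).

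For the right-hand inequality $N^*_{\rho_0}(\Theta,\epsilon) \le M^*_{\rho_0}(\Theta,\epsilon)$, I would start by fixing an $\epsilon$-packing $\{\theta_1,\dotsc,\theta_M\} \subseteq \Theta$ of maximum cardinality, so $M = M^*_{\rho_0}(\Theta,\epsilon)$. The key claim is that such a packing is automatically an $\epsilon$-covering: if there existed some $\theta \in \Theta$ with $\rho_0(\theta,\theta_v) > \epsilon$ for every $v$, then the augmented collection $\{\theta_1,\dotsc,\theta_M,\theta\}$ would still satisfy the packing condition, contradicting the maximality of $M$. Hence every $\theta \in \Theta$ satisfies $\rho_0(\theta,\theta_v) \le \epsilon$ for some $v$, so $\{\theta_v\}_{v=1}^M$ is an $\epsilon$-covering and thus $N^*_{\rho_0}(\Theta,\epsilon) \le M$.

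For the left-hand inequality $M^*_{\rho_0}(\Theta,2\epsilon) \le N^*_{\rho_0}(\Theta,\epsilon)$, I would fix a $2\epsilon$-packing $\{\psi_1,\dotsc,\psi_M\}$ of maximum size $M = M^*_{\rho_0}(\Theta,2\epsilon)$ together with an $\epsilon$-covering $\{\theta_1,\dotsc,\theta_N\}$ of minimum size $N = N^*_{\rho_0}(\Theta,\epsilon)$. Using the covering property, for each $i$ I can pick some index $\phi(i) \in \{1,\dotsc,N\}$ with $\rho_0(\psi_i,\theta_{\phi(i)}) \le \epsilon$. The heart of the argument is that $\phi$ must be injective: if $\phi(i) = \phi(i')$ for some $i \ne i'$, then the triangle inequality gives $\rho_0(\psi_i,\psi_{i'}) \le \rho_0(\psi_i,\theta_{\phi(i)}) + \rho_0(\theta_{\phi(i)},\psi_{i'}) \le 2\epsilon$, which clashes with the packing condition $\rho_0(\psi_i,\psi_{i'}) \ge 2\epsilon$. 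Injectivity of $\phi$ then forces $M \le N$, which is what we wanted.

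The main subtlety, which I would flag but not dwell on, is that the packing condition is written with $\ge$ rather than $>$, so the triangle-inequality argument in the second step technically only yields $\rho_0(\psi_i,\psi_{i'}) = 2\epsilon$ in the borderline case rather than an outright contradiction. This is the standard minor convention gap for packing/covering lemmas; it can be resolved either by adopting strict inequality in the packing definition, by using open versus closed balls consistently, or by perturbing $\epsilon$ by an arbitrarily small amount. In all non-degenerate cases (and in particular in the continuous applications considered later in the chapter), the argument goes through unchanged, and both inequalities are genuinely strict in the pigeonhole step.
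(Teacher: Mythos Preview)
The paper does not actually prove this lemma: it is stated as a known fact with a pointer to external lecture notes, so there is no in-paper argument to compare against. Your proof is the standard one and is correct; in particular, you correctly identify the only delicate point, namely that with the paper's convention of $\ge$ in the packing definition and $\le$ in the covering definition, the pigeonhole step for the left inequality yields $\rho_0(\psi_i,\psi_{i'}) \le 2\epsilon$ together with $\rho_0(\psi_i,\psi_{i'}) \ge 2\epsilon$, which is not literally a contradiction. Your suggested resolutions (strict inequality in one definition, or an $\epsilon$-perturbation) are exactly how this is handled in the references the paper cites, and the lemma is only ever applied in this chapter in regimes where this boundary case is immaterial.
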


We now show how to use Theorem \ref{thm:Packing} to construct a lower bound on the minimax risk in terms of certain packing and covering numbers.  For the packing number, we will directly consider the metric $\rho$ used in Theorem \ref{thm:Packing}.  On the other hand, for the covering number, we consider the density $\Ptvn(\yv)$ associated with each $\theta \in \Theta$, and use the associated KL divergence measure:
\begin{equation}
    \NKLn^*(\Theta,\epsilon) = N_{\rhoKLn}^*(\Theta,\epsilon), \quad \rhoKLn(\theta,\theta') = D(P_{\theta}^n \| P_{\theta'}^n).
\end{equation}

\begin{cor} \label{cor:global}
    {\em (Global approach to minimax estimation)}
    Under the minimax estimation setup of Section \ref{sec:minimax_est}, we have for any $\epsp > 0$ and $\epscn > 0$ that
    \begin{equation}
        \Mc_n(\Theta,\ell) \ge \Phi\Big(\frac{\epsp}{2}\Big) \bigg( 1 - \frac{\log \NKLn^*( \Theta, \epscn ) + \epscn + \log 2}{ \log M_\rho^*(\Theta,\epsp) }\bigg). \label{eq:PackingCovering}
    \end{equation}
    In particular, if $\Ptn(\yv)$ is the $n$-fold product of some single-measurement distribution $P_{\theta}(y)$ for each $\theta \in \Theta$, then we have for any $\epsp > 0$ and $\epsc > 0$ that
    \begin{equation}
        \Mc_n(\Theta,\ell) \ge \Phi\Big(\frac{\epsp}{2}\Big) \bigg( 1 - \frac{\log \NKL^*( \Theta, \epsc ) + n\epsc + \log 2}{ \log M_\rho^*(\Theta,\epsp) }\bigg), \label{eq:PackingCovering2}
    \end{equation}
    where $\NKL^*(\Theta,\epsilon) = N_{\rhoKL}^*(\Theta,\epsilon)$ with $\rhoKL(\theta,\theta') = D(P_{\theta} \| P_{\theta'})$.
\end{cor}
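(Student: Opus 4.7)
The plan is to combine Theorem \ref{thm:Packing} (the exact-recovery reduction) with the covering-based mutual information bound of Lemma \ref{lem:Covering}, applied to a maximum packing of $\Theta$ under $\rho$.

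First, I would invoke Theorem \ref{thm:Packing} with $\{\theta_1,\dotsc,\theta_M\}$ chosen to be a maximum $\epsp$-packing of $\Theta$ under $\rho$, so that $M = M_\rho^*(\Theta,\epsp)$ and the hypothesis \eqref{eq:PackingAssump} is satisfied by the very definition of a packing. This yields
\begin{equation*}
    \Mc_n(\Theta,\ell) \ge \Phi\Big(\tfrac{\epsp}{2}\Big) \bigg( 1 - \frac{I(V;\Yv) + \log 2}{ \log M_\rho^*(\Theta,\epsp) }\bigg),
\end{equation*}
where $V$ is uniform on the packing and $\Yv \sim P_{\theta_V}^n$.

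Next, I would upper bound $I(V;\Yv)$ by the covering argument of Lemma \ref{lem:Covering}. Letting $N = \NKLn^*(\Theta,\epscn)$ and taking $\{\theta_1^c,\dotsc,\theta_N^c\}$ to be a minimum $\epscn$-covering of $\Theta$ under $\rhoKLn$, for every packing point $\theta_v$ there exists some $j(v)$ with $D\big(\Ptvn \,\big\|\, P_{\theta_{j(v)}^c}^n\big) \le \epscn$. Applying Lemma \ref{lem:Covering} with $Q_j = P_{\theta_j^c}^n$ then gives $I(V;\Yv) \le \log N + \epscn = \log \NKLn^*(\Theta,\epscn) + \epscn$. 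Substituting this into the previous display yields \eqref{eq:PackingCovering}.

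For the product-measure specialization \eqref{eq:PackingCovering2}, I would use the additivity of KL divergence for product measures, $D(P_\theta^n \| P_{\theta'}^n) = n\, D(P_\theta \| P_{\theta'})$. Thus an $\epsc$-covering of $\Theta$ under $\rhoKL$ immediately gives an $(n\epsc)$-covering under $\rhoKLn$, so $\NKLn^*(\Theta, n\epsc) \le \NKL^*(\Theta,\epsc)$, and applying \eqref{eq:PackingCovering} with $\epscn = n\epsc$ finishes the proof. The proof is essentially routine given the earlier tools; the only mild subtlety is the conceptual one of matching a covering of $\Theta$ (in a KL-based pseudo-metric) against a packing of $\Theta$ (in $\rho$), while the actual mutual information bound flows through the associated product distributions on $\Yv$.
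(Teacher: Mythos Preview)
Your proposal is correct and matches the paper's proof essentially line for line: apply Theorem~\ref{thm:Packing} with a maximum $\epsp$-packing, bound $I(V;\Yv)$ via Lemma~\ref{lem:Covering} using a minimum $\epscn$-covering in $\rhoKLn$, and then specialize to the product case by setting $\epscn = n\epsc$ and invoking additivity of KL divergence.
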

\begin{proof}
    Since Theorem \ref{thm:Packing} holds for any packing set, it holds for the maximal packing set.  Moreover, using Lemma \ref{lem:Covering}, we have $I(V;\Yv) \le \log \NKLn^*( \Theta, \epscn ) + \epscn$ in \eqref{eq:PackingResult}, since covering the entire space $\Theta$ is certainly enough to cover the elements in the packing set.  Combining these, we obtain the first part of the corollary.  The second part follows directly from the first part by choosing $\epscn = n\epsc$ and noting that the KL divergence is additive for product distributions.
\end{proof}

Corollary \ref{cor:global} has been used as the starting point to derive minimax lower bounds for a wide range of problems \cite{Yan99}; see Section \ref{sec:apps_cont} for an example.
% often using bounds on the packing and covering numbers known from function approximation theory.  After substituting such bounds, obtaining a minimax lower bound amounts to balancing $\epsp$ and $\epsc$, i.e., choosing them to maximize the lower bound on the minimax risk.
It has been observed that the global approach is mainly useful for infinite-dimensional problems such as density estimation and non-parametric regression, with the local approach typically being superior for finite-dimensional problems such as vector or matrix estimation.

\subsection{Beyond Estimation -- Fano's Inequality for Optimization} \label{sec:opt}

While the minimax estimation framework captures a diverse range of problems of interest, there are also interesting problems that it does not capture.  A notable example, which we consider in this section, is {\em stochastic optimization}. We provide a brief treatment, and refer the reader to \cite{Rag11} for further details and results.
% Although this problem does not directly fall under the estimation framework considered thus far, similar ideas can still be applied in order to obtain converse bounds.  

We consider the following setup:
\begin{itemize}
    \item We seek to minimize an unknown function $f \,:\,\Xc \to \RR$ on some input domain $\Xc$, i.e., to find a point $x \in \Xc$ such that $f(x)$ is as low as possible.
    \item The algorithm proceeds in iterations: At the $i$-th iteration, a point $x_i \in \Xc$ is queried, and an {\em oracle} returns a sample $y_i$ depending on the function, e.g., a noisy function value, a noisy gradient, or a tuple containing both.  The selected point $x_i$ can depend on the past queries and samples.
    \item After iteratively sampling $n$ points, the optimization algorithm returns a final point $\hat{x}$, and the {\em loss} incurred is $\ell_f(\hat{x}) = f(\hat{x}) - \min_{x \in \Xc} f(x)$, i.e., the gap to the optimal function value.
    \item For a given class of functions $\Fc$, the {\em minimax risk} is given by
    \begin{equation}
        \Mc_n(\Fc) = \inf_{\hat{X}} \sup_{f \in \Fc} \EE_f[ \ell_f(\hat{X}) ],
    \end{equation}
    where the infimum is over all optimization algorithms that iteratively query the function $n$ times and return a final point $\hat{x}$ as above, and $\EE_f$ denotes expectation when the underlying function is $f$.
\end{itemize}

\noindent In the following, we let $\Xv = (X_1,\dotsc,X_n)$ and $\Yv = (Y_1,\dotsc,Y_n)$ denote the queried locations and samples across the $n$ rounds. %, using capital letters to indicate that these are random in noisy scenarios.

\begin{thm} \label{thm:PackingConvex}
    {\em (Minimax bound for noisy optimization)}
    Fix $\epsilon > 0$, and let $\{f_1,\dotsc,f_M\} \subseteq \Fc$ be a finite subset of $\Fc$ such that for each $x \in \Xc$, we have $\ell_{f_v}(x) \le \epsilon$ for at most one value of $v \in \{1,\dotsc,M\}$.  Then we have
    \begin{equation}
        \Mc_n(\Fc) \ge \epsilon \cdot \bigg( 1 - \frac{I(V;\Xv,\Yv) + \log 2}{ \log M }\bigg), \label{eq:PackingConvex}
    \end{equation}
    where $V$ is uniform on $\{1,\dotsc,M\}$, and the mutual information is with respect to $V \to f_V \to (\Xv,\Yv)$.  Moreover, in the special case $M = 2$, we have
    \begin{equation}
        \Mc_n(\Fc) \ge \epsilon \cdot H_2^{-1}\big( \log 2 - I(V;\Xv,\Yv) \big), \label{eq:PackingConvex2}
    \end{equation}
    where $H_2^{-1}(\cdot) \in [0,0.5]$ is the inverse binary entropy function.
\end{thm}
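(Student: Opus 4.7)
The plan is to mirror the argument of Theorem \ref{thm:Packing}, using the collection $\{f_1, \dotsc, f_M\}$ as a hard subset and interpreting the optimization loss $\ell_{f_v}(\hat{X})$ as the analog of the metric distance $\rho(\theta_v, \hat{\theta})$. The crucial structural input is the hypothesis that each $x \in \Xc$ is ``$\epsilon$-good'' for at most one $f_v$; this plays the role of the separation condition \eqref{eq:PackingAssump} and lets one decode $V$ from any sufficiently good point $\hat{X}$.

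First I would let $V$ be uniform on $\{1,\dotsc,M\}$, take the true function to be $f_V$, and run the adaptive algorithm to produce queries $\Xv$, samples $\Yv$, and a final estimate $\hat{X}$. I would then define a hypothesis-test estimate $\hat{V} = \hat{V}(\hat{X})$ to be the unique $v \in \{1,\dotsc,M\}$ with $\ell_{f_v}(\hat{X}) \le \epsilon$ if such a $v$ exists, and an arbitrary fixed value otherwise. By the theorem's assumption this map is well-defined, and whenever $\ell_{f_V}(\hat{X}) \le \epsilon$ we necessarily have $\hat{V} = V$. Contrapositively, the event $\{\hat{V} \ne V\}$ is contained in $\{\ell_{f_V}(\hat{X}) > \epsilon\}$.

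Next, I would convert this into a lower bound on the expected loss by chaining Markov's inequality with a max-over-average step, exactly as in \eqref{eq:MinimaxEnd1}--\eqref{eq:MinimaxEnd4}:
\begin{align*}
\sup_{f \in \Fc} \EE_f[\ell_f(\hat{X})]
&\ge \max_{v} \EE_{f_v}[\ell_{f_v}(\hat{X})] \ge \epsilon \cdot \max_v \PP_{f_v}[\ell_{f_v}(\hat{X}) > \epsilon] \\
&\ge \epsilon \cdot \frac{1}{M}\sum_{v=1}^{M} \PP_{f_v}[\hat{V} \ne v] = \epsilon \cdot \PP[\hat{V} \ne V].
\end{align*}
Fano's inequality \eqref{eq:Fano3} then gives $\PP[\hat{V} \ne V] \ge 1 - (I(V;\hat{V}) + \log 2)/\log M$, while for $M = 2$ the sharper bound \eqref{eq:FanoM2b} from Remark \ref{rem:FanoWeaken} yields $\PP[\hat{V} \ne V] \ge H_2^{-1}\big(\log 2 - I(V;\hat{V})\big)$.

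Finally, to handle the adaptive interaction with the oracle, I would note that $\hat{X}$ (and hence $\hat{V}$) is a deterministic function of $(\Xv,\Yv)$, so $V \to (\Xv,\Yv) \to \hat{V}$ is a Markov chain and part (iii) of Lemma \ref{lem:dpi} gives $I(V;\hat{V}) \le I(V;\Xv,\Yv)$. Substituting this into the Fano bounds above and taking the infimum over all algorithms yields \eqref{eq:PackingConvex} and \eqref{eq:PackingConvex2}. The only real subtlety is that each $X_i$ depends on past samples and hence on $V$, so one must keep $\Xv$ inside the mutual information rather than conditioning on it; this is exactly what the adaptive form of data processing buys us, and I do not anticipate any further obstacles beyond this bookkeeping.
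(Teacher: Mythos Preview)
Your proposal is correct and follows essentially the same approach as the paper: reduce to hypothesis testing via a decoder $\hat{V}$ built from $\hat{X}$, then chain Markov's inequality, the max-over-average step, Fano's inequality (or its $M=2$ refinement), and data processing through the Markov chain $V \to (\Xv,\Yv) \to \hat{V}$. The only cosmetic difference is that the paper defines $\hat{V} = \argmin_{v} f_v(\hat{x})$ rather than your ``unique $v$ with $\ell_{f_v}(\hat{X}) \le \epsilon$,'' but both yield the needed inclusion $\{\hat{V} \ne v\} \subseteq \{\ell_{f_v}(\hat{X}) > \epsilon\}$ under the stated separation hypothesis.
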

\begin{proof}
    By Markov's inequality, we have
    \begin{equation}
    \sup_{f \in \Fc} \EE_f[ \ell_f(\hat{X}) ] \ge \sup_{f \in \Fc} \epsilon \cdot  \PP_f[ \ell_f(\hat{X}) \ge \epsilon ]. \label{eq:ConvExcessDist_main} 
    \end{equation}
    Suppose that a random index $V$ is drawn uniformly from $\{1,\dotsc,M\}$, and the triplet $(\Xv,\Yv,\hat{X})$ is generated by running the optimization algorithm on $f_V$.  Given $\hat{X} = \hat{x}$, let $\hat{V}$ index the function among $\{f_1,\dotsc,f_M\}$ with the lowest corresponding value: $\hat{V} = \argmin_{v=1,\dotsc,M} f_{v}(\hat{x})$.
    
    By the assumption that any $x$ satisfies $\ell_{f_v}(x) \le \epsilon$ for at most one of the $M$ functions, we find that the condition $\ell_{f_v}(\hat{x}) \le \epsilon$ implies $\hat{V} = v$.  Hence, we have
    \begin{equation}
    \PP_{v}\big[ \ell_{f_v}(\hat{X}) > \epsilon \big] \ge \PP_{f_v}[\hat{V} \ne v]. \label{eq:ConvExcessDistProb_main} 
    \end{equation}
    The remainder of the proof follows \eqref{eq:MinimaxEnd1}--\eqref{eq:MinimaxEnd4} in the proof of Theorem \ref{thm:Packing}: We lower bound the minimax risk $\sup_{f \in \Fc} \PP_{f}\big[ \ell_f(\hat{X}) \ge \epsilon \big]$ by the average over $V$, and apply Fano's inequality ({\em cf.}, Theorem \ref{thm:Fano} and Remark \ref{rem:FanoWeaken}) and the data processing inequality (\emph{cf.}, third part of Lemma \ref{lem:mi_adaptive}).
\end{proof}
% \noindent The proof is again analogous to that of Theorem \ref{thm:Packing}, and can be found in \cite{Rag11}. % is given in Appendix \ref{sec:pf_packing_p}.

\begin{rem}
    Theorem \ref{thm:PackingPartial} is based on reducing the optimization problem to a multiple hypothesis testing problem with exact recovery.  One can derive an analogous result reducing to approximate recovery, but we are unaware of any works making use of such a result for optimization.
\end{rem}

\section{Applications -- Continuous Settings} \label{sec:apps_cont}

In this section, we present three applications of the tools introduced in Section \ref{sec:continuous}: sparse linear regression, density estimation, and convex optimization.  Similarly to the discrete case, our examples are chosen to permit a relatively simple analysis, while still effectively exemplifying the key concepts and tools.

% All of the converse bounds given in this section are tight in terms of scaling laws, but not in terms of the constants.  Characterizing the constants precisely in continuous settings is typically a formidable task, though we briefly discuss some potential improvements in Section \ref{sec:discussion}.

\subsection{Sparse Linear Regression} \label{sec:sparse}

% The linear regression problem has recently regained considerable attention following the development of methods that exploit {\em structure}, thereby allowing accurate estimation even when there are far fewer measurements than parameters.  In this example, we consider the ubiquitous structure of {\em sparsity}, which also played a role in the group testing example of Section \ref{sec:apps_discrete}.  

In this example, we extend the $1$-sparse linear regression example of Section \ref{sec:overview} to the more general scenario of $k$-sparsity.  The setup is described as follows:
\begin{itemize}
    \item We wish to estimate a high-dimensional vector $\theta \in \RR^p$ that is {\em $k$-sparse}: $\|\theta\|_0 \le k$, where $\|\theta\|_0$ is the number of non-zero entries in $\theta$.
    \item The vector of $n$ measurements is given by $\Yv = \Xv\theta + \Zv$, where $\Xv \in \RR^{n \times p}$ is a known deterministic matrix, and $\Zv \sim \Ndist(\bzero,\sigma^2 \Iv_n)$ is additive Gaussian noise.
    \item Given knowledge of $\Xv$ and $\Yv$, an estimate $\thetahat$ is formed, and the loss is given by the squared $\ell_2$-error, $\ell(\theta,\thetahat) = \|\theta - \thetahat\|_2^2$, corresponding to \eqref{eq:ellPhiRho} with $\rho(\theta,\thetahat) = \|\theta - \thetahat\|_2$ and $\Phi(\cdot) = (\cdot)^2$.  Overloading the general notation $\Mc_n(\Theta,\ell)$, we write the minimax risk as
    \begin{equation}
        \Mc_n(k,\Xv) = \inf_{\thetahat} \sup_{\theta \in \RR^p\,:\,\|\theta\|_0 \le k} \EE_{\theta}[ \|\theta - \thetahat\|_2^2 ],
    \end{equation}
    where $\EE_{\theta}$ denotes expectation when the underlying vector is $\theta$.  
\end{itemize}

\subsubsection{Minimax Bound}

The lower bound on the minimax risk is formally stated as follows.  To simplify the analysis slightly, we state the result in an asymptotic form for the sparse regime $k = o(p)$; with only minor changes, one can attain a non-asymptotic variant attaining the same scaling laws for more general choices of $k$ \cite{Duc13}.

\begin{thm} \label{thm:SparseLinReg}
    {\em (Sparse linear regression)}
    Under the preceding sparse linear regression problem with $k = o(p)$ and a fixed regression matrix $\Xv$, we have
    \begin{equation}
        \Mc_n(k,\Xv) \ge \frac{\sigma^2 kp \log\frac{p}{k}}{ 32 \|\Xv\|_F^2 } (1+o(1))
    \end{equation}
    as $p \to \infty$.  In particular, under the constraint $\|\Xv\|_F^2 \le np \Gamma$ for some $\Gamma > 0$, achieving $\Mc_n(k,\Xv) \le \delta$ requires $n \ge \frac{\sigma^2 k \log\frac{p}{k}}{ 32\delta \Gamma } (1+o(1))$.
\end{thm}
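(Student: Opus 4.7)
The plan is to invoke the approximate-recovery form of the Fano reduction, Theorem \ref{thm:PackingPartial}, using as hard subset the entire collection of $k$-subsets of $[p]$. Specifically, let $\Vc$ be the set of $S \subseteq [p]$ with $|S| = k$, so that $M = |\Vc| = \binom{p}{k}$, and assign $\theta_v = \gamma \omega_v$, where $\omega_v \in \{0,1\}^p$ is the indicator of $S_v$ and $\gamma > 0$ is a scalar to be tuned at the end. The natural distance function is $d(v,v') = |S_v \triangle S_{v'}|$ with threshold $t = k$, so that $\rho(\theta_v,\theta_{v'}) = \gamma\sqrt{|S_v \triangle S_{v'}|} \geq \gamma\sqrt{k} =: \epsilon$ whenever $d(v,v') > k$; consequently $\Phi(\epsilon/2) = \gamma^2 k / 4$ for the squared-$\ell_2$ loss.

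Two main ingredients then need to be put together. First, a combinatorial estimate of $\Nmax(k)$: since $|S_v \triangle S_{v'}| = 2j$ corresponds to swapping $j$ indices in and $j$ out, one has $\Nmax(k) = \sum_{j=0}^{\lfloor k/2 \rfloor} \binom{k}{j} \binom{p-k}{j}$, and routine Stirling-type bounds yield $\log \Nmax(k) \leq \frac{k}{2} \log\frac{p}{k}(1+o(1))$ in the regime $k = o(p)$. Combined with $\log M = k \log\frac{p}{k}(1+o(1))$, this gives $\log(M / \Nmax(k)) \geq \frac{k}{2} \log\frac{p}{k}(1-o(1))$, so the effective log-alphabet size is essentially halved compared with the exact-recovery case.

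Second, I would upper bound the mutual information via the average-KL form in Lemma \ref{lem:mi_kl}, namely $I(V;\Yv) \leq \frac{1}{M^2} \sum_{v,v'} D(\Ptvn \| \Ptvvn)$. For additive Gaussian noise one has $D(\Ptvn \| \Ptvvn) = \|\Xv(\theta_v - \theta_{v'})\|_2^2 / (2\sigma^2)$, and averaging over the full symmetric ensemble is clean: writing the sum as $2\EE[\|\Xv \omega\|_2^2] - 2\|\EE[\Xv \omega]\|_2^2$ for $\omega$ uniform on $\Vc$, one uses the moments $\EE[\omega_j] = k/p$ and $\EE[\omega_j \omega_{j'}] = k(k-1)/(p(p-1))$ to rewrite everything in terms of $\|\Xv\|_F^2$ and $\|\Xv \mathbf{1}_p\|_2^2$. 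After cancellation the expression becomes $\frac{2k(p-k)}{p(p-1)}\bigl(\|\Xv\|_F^2 - \|\Xv \mathbf{1}_p\|_2^2 / p\bigr)$, and dropping the non-negative subtracted term gives $I(V;\Yv) \leq \gamma^2 k \|\Xv\|_F^2 / (p \sigma^2) \cdot (1+o(1))$.

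Finally, substituting these estimates into \eqref{eq:PackingResultApprox} and choosing $\gamma^2 = p \sigma^2 \log(p/k) / (4 \|\Xv\|_F^2)$ makes the Fano ratio approach $1/2$ as $p \to \infty$; the resulting bound is $\Mc_n(k, \Xv) \geq (\gamma^2 k / 4) \cdot (1/2) \cdot (1-o(1)) = \sigma^2 k p \log(p/k) / (32 \|\Xv\|_F^2) \cdot (1-o(1))$, matching the theorem. The main technical step is the average-KL computation, which crucially leverages the symmetry of the full $k$-subset ensemble to absorb the $\|\Xv \mathbf{1}_p\|_2^2$ contribution without needing any structural condition on $\Xv$; a Gilbert--Varshamov-style packing would not deliver this as cleanly. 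An incidental but fiddly aspect is keeping the $(1+o(1))$ bookkeeping consistent across the three estimates (cardinality $M$, count $\Nmax(k)$, and average KL), since slippage anywhere propagates into the constant $1/32$.
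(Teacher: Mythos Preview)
Your proposal is correct and arrives at the same constant $1/32$, but it differs from the paper's proof in two coupled respects. The paper takes the signed ensemble $\Vc=\{v\in\{-1,0,1\}^p:\|v\|_0=k\}$ with Hamming distance and threshold $t=k/2$, and bounds the mutual information via \eqref{eq:Aux2} with the single reference $Q_Y=\Ndist(0,\sigma^2)$; because the signed ensemble is centered ($\EE[V]=0$, $\cov[V]=\tfrac{k}{p}\Iv_p$), the computation $\EE[\|\Xv V\|_2^2]=\tfrac{k}{p}\|\Xv\|_F^2$ is immediate. You instead use the unsigned $\{0,1\}$ indicator ensemble, which is \emph{not} centered, so the single-reference bound \eqref{eq:Aux2} with $Q_Y=\Ndist(0,\sigma^2)$ would pick up an uncontrolled $\|\Xv\mathbf{1}\|_2^2$ term. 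You compensate by using the pairwise form \eqref{eq:TwoThetas1}, which is translation-invariant and effectively computes a variance; the $\|\Xv\mathbf{1}\|_2^2$ contribution then enters with a negative sign and can be dropped. Both routes are clean, but they illustrate different points: the paper's shows that a well-chosen centered ensemble makes the simplest KL bound suffice, while yours shows that \eqref{eq:TwoThetas1} can absorb a non-zero mean without any structural assumption on $\Xv$.
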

\begin{proof}
    We present a simple proof based on a reduction to approximate recovery (\emph{cf.}, Theorem \ref{thm:PackingPartial}). In Section \ref{sec:alt_proof}, we discuss an alternative proof based on a reduction to exact recovery (\emph{cf.}, Theorem \ref{thm:Packing}).  
    
    We define the set 
    \begin{equation}
        \Vc = \big\{ v \in \{-1,0,1\}^p \,:\, \|v\|_0 = k \big\}, \label{eq:SparseChoiceV}
    \end{equation}
    and to each $v \in \Vc$, we associate a vector $\theta_v = \epsilon' v$ for some $\epsilon' > 0$.  Letting $d(v,v')$ denote the Hamming distance, we have the following properties:
    \begin{itemize}
        \item For $v,v' \in \Vc$, if $d(v,v') > t$, then $\|\theta_v - \theta_{v'}\|_2 > \epsilon'\sqrt{t}$;
        \item The cardinality of $\Vc$ is $|\Vc| = 2^k {p \choose k}$, yielding $\log|\Vc| \ge \log{p \choose k} \ge k\log\frac{p}{k}$;
        \item The quantity $\Nmax(t)$ in Theorem \ref{thm:PackingPartial} is the maximum possible number of $v' \in \Vc$ such that $d(v, v') \le t$ for a fixed $v$.  Setting $t = \frac{k}{2}$, a simple counting argument gives $\Nmax(t) \le \sum_{j=0}^{\lceil k/2 \rceil} 2^j {p \choose j} \le \big( \lceil\frac{k}{2}\rceil + 1\big) \cdot 2^{\lceil k/2 \rceil } \cdot {p \choose \lceil k/2 \rceil}$, which simplifies to $\log \Nmax(t) \le \big(\frac{k}{2}\log\frac{p}{k}\big)(1+o(1))$ due to the assumption $k = o(p)$.
    \end{itemize}
    From these observations, applying Theorem \ref{thm:PackingPartial} with $t = \frac{k}{2}$ and $\epsilon = \epsilon'\sqrt{\frac{k}{2}}$ yields
    \begin{equation}
        \Mc_n(k,\Xv) \ge \frac{k \cdot (\epsilon')^2}{8} \bigg( 1 - \frac{I(V;\Yv) + \log 2}{ \big(\frac{k}{2}\log\frac{p}{k}\big)(1+o(1))  }\bigg). \label{eq:SparseFano1}
    \end{equation}
    Note that we do not condition on $\Xv$ in the mutual information, since we have assumed that $\Xv$ is deterministic.
    
     To bound the mutual information, we first apply tensorization ({\em cf.}, first part of Lemma \ref{lem:mi_tensorization}) to obtain $I(V;\Yv) \le \sum_{i=1}^n I(V;Y_i)$, and then bound each $I(V;Y_i)$ using equation \eqref{eq:Aux2} in Lemma \ref{lem:mi_kl}.  We let $Q_Y$ be the $\Ndist(0,\sigma^2)$ density function, and we let $P_{v,i}$ denote the density function of $\Ndist( X_i^T \theta_v, \sigma^2)$, where $X_i$ is the transpose of the $i$-th row of $\Xv$.   Since the KL divergence between the $\Ndist(\mu_0,\sigma^2)$ and $\Ndist(\mu_1,\sigma^2)$ density functions is $\frac{(\mu_1 - \mu_0)^2}{2\sigma^2}$, we have $D(P_{v,i} \| Q_Y) = \frac{ | X_i^T \theta_v |^2 }{ 2\sigma^2 }$.  As a result, Lemma \ref{lem:mi_kl} yields $I(V;Y_i) \le \frac{1}{|\Vc|}\sum_{v} D(P_{v,i} \| Q_Y) = \frac{1}{2\sigma^2}\EE\big[ | X_i^T \theta_V |^2  \big]$ for uniform $V$.  Summing over $i$ and recalling that $\theta_v = \epsilon' v$, we deduce that
    \begin{equation}
        I(V;\Yv) \le \frac{(\epsilon')^2}{2\sigma^2} \EE[ \|\Xv V\|_2^2 ]. \label{eq:SparseInitMI}
    \end{equation}
    From the choice of $\Vc$ in \eqref{eq:SparseChoiceV}, we can easily compute $\cov[V] = \frac{k}{p} \Iv_p$, which implies that $\EE[ \|\Xv V\|_2^2 ] = \frac{k}{p} \|\Xv\|_F^2$.  Substitution into \eqref{eq:SparseInitMI} yields $I(V;\Yv) \le \frac{(\epsilon')^2}{2\sigma^2} \cdot \frac{k}{p} \|\Xv\|_F^2$, and we conclude from \eqref{eq:SparseFano1} that 
    \begin{equation}
        \Mc_n(k,\Xv) \ge \frac{k \cdot (\epsilon')^2}{8} \bigg( 1 - \frac{\frac{(\epsilon')^2}{2\sigma^2} \cdot \frac{k}{p} \|\Xv\|_F^2 + \log 2}{ \big(\frac{k}{2}\log\frac{p}{k}\big)(1+o(1))  }\bigg).
    \end{equation}
    The proof is concluded by setting $(\epsilon')^2 = \frac{\sigma^2 p \log \frac{p}{k}}{ 2\|\Xv\|_F^2 }$, which is chosen to make the bracketed term tend to $\frac{1}{2}$.
\end{proof}

Up to constant factors, the lower bound in Theorem \ref{thm:SparseLinReg} cannot be improved without additional knowledge of $\Xv$ beyond its Frobenius norm \cite{Can13}.  For instance, in the case that $\Xv$ has i.i.d.~Gaussian entries, a matching upper bound holds with high probability under maximum-likelihood decoding.

\subsubsection{Alternative Proof: Reduction with Exact Recovery} \label{sec:alt_proof}

In contrast to the proof given above (adapted from \cite{Duc13}), the first known proof of Theorem \ref{thm:SparseLinReg} was based on packing with {\em exact} recovery (\emph{cf.}, Theorem \ref{thm:Packing}) \cite{Can13}.  For the sake of comparison, we briefly outline this alternative approach, which turns out to be more complicated.

The main step is to prove the existence of a set $\{\theta_1,\dotsc,\theta_M\}$ satisfying the following properties:
\begin{itemize}
    \item The number of elements satisfies $M = \Omega\big( k\log\frac{p}{k} \big)$;
    \item Each element is $k$-sparse with non-zero entries equal to $\pm 1$;
    \item The elements are well-separated in the sense that $\|\theta_v - \theta_{v'}\|_2^2 = \Omega(k)$ for $v \ne v'$;
    \item The empirical covariance matrix is close to a scaled identity matrix in the following sense: $\big\| \frac{1}{M} \sum_{v=1}^M \theta_v \theta_v^T - \frac{k}{p} \cdot \Iv_p \|_{2\to2} = o\big(\frac{k}{p}\big)$, where $\|\cdot\|_{2\to2}$ denotes the $\ell_2/\ell_2$-operator norm, i.e., the largest singular value.
\end{itemize}
Once this is established, the proof proceeds along the same lines as the proof we gave above, scaling the vectors down by some $\epsilon' > 0$ and using Theorem \ref{thm:Packing} in place of Theorem \ref{thm:PackingPartial}.

The existence of the packing set is proved via a probabilistic argument: If one generates $\Omega\big( k\log\frac{p}{k} \big)$ uniformly random $k$-sparse sequences with non-zero entries equaling $\pm 1$, then these will satisfy the remaining two properties with positive probability.  While it is straightforward to establish the condition of being well-separated, the proof of the condition on the empirical covariance matrix requires a careful application of the non-elementary matrix Bernstein inequality. 

Overall, while the two approaches yield the same result up to constant factors in this example, the approach based on approximate recovery is entirely elementary and avoids the preceding difficulties.

\subsection{Density Estimation}

In this subsection, we consider the problem of estimating an entire probability density function given samples from its distribution, commonly known as {\em density estimation}.  We consider a non-parametric view, meaning that the density does not take any specific parametric form.  As a result, the problem is inherently {infinite-dimensional}, and lends itself to the {global} packing and covering approach introduced in Section \ref{sec:global_local}. 

While many classes of density functions have been considered in the literature \cite{Yan99}, we focus our attention on a specific setting for clarity of exposition:
\begin{itemize}
    \item The density function $f$ that we seek to estimate is defined on the domain $[0,1]$, i.e., $f(y) \ge 0$ for all $y \in [0,1]$, and $\int_{0}^1 f(y)dy = 1$.
    \item We assume that $f$ satisfies the following conditions:
    \begin{equation}
        f(y) \ge \eta, \forall y \in [0,1], \qquad \|f\|_{\TV} \le \Gamma
    \end{equation}
    for some $\eta \in (0,1)$ and $\Gamma > 0$, where the {\em total variation} (TV) norm is defined as $\|f\|_{\TV} = \sup_{L} \sup_{0 \le x_1 \le \dotsc \le x_L \le 1} \sum_{l=2}^L \big( f(x_l) - f(x_{l-1}) \big)$.  The set of all density functions satisfying these constraints is denoted by $\Fc_{\eta,\Gamma}$.
    \item Given $n$ independent samples $\Yv = (Y_1,\dotsc,Y_n)$ from $f$, an estimate $\hat{f}$ is formed, and the loss is given by $\ell(f,\hat{f}) = \|f - \hat{f}\|_2^2 = \int_0^1 ( f(x) - \hat{f}(x))^2 dx$.  Hence, the minimax risk is given by
    \begin{equation}
        \Mc_n(\eta,\Gamma) = \inf_{\hat{f}} \sup_{f \in \Fc_{\eta,\Gamma}} \EE_f\big[\|f - \hat{f}\|_2^2\big],
    \end{equation}
    where $\EE_f$ denotes expectation when the underlying density is $f$.
\end{itemize}

\subsubsection{Minimax Bound}

The minimax lower bound is given is follows.

\begin{thm} \label{thm:DensityEst}
    {\em (Density estimation)}
    Consider the preceding density estimation setup with some $\eta \in (0,1)$ and $\Gamma > 0$ not depending on $n$.  There exists a constant $c > 0$ (depending on $\eta$ and $\Gamma$) such that in order to achieve $\Mc_n(\eta,\Gamma) \le \delta$, it is necessary that
    \begin{equation}
        n \ge c \cdot \Big(\frac{1}{\delta}\Big)^{3/2}
    \end{equation}
    when $\delta$ is sufficiently small.  In other words, $\Mc_n(\eta,\Gamma) = \Omega\big(n^{-2/3}\big)$.
\end{thm}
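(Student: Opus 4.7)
The plan is to apply Theorem~\ref{thm:Packing} after exhibiting a well-separated finite subfamily of $\Fc_{\eta,\Gamma}$ obtained by perturbing the uniform density by Haar-like bumps at resolution $m$ and amplitude $\alpha$, with $m \asymp n^{1/3}$ and $\alpha \asymp n^{-1/3}$ chosen to balance three competing constraints. Although the global covering/packing approach (Corollary~\ref{cor:global}) is traditionally advocated for infinite-dimensional problems like this one, the local approach combined with a simple Haar-type construction already yields the sharp rate and avoids invoking the Birman--Solomjak metric entropy bound for $BV$ classes.

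First, I would construct the hypothesis family. Partition $[0,1]$ into $m$ equal sub-intervals $I_1,\dotsc,I_m$, and on each $I_i$ let $\phi_i$ be the zero-mean step function equal to $+\alpha$ on the left half of $I_i$ and $-\alpha$ on the right half. For every $v \in \{-1,+1\}^m$, set $f_v(y) = 1 + \sum_{i=1}^m v_i \phi_i(y)$. Then $f_v$ is a probability density with $f_v \ge 1-\alpha$ and $\|f_v\|_{\TV} \le 4\alpha m$, so $\{f_v\} \subseteq \Fc_{\eta,\Gamma}$ as long as $\alpha \le 1-\eta$ and $4\alpha m \le \Gamma$. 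A standard Gilbert--Varshamov argument then yields a subset $\Vc \subseteq \{-1,+1\}^m$ of size $|\Vc| \ge e^{c_0 m}$ (for some absolute constant $c_0 > 0$) whose elements are pairwise at Hamming distance at least $m/4$.

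Next, I would bound the two key quantities. The disjoint supports of the $\phi_i$'s give $\|f_v - f_{v'}\|_2^2 = 4\alpha^2 d_H(v,v')/m \ge \alpha^2$ for distinct $v, v' \in \Vc$. Since $f_{v'} \ge \eta$, applying $\log(1+x) \le x$ to $\log(f_v/f_{v'})$ yields the $L^2$-to-KL bound $D(f_v \| f_{v'}) \le \|f_v - f_{v'}\|_2^2/\eta \le 4\alpha^2/\eta$, which tensorizes to $4n\alpha^2/\eta$ over the $n$ i.i.d.\ samples. Substituting into Theorem~\ref{thm:Packing} with $\Phi(x)=x^2$ and the pairwise-KL mutual information bound from Lemma~\ref{lem:mi_kl} (equation~\eqref{eq:TwoThetas2}) gives
\[
\Mc_n(\eta,\Gamma) \ge \frac{\alpha^2}{4}\left(1 - \frac{4n\alpha^2/\eta + \log 2}{c_0 m}\right).
\]

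Finally, I would set $m = \beta n^{1/3}$ and $\alpha = \Gamma/(4m) = \Theta(n^{-1/3})$, choosing $\beta$ (depending on $\eta$ and $\Gamma$) large enough that the bracketed factor exceeds $1/2$ for all sufficiently large $n$; this delivers $\Mc_n(\eta,\Gamma) \ge c(\eta,\Gamma)\, n^{-2/3}$, as claimed. The main obstacle, and the crux of the argument, is the simultaneous satisfaction of three competing constraints: the TV condition forces $\alpha m \lesssim \Gamma$, the packing demands $m$ large so that $\log|\Vc| \asymp m$ dominates the KL term, and the KL budget requires $n\alpha^2/\eta$ to stay within $m$. These three requirements are consistent only when $m \asymp n^{1/3}$ and $\alpha \asymp n^{-1/3}$, which is precisely the scaling at which the $n^{-2/3}$ rate emerges; this triple balance, rather than any individual estimate, is what dictates the rate.
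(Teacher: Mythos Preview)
Your proof is correct and takes a genuinely different route from the paper's. The paper applies the global approach of Corollary~\ref{cor:global}: it bounds the KL covering number of $\Fc_{\eta,\Gamma}$ by the $L^2$ covering number via the same inequality $D(f_1\|f_2)\le \|f_1-f_2\|_2^2/\eta$ that you use, then invokes (without proof, citing \cite{Yan99}) the metric-entropy estimate $\log M_2^*(\epsilon)\asymp \epsilon^{-1}$ for bounded-variation classes, and finally balances $\epsp$ and $\epsc$ in \eqref{eq:PackingCovering2} to obtain the $n^{-2/3}$ rate. You instead build an explicit Haar-bump family, extract a well-separated packing via Gilbert--Varshamov, and apply the local approach of Theorem~\ref{thm:Packing} with the pairwise-KL bound \eqref{eq:TwoThetas2}. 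Your argument is more self-contained in that it avoids citing the Birman--Solomjak entropy bound as a black box; in effect, your Haar construction implicitly re-derives the packing half of that bound (namely $\log M_2^*(\epsilon)\gtrsim \epsilon^{-1}$) for the specific scale needed. The paper's route, by contrast, showcases the general machinery of Corollary~\ref{cor:global}: once the metric entropy of a class is known, the minimax rate drops out mechanically with no problem-specific construction. Both routes ultimately hinge on the same three-way balance you identify, and your proof furnishes a nice counterpoint to the paper's remark that the global approach is ``mainly useful for infinite-dimensional problems''---here the local approach works just as well.
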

\begin{proof}
    We specialize the general analysis of \cite{Yan99} to the class $\Fc_{\eta,\Gamma}$.
    Recalling the packing and covering numbers from Definitions \ref{def:PackingNumber} and \ref{def:CoveringNumber}, we adopt the shorthand notation $M_2^*(\epsp) = M_{\rho}^*(\Fc_{\eta,\Gamma},\epsp)$ with $\rho(f,f') = \|f - f'\|_2$, and similarly $N_2^*(\epsc) = N_{\rho}^*(\Fc_{\eta,\Gamma},\epsc)$.  % By applying Corollary \ref{cor:global}, we will implicitly perform the steps of Procedure \ref{alg:steps} as follows: (i) Form a multiple hypothesis test according to a maximal $\epsp$-packing; (ii) Apply Fano's inequality in the form given in Theorem \ref{thm:Packing}; (iii) Bound the mutual information via Lemma \ref{lem:Covering} with a minimal $\epsc$-covering.
    We first show that $\NKL^*$ ({\em cf.}, Corollary \ref{cor:global}) can be upper bounded in terms of $M_2^*$, which will lead to a minimax lower bound that depends only on the packing number $M_2^*$.  For $f_1,f_2 \in \Fc_{\eta,\Gamma}$, we have
    \begin{align}
        D(f_1 \| f_2) 
            &\le \int_0^1 \frac{(f_1(x)-f_2(x))^2}{f_2(x)} dx \label{eq:de_DivBound1} \\
            &\le \frac{1}{\eta}  \int_0^1 (f_1(x)-f_2(x))^2 dx \label{eq:de_DivBound2} \\
            &= \frac{1}{\eta} \|f_1 - f_2\|^2_2, \label{eq:de_DivBound3}
    \end{align}
    where \eqref{eq:de_DivBound1} follows since the KL divergence is upper bounded by the $\chi^2$-divergence ({\em cf.}, Lemma \ref{lem:relations}), and  \eqref{eq:de_DivBound2} follows from the assumption that the density is lower bounded by $\eta$.  From the definition of $\NKL^*$ in Corollary \ref{cor:global}, we deduce the following for any $\epsc > 0$:
    \begin{equation}
        \NKL^*(\epsc) \le N_2^*( \sqrt{\eta\epsc} ) \le M_2^*( \sqrt{\eta\epsc} ), \label{eq:de_DivBound4}
    \end{equation}
    where the first inequality holds because any $\sqrt{\eta\epsc}$-covering the the $\ell_2$-norm is also a $\epsc$-covering in the KL divergence due to \eqref{eq:de_DivBound3}, and the second inequality follows from Lemma \ref{lem:pack_cov}.
    
    Combining \eqref{eq:de_DivBound4} with Corollary \ref{cor:global} and the choice $\Phi(\cdot) = (\cdot)^2$ gives
    \begin{equation}
        \Mc_n(\eta,\Gamma) \ge \Big(\frac{\epsp}{2}\Big)^2 \bigg( 1 - \frac{\log M_2^*( \sqrt{\eta\epsc} ) + n\epsc + \log 2}{ \log M_2^*( \epsp ) }\bigg).
    \end{equation}
    We now apply the following bounds on the packing number of $\Fc_{\eta,\Gamma}$, which we state from \cite{Yan99} without proof:
    \begin{equation}
        \cunder \cdot \epsilon^{-1} \le \log M_2^*(\epsilon) \le \cbar \cdot \epsilon^{-1},
    \end{equation}
    for some constants $\cunder,\cbar > 0$ and sufficiently small $\epsilon > 0$. It follows that
    \begin{equation}
        \Mc_n(\eta,\Gamma) \ge \Big(\frac{\epsp}{2}\Big)^2 \bigg( 1 - \frac{ \cbar \cdot (\eta\epsc)^{-1/2}  + n\epsc + \log 2}{ \cunder \cdot \epsp^{-1} }\bigg). \label{eq:de_Minimax3}
    \end{equation}
    The remainder of the proof amounts to choosing $\epsp$ and $\epsc$ to balance the terms appearing in this expression.
    
    First, choosing $\epsc$ to equate the terms $\cbar \cdot (\eta\epsc)^{-1/2}$ and $n\epsc$ leads to $\epsc = \big(\frac{c'}{n}\big)^{2/3}$ with $c' = \cbar \eta^{-1/2}$, yielding $\frac{ \cbar \cdot (\eta\epsc)^{-1/2}  + n\epsc + \log 2}{ \cunder \cdot \epsp^{-1} } = \frac{2n\big(\frac{c'}{n}\big)^{2/3} + \log 2}{ \cbar\cdot\epsp^{-1} }$.  Next, choosing $\epsp$ to make this fraction equal to $\frac{1}{2}$ yields $\epsp^{-1} = \frac{2}{\cbar}\big( 2(c')^{2/3} n^{ 1/3 } + \log 2 \big)$, which means that $\epsp \ge c'' \cdot n^{-1/3}$ for suitable $c'' > 0$ and sufficiently large $n$.  Finally, since we made the fraction equal  to $\frac{1}{2}$, \eqref{eq:de_Minimax3} yields $\Mc_n(\eta,\Gamma) \ge \frac{\epsp^2}{8} \ge \frac{(c'')^2 n^{-2/3}}{8}$.  Setting $\Mc_n(\eta,\Gamma) = \delta$ and solving for $n$ yields the desired result.
\end{proof}

The scaling given in Theorem \ref{thm:DensityEst} cannot be improved; a matching upper bound is given in \cite{Yan99}, and can be achieved even when $\eta = 0$. % i.e., the density need not be lower bounded.

\subsection{Convex Optimization}

In our final example, we consider the optimization setting introduced in Section \ref{sec:opt}.  We provide an example that is rather simple, yet has interesting features not present in the previous examples: (i) an example departing from estimation; (ii) a continuous example with adaptivity; and (iii) a case where Fano's inequality with $|\Vc| = 2$ is used.

We consider the following special case of the general setup of Section \ref{sec:opt}:
\begin{itemize}
    \item We let $\Fc$ be the set of differentiable and {\em strongly convex} functions on $\Xc = [0,1]$, with strong convexity parameter equal to one:
    \begin{equation}
        \Fscv = \bigg\{ f \,:\, f\text{ is differentiable} \,\cap\, f(x)-\frac{1}{2}x^2\text{ is convex} \bigg\}.
    \end{equation}
    The analysis that we present can easily be extended to functions on an arbitrary closed interval with an arbitrary strong convexity parameter.
    \item When we query a point $x \in \Xc$, we observe a noisy sample of the function value and its gradient:
    \begin{equation}
        Y = ( f(x) + Z, f'(x) + Z' ),
    \end{equation}
      where $Z$ and $Z'$ are independent $\Ndist(0,\sigma^2)$ random variables, for some $\sigma^2 > 0$.  This is commonly referred to as the {\em noisy first-order oracle}.
\end{itemize}

\subsubsection{Minimax Bound}

\begin{figure}
    \begin{centering}
        \includegraphics[width=0.45\columnwidth]{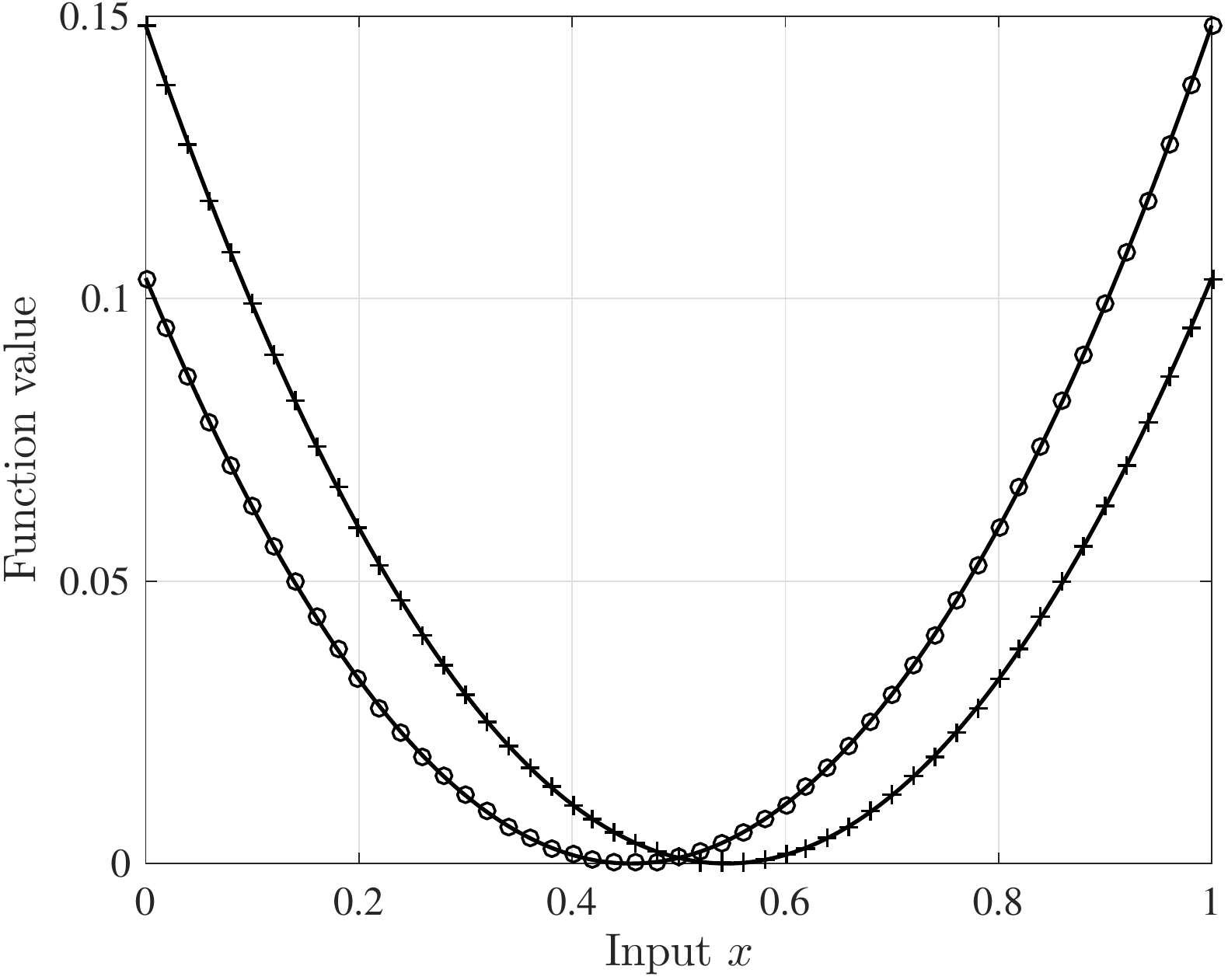}
        \par
    \end{centering}
    
    \caption{Construction of two functions in $\Fscv$ that are difficult to distinguish, and such that any point $x \in [0,1]$ can be $\epsilon$-optimal for only one of the two functions. \label{fig:ConvexOpt}}
\end{figure}

The following theorem lower bounds the number of queries required to achieve $\delta$-optimality.  The proof is taken from \cite{Rag11} with only minor modifications.

\begin{thm} \label{thm:StronglyConvexOpt}
    {\em (Stochastic optimization of strongly convex functions)}
    Under the preceding convex optimization setting with noisy first-order oracle information, in order to achieve $\Mc_n(\Fscv) \le \delta$, it is necessary  that
    \begin{equation}
         n \ge \frac{\sigma^2 \log 2}{40 \delta} \label{eq:StrongConvexResult}
    \end{equation}
    when $\delta$ is sufficiently small.
\end{thm}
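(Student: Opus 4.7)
The plan is to instantiate Theorem~\ref{thm:PackingConvex} in the $M=2$ setting (so equation \eqref{eq:PackingConvex2}, with the inverse binary entropy function), using two carefully chosen strongly convex functions in $\Fscv$ whose minimizers are placed symmetrically about $x=1/2$. The adaptive nature of the problem will be absorbed into the mutual-information bound via the adaptive tensorization result Lemma~\ref{lem:mi_adaptive}.

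Concretely, I would take $f_v(x) = \tfrac{1}{2}(x-x_v^*)^2$ for $v\in\{1,2\}$, with $x_1^* = \tfrac{1}{2}-\sqrt{2\epsilon}$ and $x_2^* = \tfrac{1}{2}+\sqrt{2\epsilon}$, where $\epsilon>0$ is a free parameter (with $\epsilon<1/8$ so that $x_v^*\in[0,1]$). Each $f_v$ lies in $\Fscv$ because $f_v(x)-\tfrac{1}{2}x^2$ is affine. The $f_v$-suboptimality at $x$ is $\tfrac{1}{2}(x-x_v^*)^2$, which is at most $\epsilon$ only when $|x-x_v^*|\le\sqrt{2\epsilon}$; since $|x_1^*-x_2^*|=2\sqrt{2\epsilon}$, the two ``good'' intervals are disjoint (up to a single shared midpoint, which I would break by an arbitrarily small perturbation), so the hypothesis of Theorem~\ref{thm:PackingConvex} is satisfied.

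Next I would bound the mutual information $I(V;\Xv,\Yv)$ with $V\sim\text{Uniform}\{1,2\}$. Because $X_i$ depends only on $(X_1^{i-1},Y_1^{i-1})$ and $Y_i=(f_V(X_i)+Z_i,f_V'(X_i)+Z_i')$ is conditionally independent of the past given $(V,X_i)$, the first part of Lemma~\ref{lem:mi_adaptive} yields $I(V;\Xv,\Yv)\le\sum_{i=1}^n I(V;Y_i\,|\,X_i)$. Conditioned on $X_i=x$, the two hypotheses induce Gaussian distributions on $\RR^2$ with common covariance $\sigma^2\Iv_2$ and means $\mu_v(x) = (f_v(x),f_v'(x))$. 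Using Lemma~\ref{lem:mi_kl} with the auxiliary density $Q_{Y|X=x}$ taken to be the Gaussian centered at $\tfrac{1}{2}(\mu_1(x)+\mu_2(x))$ gives
\begin{equation}
    I(V;Y_i\,|\,X_i=x) \;\le\; \frac{\|\mu_1(x)-\mu_2(x)\|_2^2}{8\sigma^2}.
\end{equation}
A direct calculation with the chosen $f_v$ gives $f_1'(x)-f_2'(x)=x_2^*-x_1^*=2\sqrt{2\epsilon}$ (independent of $x$) and $f_1(x)-f_2(x) = \sqrt{2\epsilon}\,(2x-1)$, which is bounded in magnitude by $\sqrt{2\epsilon}$ on $[0,1]$. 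Hence $\|\mu_1(x)-\mu_2(x)\|_2^2 \le 2\epsilon + 8\epsilon = 10\epsilon$ uniformly in $x$, and summing over $i$ yields $I(V;\Xv,\Yv)\le\tfrac{5n\epsilon}{4\sigma^2}$.

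Plugging into \eqref{eq:PackingConvex2} gives $\Mc_n(\Fscv)\ge \epsilon\cdot H_2^{-1}\!\bigl(\log 2 - \tfrac{5n\epsilon}{4\sigma^2}\bigr)$, and I would then tune $\epsilon$ to be a modest multiple of $\delta$ (e.g.\ $\epsilon=5\delta$). Inverting via the monotonicity of $H_2^{-1}$ gives a necessary condition of the form $n\ge c\,\sigma^2/\delta$; a concrete evaluation with $\epsilon=5\delta$ uses $H_2(1/5)\approx 0.5004 < \log 2$, which yields $n\ge\frac{4(\log 2 - H_2(1/5))}{25}\cdot\frac{\sigma^2}{\delta}$, strictly stronger than the target $\frac{\sigma^2\log 2}{40\delta}$. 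The smallness-of-$\delta$ hypothesis is used only to ensure $\epsilon<1/8$ (so $x_v^*\in[0,1]$). The only nontrivial step is the mutual-information calculation — in particular being careful to use the adaptive tensorization lemma rather than the non-adaptive one, and to bound both the function-value and gradient components of the Gaussian oracle mean; once those are in hand the rest is algebra.
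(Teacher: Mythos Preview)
Your proposal is correct and follows essentially the same route as the paper: the same two quadratic functions $f_v(x)=\tfrac{1}{2}(x-x_v^*)^2$ with $x_v^*=\tfrac{1}{2}\pm\sqrt{2\epsilon}$, the $M=2$ form of Theorem~\ref{thm:PackingConvex}, adaptive tensorization via Lemma~\ref{lem:mi_adaptive}, and a Gaussian KL bound through Lemma~\ref{lem:mi_kl}. The only differences are cosmetic: the paper centers its auxiliary distribution at the mean induced by $f_0(x)=\tfrac{1}{2}(x-\tfrac{1}{2})^2$ rather than at the exact midpoint $\tfrac{1}{2}(\mu_1+\mu_2)$ (your choice is slightly tighter, giving $\tfrac{5\epsilon}{4\sigma^2}$ per step instead of $\tfrac{2\epsilon'}{\sigma^2}$), and the paper handles the shared boundary point by introducing a separate parameter $\epsilon'>\epsilon$ rather than a perturbation, then sets $\epsilon=10\delta$ and uses $H_2^{-1}(\tfrac{\log 2}{2})>\tfrac{1}{10}$ in place of your $\epsilon=5\delta$ and $H_2(1/5)$.
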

\begin{proof}
    We construct a set of two functions satisfying the assumptions of Theorem \ref{thm:PackingConvex}.  Specifically, we fix $(\epsilon,\epsilon')$ such that $0 < \epsilon < \epsilon' < \frac{1}{8}$, define $x_{1}^* = \frac{1}{2} - \sqrt{2\epsilon'}$ and $x_{2}^* = \frac{1}{2} + \sqrt{2\epsilon'}$, and set 
    \begin{equation}
        f_v(x) = \frac{1}{2}(x - x_v^*)^2, \quad v=1,2.
   \end{equation}
   These functions are illustrated in Figure \ref{fig:ConvexOpt}.
   
    Since $\epsilon' \in \big(0,\frac{1}{8}\big)$, both $x_1^*$ and $x_2^*$ lie in $(0,1)$, and hence $\min_{x \in [0,1]} f_1(x) = \min_{x \in [0,1]} f_2(x) = 0$.  Moreover, a direct evaluation reveals that $f_1(x) + f_2(x) = \big(x - \frac{1}{2}\big)^2 + 2\epsilon' > 2\epsilon$, which implies that any $\epsilon$-optimal point for one function cannot be $\epsilon$-optimal for the other function.  This is the condition needed to apply Theorem \ref{thm:PackingConvex}, yielding from \eqref{eq:PackingConvex2} that
    \begin{equation}
        \Mc_n(\Fscv) \ge \epsilon \cdot H_2^{-1}( \log 2 - I(V;\Xv,\Yv) ). \label{eq:conv_ex_minimax}
    \end{equation}
    To bound the mutual information, we first apply tensorization ({\em cf.}, first part of Lemma \ref{lem:mi_adaptive}) to obtain $I(V;\Xv,\Yv) \le \sum_{i=1}^n I(V;Y_i | X_i)$.  We proceed by bounding $I(V;Y_i | X_i)$ for any given $i$.  Fix $x \in [0,1]$, let $P_{Y_{x}}$ and $P_{Y'_{x}}$ be the density functions of the noisy samples of $f_1(x)$ and $f'_1(x)$, and let $Q_{Y_x}$ and $Q_{Y'_x}$ be defined similarly for $f_0(x) = \frac{1}{2}\big(x - \frac{1}{2}\big)^2$. 
     We have
    \begin{align}
        D(P_{Y_{x}} \times P_{Y'_{x}} \| Q_{Y_x} \times Q_{Y'_x}) 
            &= D(P_{Y_{x}} \| Q_{Y_{x}}) + D(P_{Y'_{x}} \| Q_{Y'_{x}}) \label{eq:ConvDiv1} \\
            &= \frac{(f_1(x) - f_0(x))^2}{2\sigma^2} + \frac{(f'_1(x) - f'_0(x))^2}{2\sigma^2}, \label{eq:ConvDiv2}
    \end{align}
    where \eqref{eq:ConvDiv1} holds since the KL divergence is additive for product distributions, and \eqref{eq:ConvDiv2} uses the fact that the divergence between the $\Ndist(\mu_0,\sigma^2)$ and $\Ndist(\mu_1,\sigma^2)$ density functions is $\frac{(\mu_1 - \mu_0)^2}{2\sigma^2}$.
    
    Recalling that $f_1(x) = \frac{1}{2} \big(x - \frac{1}{2} + \sqrt{2\epsilon'}\big)^2$ and $f_0(x) = \frac{1}{2}\big(x - \frac{1}{2}\big)^2$, we have
    \begin{equation}
        (f_1(x) - f_0(x))^2 = \frac{1}{4}\bigg( 2\epsilon' + 2\Big(x - \frac{1}{2}\Big) \sqrt{2\epsilon'} \bigg)^2 \le \Big(\epsilon' + \sqrt{\frac{\epsilon'}{2}}\Big)^2 \le 2\epsilon',
    \end{equation}
    where the first inequality uses the fact that $x \in [0,1]$, and the second inequality follows since $\epsilon' < \frac{1}{8}$ and hence $\epsilon' = \sqrt{\epsilon'}\cdot\sqrt{\epsilon'} \le \sqrt{\frac{\epsilon'}{8}}$ (note that $\big(\frac{1}{\sqrt{8}}+\frac{1}{\sqrt{2}}\big)^2 \le 2$).  Moreover, taking the derivatives of $f_0$ and $f_1$ gives $(f'_1(x) - f'_0(x))^2 = 2\epsilon'$, and substitution into \eqref{eq:ConvDiv2} yields $D(P_{Y_{x}} \times P_{Y'_{x}} \| Q_{Y_x} \times Q_{Y'_x}) \le \frac{2\epsilon'}{\sigma^2}$.
    
    The preceding analysis applies in a near-identical manner when $f_2$ is used in place of $f_1$, and yields the same KL divergence bound when $(P_{Y_x},P_{Y'_x})$ is defined with respect to $f_2$.  As a result, for any $x \in [0,1]$, we obtain from \eqref{eq:Aux3} in Lemma \ref{lem:mi_kl} that $I(V;Y_i|X_i=x) \le \frac{2\epsilon'}{\sigma^2}$.  Averaging over $X$, we obtain $I(V;Y_i|X_i) \le \frac{2\epsilon'}{\sigma^2}$, and substitution into the above-established bound  $I(V;\Xv,\Yv) \le \sum_{i=1}^n I(V;Y_i | X_i)$ yields $I(V;\Xv,\Yv) \le \frac{2n\epsilon'}{\sigma^2}$.  Hence, \eqref{eq:conv_ex_minimax} yields
    \begin{equation}
        \Mc_n(\Fscv) \ge \epsilon \cdot H_2^{-1}\bigg( \log 2 - \frac{2n\epsilon'}{\sigma^2} \bigg). \label{eq:conv_ex_minimax2}
    \end{equation}
    Now observe that if $n \le \frac{\sigma^2 \log 2}{4\epsilon'}$ then the argument to $H_2^{-1}(\cdot)$ is at least $\frac{\log 2}{2}$.  It is easy to verify that $H_2^{-1}\big( \frac{\log 2}{2} \big) > \frac{1}{10}$, from which it follows that $\Mc_n(\Fscv) > \frac{\epsilon}{10}$.  Setting $\epsilon = 10\delta$ and noting that $\epsilon'$ can be chosen arbitrarily close to $\epsilon$, we conclude that the required number of samples $\frac{\sigma^2 \log 2}{4\epsilon'}$ recovers \eqref{eq:StrongConvexResult}.
\end{proof}

Theorem \ref{thm:StronglyConvexOpt} provides tight scaling laws, since stochastic gradient descent is known to achieve $\delta$-optimality for strongly convex functions using $O\big( \frac{\sigma^2}{\delta} \big)$ queries.  Analogous results for the multi-dimensional setting can be found in \cite{Rag11}.

%\subsection{Sparse Fourier transform}
%
%\subsubsection{Problem description}
%
%\subsubsection{Non-adaptive sampling}
%
%\subsubsection{Adaptive sampling}

%\subsection{Gaussian mean estimation}
%
%We present a simple example for the purpose of illustrating an application of the continuum Fano inequality (Theorem \ref{thm:Continuum}):
%\begin{itemize}
%    \item We are given i.i.d.~samples $\Yv$ from some distribution $P_{\theta}$ of the form $\Ndist(\theta,\sigma^2 \Iv_p)$ with mean $\theta \in \RR^p$, where $\Iv_p$ is the $p \times p$ identity matrix. 
%    \item An estimate $\hat{\theta}$ of $\theta$ is formed, and the error incurred is $\ell(\theta,\hat{\theta}) = \|\theta - \hat{\theta}\|_2^2$.
%    \item Hence, the minimax risk is
%    \begin{equation}
%        \Mc_n(p) = \inf_{\hat{\theta}} \sup_{\theta \in \RR^p} \|\theta - \hat{\theta}\|_2^2.
%    \end{equation}
%\end{itemize}
%
%\begin{thm}
%    {\em (Gaussian mean estimation)}
%    Under the preceding Gaussian mean estimation problem, we have
%    \begin{equation}
%        \Mc_n(p) \ge \frac{\sigma^2 p}{n} \cdot \frac{\log 2}{4}\Big(1 - \frac{1}{p}\Big)^2.
%    \end{equation}
%\end{thm}
%\begin{proof}
%    We fix $r > 0$ and lower bound the worst-case error by the average with $V$ uniformly distributed in $\BB_2(0,r)$, defined to be the $\ell_2$-ball of radius $r$ centered at zero.  Since the volume is proportional to the dimension, we have
%    \begin{equation}
%        \frac{\Vol(\BB_2(0,r))}{\BB_2(v',r)} = \Big(\frac{r}{t}\Big)^d
%    \end{equation}
%    for any $v' \in \RR^p$ and $t > 0$. 
%\end{proof}

\section{Discussion} \label{sec:discussion}

\subsection{Limitations of Fano's Inequality} \label{sec:limitations}

While Fano's inequality is a highly versatile method with successes in a wide range of statistical applications ({\em cf.}, Table \ref{tbl:applications}), it is worth pointing out some of its main limitations.  We briefly mention some alternative methods below, as well as discussing some suitable generalizations of Fano's inequality in Section \ref{sec:generalizations}.

{\bf Non-asymptotic weakness.} Even in scenarios where Fano's inequality provides converse bounds with the correct asymptotics including constants, these bounds can be inferior to alternative methods in the non-asymptotic sense \cite{Pol10,Joh15}.   Related to this issue is the distinction between the {\em weak converse} and {\em strong converse}: We have seen that Fano's inequality typically provides necessary conditions of the form $n \ge n^* (1 - \delta - o(1))$ for achieving $\pe \le \delta$, in contrast with strong converse results of the form $n \ge n^* (1 - o(1))$ for {\em any} $\delta \in (0,1)$.  Alternative techniques addressing these limitations are discussed in the context of communication in \cite{Pol10}, and in the context of statistical estimation in \cite{Ven17,Loh17}.

%One widespread method for obtaining strong converse results is {\em multiplicative change of measure}, which relates the probability of a success event $\Ac$ under two different distributions $P^n(\yv),Q^n(\yv)$ as follows (e.g., see \cite{Tsy09,Han03,Pol10}):
%\begin{gather}
%    \PP_P[\Ac] \le \PP_P\Big[ \frac{P^n(\Yv)}{Q^n(\Yv)} > \gamma \Big] + \gamma\PP_Q[\Ac], \label{eq:mult_chg}
%\end{gather}
%where $\gamma$ is an arbitrary threshold.  This approach can also be used to improve the constant factors arising in continuous estimation settings \cite{Ven17}.

{\bf Difficulties in adaptive settings.} While we have provided examples where Fano's inequality provides tight bounds in adaptive settings, there are several applications where alternative methods have proved to be more suitable.  One reason for this is that the conditional mutual information terms $I(V;Y_i|X_i)$ ({\em cf.}, Lemma \ref{lem:mi_adaptive}) often involve complicated conditional distributions that are difficult to analyze.  We refer the reader to \cite{Lai85,Aue95a,Ari13} for examples in which alternative techniques proved to be more suitable for adaptive settings.

%It is common to instead use change-of-measure techniques, including both variations of \eqref{eq:mult_chg} (e.g., see \cite{Lai85}) and alternatives based on {\em additive change of measure}: For any random variable $A \in [0,\Amax]$, one has
%\begin{align}
%    \EE_P[A] 
%    &\le \EE_Q[A] + {\Amax} \dTV(P, Q), \label{eq:chg_add_TV} 
%    % &\le \EE_Q[A] + \Amax \sqrt{\frac{1}{2}\min\{D(P\|Q),D(Q\|P)\}}, \label{eq:chg_add_TV2}
%\end{align}
%where $\dTV(P,Q) = \frac{1}{2}\EE_Q\big[ \big| \frac{P(Y)}{Q(Y)} - 1 \big| \big]$ is the total variation distance.  In particular, if $A$ is the indicator function of an event $\Ac$, then \eqref{eq:chg_add_TV} reduces to $\PP_P[\Ac] \le \PP_Q[\Ac] +  \dTV(P, Q)$.  

% We list two examples here: (i) For the problem of recovering sparse vectors with adaptive measurements (i.e., an adaptive counterpart to the sparse linear regression example of Section \ref{sec:sparse}), Fano's inequality has been successfully applied only in the $1$-sparse case \cite{Pri13}.  Converse bounds for higher sparsity levels were developed via an entirely different approach \cite{Ari13}. (ii) Among the extensive literature on bandit algorithms, the vast majority of lower bounds have been derived using alternative techniques.

{\bf Restriction to KL divergence.} When applying Fano's inequality, one invariably needs to bound a mutual information term, which is an instance of the KL divergence.  While the KL divergence satisfies a number of convenient properties that can help in this process, it is sometimes the case that other divergence measures are more convenient to work with, or can be used to derive tighter results.  Generalizations of Fano's inequality have been proposed specifically for this purpose, as we discuss in the following subsection.

\subsection{Generalizations of Fano's Inequality} \label{sec:generalizations}

Several variations and generalizations of Fano's inequality have been proposed in the literature \cite{Han94,Bir05,Gus03,Gun11,Pol10a,Bra15}.  Most of these are not derived based on the most well-known proof of Theorem \ref{thm:Fano}, but are instead based on an alternative proof via the {data processing inequality for KL divergence}: For any event $E$, one has
\begin{equation}
    I(V;\hat{V})  = D(P_{\fV\hat{V}} \| P_{\fV}\times P_{\hat{V}}) \ge D_2\big( P_{\fV\hat{V}}[E] \,\|\, (P_{\fV}\times P_{\hat{V}})[E] \big), \label{eq:alt_proof}
\end{equation}
where $D_2(p\|q) = p\log\frac{p}{q} + (1-p)\log\frac{1-p}{1-q}$ is the binary KL divergence function.  Observe that if $V$ is uniform and $E$ is the event that $V \ne \hat{V}$, then we have $P_{\fV\hat{V}}[E] = \pe$ and $(P_{\fV}\times P_{\hat{V}})[E] = 1 - \frac{1}{|\Vc|}$, and Fano's inequality ({\em cf.}, Theorem \ref{thm:Fano}) follows by substituting the definition of $D_2(\cdot\|\cdot)$ in \eqref{eq:alt_proof} and re-arranging.  This proof lends itself to interesting generalizations, including the following.

{\bf Continuum version.} Consider a continuous random variable $V$ taking values on $\Vc \subseteq \RR^p$ for some $p \ge 1$, and an error probability of the form $\pe(t) = \PP\big[ d(\fV,\hat{V}) > t \big]$ for some real-valued function $d$ on $\RR^p \times \RR^p$.  This is the same formula as \eqref{eq:pe_partial}, which we previously introduced for the discrete setting.   Defining the ``ball'' $\BB_d(\hat{v},t) = \{v \in \RR^p \,:\, d(v,\hat{v}) \le t\}$ centered at $\hat{v}$, \eqref{eq:alt_proof} leads to the following for $V$ uniform on $\Vc$: 
\begin{equation}
    \pe(t) \ge 1 - \frac{I(V;\hat{V}) + \log 2}{ \log\frac{\Vol(\Vc)}{ \sup_{\hat{v} \in \RR^p} \Vol(\Vc \cap \BB_d(\hat{v},t)) } },
\end{equation}
where $\Vol(\cdot)$ denotes the volume of a set.  This result provides a continuous counterpart to the final part of Theorem \ref{thm:Partial}, in which the cardinality ratio is replaced by a volume ratio.  We refer the reader to \cite{Duc13} for example applications, and to \cite{Bra15} for the simple proof outlined above.

{\bf Beyond KL divergence.} The key step \eqref{eq:alt_proof} extends immediately to other measures that satisfy the data processing inequality.  A useful class of such measures is the class of {\em $f$-divergences}: $D_f(P\|Q) = \EE_Q\big[ f\big(\frac{P(\Yv)}{Q(\Yv)}\big) \big]$ for some convex function $f$ satisfying $f(1) = 0$.  Special cases include KL divergence ($f(z) = z\log z$), total variation ($f(z) = \frac{1}{2}|z-1|$), squared Hellinger distance ($f(z) = (\sqrt{z}-1)^2$), and $\chi^2$-divergence ($f(z) = (z-1)^2$). It was shown in \cite{Gun11} that alternative choices beyond the KL divergence can provide improved bounds in some cases.  Generalizations of Fano's inequality beyond $f$-divergences can be found in \cite{Pol10a}.

{\bf Non-uniform priors.} The first form of Fano's inequality in Theorem \ref{thm:Fano} does not require $V$ to be uniform.  However, in highly non-uniform cases where $H(V) \ll \log |\Vc|$, the term $\pe \log( |\Vc| - 1 )$ may be too large for the bound to be useful.  In such cases, it is often useful to use different Fano-like bounds based on the alternative proof above.  In particular, the step \eqref{eq:alt_proof} makes no use of uniformity, and continues to hold even in the non-uniform case.  In \cite{Han94}, this bound was further weakened to provide simpler lower bounds for non-uniform settings with discrete alphabets.  Fano-type lower bounds in {\em continuous} Bayesian settings with non-uniform priors arose more recently, and are typically more technically challenging; the interested reader is referred to \cite{Xu17,Che16a}.

\appendix

\section{Appendix}

Here we provide the omitted proofs from the main body.  Throughout the proofs, the random variables $V$ and $\hat{V}$ are assumed to be discrete, whereas the other random variables involved, including the inputs $\Xv = (X_1,\dotsc,X_n)$ and samples $\Yv = (Y_1,\dotsc,Y_n)$, may be continuous.  In such cases, entropy quantities such as $H(Y_i)$ should be interpreted as being the {\em differential entropy} \cite[Ch.~8]{Cov01}, and probability functions such as $P_Y(y)$ should be interpreted as being a probability density function (PDF).

\subsection{Preliminary Information-Theoretic Results} \label{sec:it_prelim}

The following lemma states some useful results from information theory.  The proofs can be found in standard references such as \cite{Cov01}.

\begin{lem} \label{lem:it_prelim}
    {\em (Standard information-theoretic results)}
    We have the following:
    \begin{itemize}
        \item {\em (Chain rule for entropy)} $H(Y_1,\dotsc,Y_n) = \sum_{i=1}^n H(Y_i|Y_1,\dotsc,Y_{i-1})$.
        \item  {\em (Chain rule for mutual information)} $I(X;Y_1,\dotsc,Y_n) = \sum_{i=1}^n I(X;Y_i|Y_1,\dotsc,Y_{i-1})$. 
        \item {\em (Sub-additivity of entropy)} $H(Y_1,\dotsc,Y_n) \le \sum_{i=1}^n H(Y_i)$.
        \item {\em (Conditioning reduces entropy)} $H(Y|X) \le H(Y)$.
        % \item {\em (Discrete entropy bound)} For $X$ on a finite alphabet $\Xc$, $H(X) \le \log|\Xc|$ with equality for uniform $X$.
        % \item {\em (Data processing inequality for mutual information)} If $X \to Y \to Z$, then $I(X;Z) \le I(X;Y)$.  
        % \item {\em (Data processing inequality for KL divergence)} Fix any $P_X$ and $Q_X$, any let $P_Y$ and $Q_Y$ be the marginals of $P_X \times P_{Y|X}$ and $Q_X \times P_{Y|X}$.  Then $D(P_Y\|Q_Y) \le D(P_X \| Q_X)$.
        % \item {\em (Alternative form of mutual information)} $I(X;Y) = \min_{Q_Y} D(P_{XY} \| P_X \times Q_Y)$, and the minimum is achieved by $Q_Y = P_Y$.
        % \item  {\em (Conditional vs.~unconditional mutual information)} If $X \to Y \to Z$, then $I(X;Y|Z) \le I(X;Y)$. {\bf [TODO: Consider removing]}
        \item {\em (Information-preserving transform)} If $Y$ depends on $X$ only through $f(X)$, then $H(Y|X,f(X)) = H(Y|f(X))$, and $I(X;Y) = I( f(X); Y )$.
        \item {\em (Capacity of binary symmetric channel)} If $X,Y$ are binary with $Y = X \oplus Z$ for $Z \sim \Bernoulli(\epsilon)$ (where $\oplus$ denotes modulo-2 addition), then $I(X;Y) \le \log 2 - H_2(\epsilon)$.
        \item {\em (Divergence between independent pairs)} $D(P_X \times P_Y \| Q_X \times Q_Y) = D(P_X \| Q_X) + D(P_Y \| Q_Y)$.
        \item {\em (Divergence between equal-variance univariate Gaussians)} For $X \sim \Ndist(\mu_1,\sigma^2)$ and $Y \sim \Ndist(\mu_2,\sigma^2)$, it holds that $D(P_X\|P_Y) = \frac{(\mu_1 - \mu_2)^2}{2\sigma^2}$.
        % \item {\em (Divergence between zero-mean multivariate Gaussians)} For $n$-dimensional vectors $\Xv \sim \Ndist(\bzero,\bSigma_1)$ and $\Yv \sim \Ndist(\bzero,\bSigma_2)$, it holds that $D(P_{\Xv} \| P_{\Yv}) = \frac{1}{2}\big( \Tr(\bSigma_2^{-1}\bSigma_1) - n + \log\frac{\det \bSigma_2}{ \det\bSigma_1 } \big)$.
        % \item {\em (Taylor expansion of binary entropy)} For any $p \in (0,1)$, it holds that $H_2(p + \delta) = H_2(p) + \delta\log\frac{1-p}{p} + o(\delta)$ as $\delta \to 0$.
        % \item {\bf \color{red} [TODO: Possible additions -- $H(X) \le \log|\Xc|$, $I(X;Y) \ge 0$, $H(X) \ge 0$ (discrete only), Gaussian maximizes entropy (continuous only), $I(X;Y) = D(P_{XY}\|P_X \times P_Y)$, $I(X;Y) = I(Y;X)$, $I(X;Y) \le I(X;Y,Z)$]}
    \end{itemize}
\end{lem}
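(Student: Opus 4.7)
The plan is to verify each item essentially from the definitions of entropy, conditional entropy, mutual information, and KL divergence, since these are all standard textbook facts. I would organize the proof around three basic building blocks that, together, yield every bullet point: the definition $H(X,Y)=H(X)+H(Y|X)$, the non-negativity of KL divergence $D(P\|Q)\ge 0$ (which in turn follows from Jensen's inequality applied to the convex function $-\log$), and direct computation with specific densities.

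First I would establish the two chain rules by induction on $n$, starting from the two-variable identity $H(X,Y)=H(X)+H(Y|X)$, which is itself immediate from writing $-\EE[\log P_{XY}(X,Y)] = -\EE[\log P_X(X)] - \EE[\log P_{Y|X}(Y|X)]$. The mutual-information chain rule then follows by applying the entropy chain rule twice inside $I(X;Y_1,\dotsc,Y_n)=H(Y_1,\dotsc,Y_n)-H(Y_1,\dotsc,Y_n|X)$. Next I would prove ``conditioning reduces entropy'' by noting that $H(Y)-H(Y|X)=I(X;Y)=D(P_{XY}\|P_X\times P_Y)\ge 0$, where the non-negativity is the usual Jensen-based argument; sub-additivity of entropy then drops out by combining the chain rule with conditioning-reduces-entropy term by term.

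For the information-preserving transform, I would use the Markov assumption $X\to f(X)\to Y$ in two ways. On one hand, $Y$ is conditionally independent of $X$ given $f(X)$, so $H(Y|X,f(X))=H(Y|f(X))$ directly. On the other hand, applying the mutual-information chain rule to $I(X,f(X);Y)$ in two orders gives $I(f(X);Y)+I(X;Y|f(X))=I(X;Y)+I(f(X);Y|X)$; the first conditional mutual information vanishes by the Markov property and the second vanishes because $f(X)$ is a deterministic function of $X$, leaving $I(X;Y)=I(f(X);Y)$. For the binary symmetric channel bound I would write $I(X;Y)=H(Y)-H(Y|X)$, note that $H(Y)\le\log 2$ since $Y$ is binary, and observe that conditional on $X$ we have $Y=X\oplus Z$ so $H(Y|X)=H(Z)=H_2(\epsilon)$.

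The last two bullet points are pure calculations. For divergence between independent pairs I would just expand $\log\frac{P_X(x)P_Y(y)}{Q_X(x)Q_Y(y)}=\log\frac{P_X(x)}{Q_X(x)}+\log\frac{P_Y(y)}{Q_Y(y)}$ and take expectations under $P_X\times P_Y$. For the two equal-variance Gaussians I would substitute the densities into $D(P_X\|P_Y)=\EE_{P_X}\bigl[\log\frac{P_X(W)}{P_Y(W)}\bigr]$, cancel the normalizing constants, and use $\EE[(W-\mu_1)^2]=\sigma^2$ together with $\EE[W]=\mu_1$ to obtain $\frac{(\mu_1-\mu_2)^2}{2\sigma^2}$. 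None of these steps presents a real obstacle; the only mild subtlety is being careful with differential versus discrete entropy in the continuous cases, but each identity above is insensitive to that distinction since it manipulates log-likelihood ratios or conditional entropies that are well-defined in both settings.
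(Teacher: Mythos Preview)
Your proposal is correct and each argument you sketch is a standard, valid derivation of the corresponding identity. The paper, however, does not actually prove this lemma at all: it simply states that ``the proofs can be found in standard references such as \cite{Cov01}'' and moves on. So your write-up is strictly more detailed than the paper's own treatment, which defers everything to Cover and Thomas; there is nothing to compare at the level of proof strategy.
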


\noindent We will make use of these results without necessarily referencing the lemma.

% \subsection{Proof of Theorem \ref{thm:Fano} (Fano's inequality)} \label{sec:pf_Fano}

\subsection{Proof of Theorem \ref{thm:Fano} (Fano's Inequality)}

Defining the error indicator random variable $E = \openone\{\fV \ne \hat{V}\}$, we have
\begin{align}
    H(\fV|\hat{V}) 
        &= H(\fV,E|\hat{V}) \label{eq:Fano_pf1} \\
        &= H(E|\hat{V}) + H(\fV|\hat{V},E) \label{eq:Fano_pf2} \\
        &\le  H(E) + H(\fV|\hat{V},E) \label{eq:Fano_pf3} \\
        &=  H_2(\pe) + \pe H(\fV|\hat{V},E = 1) + (1-\pe) H(\fV|\hat{V},E = 0) \label{eq:Fano_pf4} \\
        &= H_2(\pe) + \pe \log\big(|\Vc| - 1\big), \label{eq:Fano_pf5}
\end{align}
where \eqref{eq:Fano_pf1} holds since $E$ is a deterministic function of $(\fV,\hat{V})$, \eqref{eq:Fano_pf2} follows from the chain rule, \eqref{eq:Fano_pf3} holds since conditioning reduces entropy, \eqref{eq:Fano_pf4} uses $H(E) = H_2(\pe)$, and \eqref{eq:Fano_pf5} follows since $\fV$ has no uncertainty given $\hat{V}$ when $E=0$, and takes one of $|\Vc| - 1$ values given $\hat{V}$ when $E=1$.

In case that $\fV$ is uniform, we obtain \eqref{eq:Fano2} by upper bounding $|\Vc| - 1 \le |\Vc|$ and $H_2(\pe) \le \log 2$ in \eqref{eq:Fano1}, subtracting $H(V) = \log|\Vc|$ on both sides, and taking the negative on both sides.

\subsection{Proof of Theorem \ref{thm:Partial} (Fano's Inequality with Approximate Recovery)} \label{sec:pf_Partial}

    Define the error event $E_{t} = \{ d(V,\hat{V}) > t \}$.  Following the steps \eqref{eq:Fano_pf1}--\eqref{eq:Fano_pf4} with $E_t$ in place of $E$, we obtain
\begin{align}
    H(\fV|\hat{V}) 
    &\le  H_2(\pe(t)) + \pe(t) H(\fV|\hat{V},E_{t} = 1) + (1-\pe(t)) H(\fV|\hat{V},E_{t} = 0) \label{eq:Partial_pf1} \\
    &\le H_2(\pe(t)) + \pe(t) \log\big( |\Vc| - \Nmin(t)\big) + (1-\pe(t)) \log \Nmax(t) \label{eq:Partial_pf2} \\
    &= H_2( \pe(t) ) + \pe(t)\log\frac{|\Vc| - \Nmin(t)}{\Nmax(t)} + \log \Nmax(t),
\end{align}
where \eqref{eq:Partial_pf2} follows since when $\hat{V}$ is given and $E_t = 0$, $\fV$ takes one of at most  $\Nmax(t)$ values, whereas if $\hat{V}$ is given and $E_t = 1$, $\fV$ takes one of at most $|\Vc| - \Nmin(t)$ values.  We have thus proved \eqref{eq:Partial1}.

In case that $\fV$ is uniform, we obtain \eqref{eq:Partial2} by upper bounding $|\Vc| - \Nmin(t) \le |\Vc|$ and $H_2(\pe(t)) \le \log 2$ in \eqref{eq:Partial1}, subtracting $H(V) = \log|\Vc|$ on both sides, and taking the negative on both sides.

\subsection{Proof of Lemma \ref{lem:dpi} (Data Processing Inequality)} \label{sec:pf_dpi}

We focus on the first part, since the second and third parts follow as special cases.  We have
\begin{align}
    I(V;\hat{V}) &= H(V) - H(V|\hat{V}) \label{eq:dpi_pf1} \\
        &\le H(V) - H(V|\hat{V},\Yv) \label{eq:dpi_pf2} \\
        &= H(V) - H(V|\Yv) \label{eq:dpi_pf3} \\
        &= I(V;\Yv), \label{eq:dpi_pf4}
\end{align}
where \eqref{eq:dpi_pf2} follows since conditioning reduces entropy, and \eqref{eq:dpi_pf3} holds because $V$ and $\hat{V}$ are conditionally independent given $\Yv$.

\subsection{Proof of Lemma \ref{lem:mi_tensorization} (Tensorization)} \label{sec:pf_tens}

    We start with the second claim, since the first claim then follows by letting each $X_i$ deterministically equal an arbitrary fixed value (e.g., zero).  To prove the second claim, we write
\begin{align}
    I(V;\Yv|\Xv) 
    &= H(\Yv|\Xv) - H(\Yv | V,\Xv) \label{eq:tensor1} \\
    &\le \sum_{i=1}^n H(Y_i|X_i) - H(\Yv|V,\Xv) \label{eq:tensor2} \\
    &= \sum_{i=1}^n \big( H(Y_i|X_i) - H(Y_i|V,\Xv) \big) \label{eq:tensor3} \\
    &= \sum_{i=1}^n \big( H(Y_i|X_i) - H(Y_i|V,X_i) \big) \label{eq:tensor4} \\
    &= \sum_{i=1}^n I(V;Y_i|X_i), \label{eq:tensor5}
\end{align}
where \eqref{eq:tensor2} follows from the sub-additivity of entropy and the fact that conditioning reduces entropy, \eqref{eq:tensor3} follows from the conditional independence of the $Y_i$ given $(V,\Xv)$, and \eqref{eq:tensor4} follows from the assumption that $Y_i$ depends on $(V,\Xv)$ only on through $(V,X_i)$.  

The third claim follows from the second claim by writing
\begin{align}
    I(\fV;Y_i|X_i) \le I(V,X_i;Y_i) = I(U_i;Y_i)
\end{align}
by the assumption that $Y_i$ depends on $(V,X_i)$ only through $U_i$.

\subsection{Proof of Lemma \ref{lem:mi_adaptive} (Tensorization with Adaptivity)} \label{sec:pf_adaptive}

    We have the following:
\begin{align} 
    I(\fV;\Xv,\Yv) 
    &= \sum_{i=1}^n I(X_i, Y_i; \fV \,|\, X_1^{i-1}, Y_1^{i-1}) \label{eq:Adaptive1} \\
    &= \sum_{i=1}^n I(Y_i; \fV \,|\, X_1^{i-1},Y_1^{i-1},X_i) \label{eq:Adaptive2}\\
    &=  \sum_{i=1}^n  \Big( H(Y_i \,|\, X_1^{i-1},Y_1^{i-1},X_i) - H(Y_i \,|\, X_1^{i-1},Y_1^{i-1},X_i,V) \Big) \label{eq:Adaptive3} \\
    &=  \sum_{i=1}^n  \Big( H(Y_i \,|\, X_1^{i-1},Y_1^{i-1},X_i)  - H(Y_i \,|\, V,X_i) \Big) \label{eq:Adaptive4} \\
    &\le  \sum_{i=1}^n  \Big( H(Y_i \,|\, X_i) - H(Y_i | \fV,X_i) \Big) \label{eq:Adaptive5} \\
    &= \sum_{i=1}^n I(\fV;Y_i|X_i), \label{eq:Adaptive6}
\end{align}
where \eqref{eq:Adaptive1} follows from the chain rule, \eqref{eq:Adaptive2} follows since $X_i$ is a function of $(X_1^{i-1},Y_1^{i-1})$, \eqref{eq:Adaptive4} follows since $Y_i$ is conditionally independent of $(X_1^{i-1},Y_1^{i-1})$ given $(\fV,X_i)$, and \eqref{eq:Adaptive5} follows since conditioning reduces entropy.   This completes the proof of the first part.

To prove the second part, we note that
\begin{align}
    I(\fV;Y_i|X_i) \le I(V,X_i;Y_i) = I(U_i;Y_i)
\end{align}
by the assumption that $Y_i$ depends on $(X_i,V)$ only through $U_i$.

%\subsection{Proof of Lemma \ref{lem:mi_kl} (KL divergence based bounds)} \label{sec:pf_kl}
%
%We obtain \eqref{eq:Aux1} from the definition of mutual information, and \eqref{eq:Aux2} from the fact that $I(V;Y) = \EE\big[ \log\frac{P_{Y|V}(Y|V)}{P_{Y}(Y)}\big] = \EE\big[ \log\frac{P_{Y|V}(Y|V)}{Q_{Y}(Y)}\big] - \EE\big[ \log\frac{P_{Y}(Y)}{Q_{Y}(Y)}\big]$; the second term here is a KL divergence, and is therefore non-negative.  We obtain \eqref{eq:TwoThetas1} from \eqref{eq:Aux2} by noting that $Q_{Y}$ can be chosen to be any of the $P_{Y}(\cdot\,|\,v')$, and the remaining inequalities \eqref{eq:Aux3} and \eqref{eq:TwoThetas2} are trivial.

\subsection{Proof of Lemma \ref{lem:Covering} (Covering-Based Mutual Information Bound)} \label{sec:pf_covering}

Applying \eqref{eq:Aux3} in Lemma \ref{lem:mi_kl} with the choice $Q_{Y}(y) = \frac{1}{N}\sum_{j=1}^N Q_j(y)$, and letting $\EE_{v}$ denote expectation with respect to $P_{Y|V}(\cdot\,|v)$, we have
\begin{align}
    I(V;Y) 
    &\le \max_{v} D\bigg( P_{Y|V}(\cdot\,|\,v) \,\bigg\|\, \frac{1}{N} \sum_{j=1}^NQ_{j}\bigg) \label{eq:Covering1} \\
    &= \max_{v} \EE_{v}\bigg[ \log\frac{ P_{Y|V}(Y\,|\,v) }{ \frac{1}{N} \sum_{j=1}^NQ_{j}(Y) } \bigg] \label{eq:Covering2} \\
    &\le \max_{v} \EE_{v}\bigg[ \log\frac{ P_{Y|V}(Y\,|\,v) }{ \frac{1}{N} Q_{j^*(v)}(Y) } \bigg] \label{eq:Covering3} \\
    &= \log N + \max_{v} D\big( P_{Y|\fV}(\cdot\,|\,v) \,\big\|\, Q_{j^*(v)}\big)  \label{eq:Covering4} \\
    &\le \log N + \epsilon, \label{eq:Covering5} 
\end{align}
where \eqref{eq:Covering2} applies the definition of KL divergence, \eqref{eq:Covering3} lower bounds the summation by the single term $j^*(v)$ achieving the minimum in \eqref{eq:CoveringCond}, and \eqref{eq:Covering5} applies the upper bound in \eqref{eq:CoveringCond}.

\subsection{Omitted Details in Discrete Examples with Approximate Recovery}

\subsubsection{Group Testing}

Here we characterize the asymptotic behavior of the logarithm in \eqref{eq:gt_partial_init}.  The main step is to upper bound the summation in the denominator, which is given by $\sum_{j=0}^{\lfloor \alpha k \rfloor} {p-L \choose j}{L \choose k -j}$.  By the assumption $L = o(p)$, the value $j = \lfloor \alpha k \rfloor$ must yield the highest value of ${p-L \choose j}{L \choose k -j}$ when $p$ is sufficiently large.  Hence, upper bounding the summation by $\alpha k + 1$ times the maximum yields $\sum_{j=0}^{\lfloor \alpha k \rfloor} {p-L \choose j}{L \choose k -j} \le (\alpha k + 1) {p-L \choose \lfloor \alpha k \rfloor}{L \choose k - \lfloor \alpha k \rfloor}$.  Applying $\log {a \choose b} \le b \log \frac{ae}{b}$, we deduce that
\begin{equation}    
    \log \sum_{j=0}^{\lfloor \alpha k \rfloor} {p-L \choose j}{L \choose k -j} \le \log(\alpha k + 1) + \lfloor \alpha k \rfloor \log \frac{p e}{\lfloor \alpha k \rfloor} + (k - \lfloor \alpha k\rfloor) \log \frac{Le}{ k - \lfloor \alpha k\rfloor}.
\end{equation}
Since $\log\frac{p}{k} \to \infty$ and $\alpha \in (0,1)$ does not depend on $p$, a simple asymptotic analysis yields $\log \sum_{j=0}^{\lfloor \alpha k \rfloor} {p-L \choose j}{L \choose k-j} \le \big( \alpha k \log \frac{p}{k} + (1-\alpha) k \log \frac{L}{k} \big)(1+o(1))$.  The logarithm in \eqref{eq:gt_partial_init} therefore simplifies to $\big(k \log \frac{p}{L}\big)(1+o(1))$, as desired.

% The proof is concluded by substituting into \eqref{eq:gt_partial_init} and applying $\log {p \choose k} \ge k\log\frac{p}{k}$ and $(1-\alpha)k \log \frac{p}{k} - (1-\alpha) k \log \frac{L}{k} = (1-\alpha)k \log \frac{p}{L}$, as well as upper bounding the conditional mutual information using \eqref{eq:gt_exact1_mi2}.

\subsubsection{Graphical Model Selection}

Here we upper bound the quantity $\Nmax(\alpha p)$ in \eqref{eq:GraphNmax_init2}.
For all $j$, the first combinatorial term is upper bounded by $2^p$, the second is maximized by $j = \alpha p$ (for sufficiently large $p$), and further upper bounding ${p \choose 2} - p + 1 \le p^2$ yields $\Nmax(\alpha p) \le (\alpha p + 1) \cdot 2^p \cdot {p^2 \choose \alpha p}$.  Taking the logarithm and applying $\log{a \choose b} \le a\log\frac{ae}{b}$ along with asymptotic simplifications, we find that $\Nmax(\alpha p) \le \big(\alpha p \log p\big) (1+o(1))$, as desired.

\subsection{Proof of Theorem \ref{thm:PackingPartial}  (Reduction to Approximate Recovery)} \label{sec:pf_packing_p}

We adopt the same general approach as Theorem \ref{thm:Packing}, but instead of the error probability $\PP_v[\hat{V} \ne v]$, we consider an approximate recovery version of the form $\PP_v[d(v,\hat{V}) > t]$.  We again start with \eqref{eq:ExcessDist}, which we repeat here:
\begin{align}
    \sup_{\theta \in \Theta} \EE_{\theta}\big[ \ell(\theta,\thetahat) \big] 
    = \Phi(\epsilon_0) \sup_{\theta \in \Theta} \PP_{\theta}[ \rho(\theta,\thetahat) \ge \epsilon_0], \label{eq:ExcessDist_p}
\end{align}
for any $\epsilon_0 > 0$. Consider the following minimum-distance rule for $\hat{V}$:
\begin{equation}
    \hat{V} = \argmin_{v=1,\dotsc,M} \rho(\theta_v,\thetahat), \label{eq:m_hat}
\end{equation}
and suppose that we have $\rho(\theta_v,\thetahat) < \frac{\epsilon}{2}$ for the correct index $v$.  Then for any $v' \in \Vc$ such that $d(v,v') > t$, we have
\begin{align}
    \rho(\theta_{v'},\thetahat) 
    &\ge \rho(\theta_v,\theta_{v'}) - \rho(\theta_v,\thetahat) \label{eq:PartialRhoBound1}\\
    &> \epsilon - \frac{\epsilon}{2} = \frac{\epsilon}{2}, \label{eq:PartialRhoBound2} 
\end{align}
where \eqref{eq:PartialRhoBound1} follows from the triangle inequality, and \eqref{eq:PartialRhoBound2} follows from \eqref{eq:PackingAssumpPartial} and the assumption $\rho(\theta_v,\thetahat) < \frac{\epsilon}{2}$.  As a result, when $\rho(\theta_v,\thetahat) < \frac{\epsilon}{2}$, the minimum-distance rule \eqref{eq:m_hat} must output some $\hat{v}$ satisfying $d(v,\hat{v}) \le t$, yielding
\begin{equation}
\PP_v\bigg[ \rho(\theta_v,\thetahat) \ge \frac{\epsilon}{2} \bigg] \ge \PP_v[d(v,\hat{V}) > t]. \label{eq:ExcessDistProbPartial}
\end{equation}
% The remainder of the proof follows \eqref{eq:MinimaxEnd1}--\eqref{eq:MinimaxEnd4}, with the approximate recovery variant of Fano's inequality (\emph{cf.}, Theorem \ref{thm:Partial}) in place of the exact recovery version.
With the above tools in place, we proceed as follows:
\begin{align}
    \sup_{\theta \in \Theta} \PP_{\theta}\bigg[ \rho(\theta,\thetahat) \ge \frac{\epsilon}{2}\bigg]
    &\ge \max_{v=1,\dotsc,M} \PP_{v}\bigg[ \rho(\theta_v,\thetahat) \ge \frac{\epsilon}{2}\bigg] \label{eq:MinimaxEnd1p} \\
    &\ge \max_{v=1,\dotsc,M} \PP_v[d(v,\hat{V}) > t] \label{eq:MinimaxEnd2p}  \\
    &\ge \frac{1}{M} \sum_{v=1,\dotsc,M} \PP_v[d(v,\hat{V}) > t] \label{eq:MinimaxEnd3p}  \\
    &\ge 1 - \frac{ I(V;\Yv) + \log 2 }{ \log \frac{M}{\Nmax(t)} }, \label{eq:MinimaxEnd4p} 
\end{align}
where \eqref{eq:MinimaxEnd1p} follows by maximizing over a smaller set, \eqref{eq:MinimaxEnd2p} follows from \eqref{eq:ExcessDistProbPartial}, \eqref{eq:MinimaxEnd3p} lower bounds the maximum by the average, and \eqref{eq:MinimaxEnd4p} follows from Fano's inequality for approximate recovery (\emph{cf.}, Theorem \ref{thm:Partial}) and the fact that $I(V;\hat{V}) \le I(V;\Yv)$ by the data processing inequality (\emph{cf.}, Lemma \ref{lem:dpi}).  The proof of \eqref{eq:PackingResultApprox} is concluded by substituting \eqref{eq:MinimaxEnd4p} into \eqref{eq:ExcessDist_p} with $\epsilon_0 = \frac{\epsilon}{2}$, and taking the infimum over all estimators $\thetahat$. 

\subsection{Proof of Theorem \ref{thm:PackingConvex} (Reduction for Noisy Optimization)} \label{sec:pf_packing_opt}

We follow a similar proof to that of Theorem \ref{thm:Packing}, which gave an analogous result for estimation.  First, by Markov's inequality, we have
\begin{equation}
\sup_{f \in \Fc} \EE_f[ \ell_f(\hat{X}) ] \ge \sup_{f \in \Fc} \epsilon \cdot  \PP_f[ \ell_f(\hat{X}) \ge \epsilon ]. \label{eq:ConvExcessDist} 
\end{equation}
We proceed by analyzing the probability on the right-hand side.

Suppose that a random index $V$ is drawn uniformly from $\{1,\dotsc,M\}$, and the triplet $(\Xv,\Yv,\hat{X})$ is generated by running the optimization algorithm on $f_V$.  Moreover, given $\hat{X} = \hat{x}$, let $\hat{V}$ index the function among $\{f_1,\dotsc,f_M\}$ with the lowest corresponding value: $\hat{V} = \argmin_{v=1,\dotsc,M} f_{v}(\hat{x})$.  
% By the assumption that any point is $\epsilon$-optimal for at most one of the functions, we find that if $\ell_{f_v}(\hat{x}) \le \epsilon$ (i.e., $\hat{x}$ is $\epsilon$-optimal for $f_v$), it must hold that $\hat{V} = v$.  Hence, we have
By the assumption that any point $x$ satisfies $\ell_{f_v}(x) \le \epsilon$ at most one of the functions, we find that if $\ell_{f_v}(\hat{x}) \le \epsilon$, then it must hold that $\hat{V} = v$.  Hence, we have
\begin{equation}
\PP_{v}\big[ \ell_{f_v}(\hat{X}) > \epsilon \big] \ge \PP_v[\hat{V} \ne v], \label{eq:ConvExcessDistProb} 
\end{equation}
where $\PP_v$ is a shorthand for $\PP_{f_v}$.

With the above tools in place, we proceed as follows:
\begin{align}
    \sup_{f \in \Fc} \PP_{f}\big[ \ell_f(\hat{X}) \ge \epsilon \big]
    &\ge \max_{v=1,\dotsc,M} \PP_{v}\big[  \ell_{f_v}(\hat{X}) \ge \epsilon \big] \label{eq:ConvexEnd1} \\
    &\ge \max_{v=1,\dotsc,M} \PP_{v}[\hat{V} \ne v] \label{eq:ConvexEnd2}  \\
    &\ge \frac{1}{M} \sum_{v=1,\dotsc,M} \PP_{v}[\hat{V} \ne v] \label{eq:ConvexEnd3}  \\
    &\ge 1 - \frac{ I(V;\Xv,\Yv) + \log 2 }{ \log M }, \label{eq:ConvexEnd4} 
\end{align}
where \eqref{eq:ConvexEnd1} follows by maximizing over a smaller set, \eqref{eq:ConvexEnd2} follows from \eqref{eq:ConvExcessDistProb}, \eqref{eq:ConvexEnd3} lower bounds the maximum by the average, and \eqref{eq:ConvexEnd4} follows from Fano's inequality (\emph{cf.}, \eqref{eq:Fano3} in Theorem \ref{thm:Fano}) and the fact that $I(V;\hat{V}) \le I(V;\Xv,\Yv)$ by the data processing inequality (\emph{cf.}, third part of Lemma \ref{lem:mi_adaptive}).  The proof of \eqref{eq:PackingConvex} is concluded by substituting \eqref{eq:ConvexEnd4} into \eqref{eq:ConvExcessDist} and taking the infimum over all $\hat{X}$.  For $M = 2$, we obtain \eqref{eq:PackingResultM2} in the same way upon replacing \eqref{eq:ConvexEnd4} by the version of Fano's inequality for $M = 2$ given in Remark \ref{rem:FanoWeaken}.

\section*{Acknowledgments}

J.~Scarlett was supported by an NUS startup grant.  V.~Cevher was supported by the European Research Council (ERC) under the European Union's Horizon 2020 research and innovation programme (grant agreement 725594 -- time-data).

\bibliographystyle{IEEEtran}
\bibliography{JS_References}

\end{document}